\documentclass [fleqn,10pt] {article}
\usepackage[dvipdfm]{graphicx}
\usepackage{amsfonts}
\usepackage{amsmath}
\usepackage{amssymb}
\usepackage{bm}
\usepackage[alphabetic]{amsrefs}
\usepackage{fullpage,theorem}
\usepackage{tensor}
\usepackage{graphics}
\usepackage{pdflscape}
\usepackage{tipa}
\usepackage[greek,english]{babel}
\usepackage{multirow}

\newtheorem{thm}{Theorem}[section]
\newtheorem{cor}[thm]{Corollary}
\newtheorem{lem}[thm]{Lemma}
\newtheorem{prop}[thm]{Proposition}

\newtheorem{conjec}[thm]{Conjecture}

\theorembodyfont{\rmfamily}
\newtheorem{defn}[thm]{Definition}
\newtheorem{exa}[thm]{Example}

\newtheorem{rem}[thm]{Remark}

\numberwithin{equation}{section}
\newenvironment{proof}{\noindent \emph{Proof.}}{\hspace{\stretch{1}}$\Box$}

\newcommand{\derv}[2] {\frac{\mathrm{d} #1}{\mathrm{d} #2}}
\newcommand{\parderv}[2] {\frac{\partial#1}{\partial#2}}

\newcommand{\tldo} {{\tilde{1}}}


\newcommand{\yogh} {\text{\textit{\textyogh}}}
\newcommand{\wynn} {\text{\textit{\textwynn}}}
\newcommand{\mae} {\text{\textit{\ae}}}

\newcommand{\mqoppa} {\text{\foreignlanguage{greek}{\qoppa}}}
\newcommand{\mstigma} {\text{\foreignlanguage{greek}{\stigma}}}
\newcommand{\msampi} {\text{\foreignlanguage{greek}{\sampi}}}
\newcommand{\mdigamma} {\text{\foreignlanguage{greek}{\ddigamma}}}




\newcommand{\mcA} {\mathcal{A}}
\newcommand{\mcB} {\mathcal{B}}

\newcommand{\mcK} {\mathcal{K}}
\newcommand{\mcL} {\mathcal{L}}
\newcommand{\mcM} {\mathcal{M}}
\newcommand{\mcN} {\mathcal{N}}


\newcommand{\dd} {\mathrm{d}}

\newcommand{\ii} {\mathrm{i}}

\newcommand{\sspan} {\mathop{\mathrm{span}}}

\newcommand{\ind} {\indices}

\newcommand{\lp} [1] {{\left( #1 \right. }}
\newcommand{\rp} [1] {{\left. #1 \right) }}
\newcommand{\lb} [1] {{\left[ #1 \right. }}
\newcommand{\rb} [1] {{\left. #1 \right] }}

\newcommand{\Rho} {\mathrm{P}}



\newcommand{\U} {\mathrm{U}}

\newcommand{\Tgt} {\mathrm{T}}

\newcommand{\Riem} {\mathrm{R}}



\newcommand{\uu} {\mathfrak{u}}


\newcommand{\R} {\mathbb{R}}
\newcommand{\C} {\mathbb{C}}



\begin{document}
\title{Optical structures, algebraically special spacetimes, and the Goldberg-Sachs theorem in five dimensions}
\author{Arman Taghavi-Chabert\\
{\small Masaryk University, Faculty of Science, Department of Mathematics and Statistics,}\\
 {\small Kotl\'{a}\v{r}sk\'{a} 2, 611 37 Brno, The
Czech Republic } }
\date{}

\maketitle

\begin{abstract}
Optical (or Robinson) structures are one generalisation of four-dimensional shearfree congruences of null geodesics to higher dimensions. They are Lorentzian analogues of complex and CR structures. In this context, we extend the Goldberg-Sachs theorem to five dimensions. To be precise, we find a new algebraic condition on the Weyl tensor, which generalises the Petrov type II condition, in the sense that it ensures the existence of such congruences on a five-dimensional spacetime, vacuum or under weaker assumptions on the Ricci tensor. This results in a significant simplification of the field equations. We discuss possible degenerate cases, including a five-dimensional generalisation of the Petrov type D condition. We also show that the vacuum black ring solution is endowed with optical structures, yet fails to be algebraically special with respect to them. We finally explain the generalisation of these ideas to higher dimensions, which has been checked in six and seven dimensions.
\end{abstract}

\section{Introduction and motivation}
Shearfree congruences of null geodesics or shearfree rays (SFR) have occupied a central position in classical general relativity, featuring in many important solutions to Einstein's equations, such as the Kerr black hole and pp-waves. For vacuum spacetimes, they also have the added property that their generator $k^a$, say, is a repeated principal null direction (PND) of the Weyl tensor, or equivalently, that the Weyl tensor is algebraically special, i.e. of Petrov type II or more degenerate:
\begin{align} \label{eq-PND}
 k^a k^c C_{a b c \lb{d}} k_{\rb{e}} & = 0 \, .
\end{align}
The converse is also true: the Goldberg-Sachs theorem (1962) states \cite{Goldberg2009} that a vacuum spacetime admits a shearfree congruence of null geodesics if and only if its Weyl tensor is algebraically special.

SFR can also be equivalently described from the perspective of complex geometry. Naively, one should think of them as arising from the real intersections of two-dimensional complex leaves of a foliation of `complexified' spacetime, which are totally null or degenerate with respect to the complexified metric. At each point along a SFR, the (two-dimensional) space orthogonal to its generator is then naturally equipped with a complex structure. More generally, if $(\mcM, \bm{g})$ is a $(2m+\epsilon)$-dimensional Lorentzian manifold, where $\epsilon = 0,1$, an \emph{optical (or Robinson) structure $(\mcN,\mcK)$} on $(\mcM, \bm{g})$ is an integrable $m$-dimensional complex distribution $\mcN$ of the complexified tangent bundle, totally null with respect to the complexified metric, and which intersects its complex conjugate in the complexification of a real null line bundle $\C \otimes \mcK := \mcN \cap \overline{\mcN}$. Sections of $\mcK$ are tangent to null geodesics, and the fibers of the screenspace $\mcK^\perp/\mcK$ are naturally equipped with a complex structure when $\epsilon=0$, and with a CR structure when $\epsilon=1$. In four dimensions, these null geodesics are shearfree, but not so in higher dimensions \cite{Trautman2002a}. They nevertheless preserve the complex or CR structure on the screenspace.

This aspect of SFR was instrumental in the discovery of the Kerr black hole \cite{Kerr1963}, thanks to what is now known as the Kerr theorem \cite{Cox1976}. Later, these ideas would play a central r\^{o}le in the genesis and development of Roger Penrose's twistor theory \cites{Penrose1967,Penrose1986}. In general, the philosophy is to think of an optical structure as a \emph{real} manifestation of a null foliation of complexified spacetime. For different choices of metric signature, the same null structure gives rise to different real geometries. For instance, a metric-compatible complex structure on a Riemannian manifold is determined by a null distribution. Nurowski and Trautman's description \cites{Nurowski2002,Trautman2002} of an optical structure as a `Lorentzian analog of a Hermite structure' is particularly fitting in that sense.
Further, the fact that the Goldberg-Sachs theorem was later generalised \cites{Kundt1962,Robinson1963,Pleba'nski1975,Przanowski1983,Penrose1986,Apostolov1997,Apostolov1998,Gover2010} to any real or complex (pseudo-) Riemannian manifolds -- and under assumptions weaker than Ricci-flatness -- testifies of the centrality of complex methods in the study of SFR in four dimensions.

In recent decades, new theories of physics have made the study of solutions to Einstein's field equations in higher dimensions particularly relevant. The mathematics and physics communities have been engaged not only in generalising known four-dimensional  solutions such as pp-waves \cite{Coley2003}, the Kerr black hole \cites{Gibbons2005,Chen2006} or the Kerr-Schild ansatz \cites{Ortaggio2009,Ortaggio2009a} to higher dimensions, but also in investigating the broader geometrical properties of higher-dimensional spacetimes \cites{Coley2004a,Coley2004}. To this effect, Coley, Milson, Pravd\'{a}, and Pravdov\'{a} have proposed a spacetime classification purporting to generalise the Petrov classification \cites{Coley2004,Milson2005,Coley2006,Coley2008}. In this setting, various attempts have been made to generalise classical results of general relativity to higher dimensions, with some measure of success \cites{Pravda2004,Pravda2005,Pravda2006,Pravda2007,Ortaggio2007,Pravdova2008}. While the geodesy property of null congruences was successfully \cite{Durkee2009} related to the Coley-Milson-Pravd\'{a}-Pravdov\'{a} (CMPP) classification of the Weyl tensor in higher dimensions, the shearfree condition does not appear to hold any particularly privileged place in more than four dimensions. In fact, it is established in \cite{Frolov2003} that the five-dimensional Myers-Perry black hole admits a pair of PNDs in the sense of equation \eqref{eq-PND}, but these are not shearfree, unlike its four-dimensional counterpart. In particular, a higher-dimensional Goldberg-Sachs theorem cannot be formulated along these lines.

In contrast, there is strong evidence that optical structures still retain a central position in higher dimensions. An early step towards this is provided by the higher-dimenensional generalisation of the Kerr theorem by Hughston and Mason in \cite{Hughston1988}. More recently, Mason and the present author \cite{Mason2010} generalised a result due to Walker and Penrose \cite{Walker1970} and proposed a higher-dimensional version of the Petrov type D condition distinct from that of the CMPP classification. To be precise,
a conformal Killing-Yano (CKY) $2$-form on a $(2m+\epsilon)$-dimensional (pseudo-)Riemannian manifold  $(\mcM, \bm{g})$, in normal form, with generically distinct eigenvalues, and whose exterior derivative satisfies a certain condition, gives rises to $2^m$ integrable canonical null distributions. Further, the integrability condition of the CKY $2$-form implies that the Weyl tensor must be degenerate on each of these null distributions and their orthogonal complements. The prime example of a solution characterised by such a CKY $2$-form is the Kerr-NUT-AdS metric \cite{Chen2006}. All these results seem to point towards a higher-dimensional generalisation of the Goldberg-Sachs theorem in the context of null distributions and optical structures.

This motivates the following two questions, which this paper aims to answer:
\begin{itemize}
 \item Given a five-dimensional spacetime $(\mcM, \bm{g})$, does there exist an algebraic condition on the Weyl and Ricci tensor which guarantees the integrability of an almost optical structure $(\mcN, \mcK)$?
 \item Conversely, given a five-dimensional spacetime, vacuum or weaker, does the existence of an integrable almost optical structure imply the same algebraic condition on the Weyl tensor?
\end{itemize}
We shall prove the following result. Suppose that the Weyl tensor and the Cotton-York tensor as defined in Appendix \ref{App-Notation} satisfy
\begin{align*}
 \bm{C} ( \bm{X}, \bm{Y}, \bm{Z}, \cdot ) & = 0 \, , & \bm{A} ( \bm{Z}, \bm{X}, \bm{Y}) & = 0 \, ,
\end{align*}
respectively, for all $\bm{X}, \bm{Y} \in \Gamma ( \mcN^\perp ) $, $\bm{Z} \in \Gamma (\mcN )$. Then, assuming the Weyl tensor does not degenerate any further, the almost optical structure $(\mcN ,\mcK)$ is integrable.

On the other hand, we shall show that the five-dimensional vacuum black ring \cites{Emparan2002,Emparan2006,Emparan2007} provides a counterexample to the converse, i.e. it admits an (integrable) optical structure but its Weyl tensor fails to be algebraically special relative to it.

The structure of the paper is as follows. In the first section, we recall the basic definitions and properties of null and optical geometries in arbitrary spacetimes. While some of this material can already be found in \cite{Nurowski2002} as far as the even-dimensional case goes, we have extended their approach to odd-dimensional spacetime. The integrability condition for an almost optical structure is then stated, and we use the Sachs equations in five dimensions to illustrate how the curvature obstructs the propagation of an optical structure along a null geodesic.

In the second section, we give a definition of algebraic special spacetimes relative to an almost optical structure $(\mcN, \mcK)$, as a generalisation of the Petrov type II condition. In five dimensions, this is shown to imply that the eigenvalues of the Weyl curvature operator take on very simple expressions. Comparisons with the CMPP classification of the Weyl tensor and the De Smet classification of the conformal Weyl spinor are drawn. We then justify our choice of definition, by proving that provided the Cotton-York tensor is degenerate with respect to the optical structure, and the Weyl tensor satisfies some genericity assumption, then the algebraic speciality implies the integrability of $(\mcN,\mcK)$. We then show how the Bianchi identity greatly simplifies for such spacetimes. This is followed by a discussion of more degenerate algebraic conditions on the Weyl tensor, including the Petrov type D condition.

We end the section by a study of the five-dimensional vacuum black ring, which, we show, is endowed with optical structures, but fails to be algebraically special relative to them. In some regions, the null structures do not intersect in a real line bundle, but define a pair of CR structures. In the limit where the black ring becomes the Myers-Perry black hole with one rotation coefficient, these CR structures persist, and in fact, co-exist with a pair of optical structures.

Finally, we conjecture that most of the results of the present paper can be generalised to higher dimensions, to arbitrary signatures, and to complex manifolds. The conjecture has already been verified in six and seven dimensions using \emph{Mathematica}, and the results will be presented in a future publication \cite{Taghavi-Chaberta}.

We have relegated the basic notational setup, including an analysis of the Weyl curvature operator for algebraically special spacetimes, together with the component forms of the Ricci and Bianchi identities to three appendices. Some of the equations therein were simplified by means of the symbolic computer algebra system \emph{Cadabra} \cites{Peeters2007,Peeters2007a}.

\paragraph{Acknowledgments} I would like to thank Lionel Mason for stimulating my interest in the subject, for the many fruitful discussions I have had the pleasure to have with him, and for his comments on a draft version of this paper. Thanks also to Alan Coley and Sigbj\o{}rn Hervik for their feedback on the first version of this preprint. This work was carried out at the Mathematical Institute, Oxford, UK, and revised in Brno, the Czech Republic, where the author is supported by an Eduard \v{C}ech Center postdoctoral fellowship.

\section{Preliminaries}
\subsection{Optical structures}
Throughout, $(\mcM, \bm{g})$ will denote a $(2m+\epsilon)$-dimensional (pseudo-)Riemannian manifold, where $\epsilon \in \{ 0 , 1 \}$. Given a real (complex) subbundle $E$ of the (complexified) tangent bundle $\Tgt \mcM$, the orthogonal complement of $E$  with respect to the (complexified) metric will be denoted by $E^\perp$, and the space of sections of $E$ by $\Gamma (E)$. Denote by $\nabla$ the Levi-Civita connection.

The definition of an optical structure in even dimensions is given in \cite{Nurowski2002}. Here, we introduce the notion for odd-dimensional spacetimes, which may be described as a Lorentzian analogue of a CR structure.
\begin{defn}\label{defn-NStr}
 An \emph{almost null structure} $\mcN$ on $\mcM$ is a complex subbundle of the complexified tangent bundle of $\mcM$, which is totally null with respect to the complexified metric, and has maximal rank, i.e. $\mcN \subset \mcN^\perp$ and $\mcN$ has rank $m$. We say that $\mcN$ is a \emph{null structure}, when $\mcN$ is integrable in the sense of Frobenius, i.e. $[ \Gamma (\mcN) , \Gamma (\mcN) ] \subset \Gamma (\mcN) $.
\end{defn}
The complexified tangent bundle is equipped with a natural reality structure, induced from an involutory complex-conjugation operation, denoted $\bar{}$, which preserves the underlying real metric. The signature of the real metric imposes restrictions on the possible reality conditions that an almost null structure can satisfy. This motivates the following definition and lemma.
\begin{defn}\label{defn-realind}
The \emph{real index} of an almost null structure $\mcN$ is the dimension of the intersection $\mcN \cap \overline{\mcN}$.
\end{defn}
\begin{lem}[\cite{Kopczy'nski1992}]\label{lem-realinddim}
 Suppose that $(\mcM, \bm{g})$ is Lorentzian, and let $\mcN$ be an almost null structure. Then, the only possible values of the real index $r$ of $\mcN$ are
\begin{itemize}
 \item $r=1$ when $\epsilon=0$, and
 \item $r=0,1$ when $\epsilon=1$.
\end{itemize}
\end{lem}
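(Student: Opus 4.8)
The plan is to reduce the statement to linear algebra in a single tangent space. Fix $p \in \mcM$, write $V := \Tgt_p \mcM$ for the real tangent space with the Lorentzian inner product $\bm{g}$ of signature $(1, 2m+\epsilon-1)$, let $V^{\C} := V \otimes \C$ carry the complex-bilinear extension of $\bm{g}$ together with the conjugation fixing $V$, and let $\mcN \subset V^{\C}$ denote the fibre of the almost null structure at $p$, so that $\mcN$ is totally null of complex dimension $m$. Put $W := \mcN \cap \overline{\mcN}$, a complex subspace of $V^{\C}$.

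First I would note that $W$ is stable under conjugation: if $v \in \mcN \cap \overline{\mcN}$ then $\overline{v} \in \overline{\mcN} \cap \mcN = W$. A conjugation-stable complex subspace of $V^{\C}$ is the complexification of a real subspace, namely $W = K \otimes \C$ with $K := W \cap V$ and $\dim_{\R} K = \dim_{\C} W = r$. Since $K \subset \mcN$ and $\mcN$ is totally null, $K$ is a totally isotropic subspace of the real Lorentzian space $(V, \bm{g})$; but a Lorentzian inner product space contains no totally isotropic subspace of dimension $\geq 2$. Hence $r \leq 1$, and as $r \geq 0$ trivially, $r \in \{0, 1\}$ in all cases. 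This already settles the claim when $\epsilon = 1$.

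It remains to exclude $r = 0$ when $\epsilon = 0$, i.e. $\dim V = 2m$. Suppose $W = 0$. Then $\mcN$ and $\overline{\mcN}$ are transverse, and since each has complex dimension $m$ (using that $\mcN$ has maximal rank) we get $V^{\C} = \mcN \oplus \overline{\mcN}$ by counting dimensions. Define $J \in \End(V^{\C})$ to act as multiplication by $+\ii$ on $\mcN$ and by $-\ii$ on $\overline{\mcN}$. Then $J$ commutes with conjugation, hence restricts to a real endomorphism of $V$; it satisfies $J^2 = -\Id$; and, using that $\mcN$ and $\overline{\mcN}$ are each totally null so that $\bm{g}$ pairs only one with the other, one checks $\bm{g}(J\,\cdot\,, J\,\cdot\,) = \bm{g}(\,\cdot\,, \,\cdot\,)$ on each of the three types of pairs of vectors. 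Thus $(V, \bm{g}, J)$ is a real vector space carrying a $\bm{g}$-compatible complex structure. For such a space $\bm{h}(\,\cdot\,,\,\cdot\,) := \bm{g}(\,\cdot\,,\,\cdot\,) - \ii\,\bm{g}(J\,\cdot\,,\,\cdot\,)$ is a nondegenerate Hermitian form on the complex vector space $(V, J)$ of complex dimension $m$; if it has signature $(a, b)$ with $a + b = m$, then $\bm{g} = \mathrm{Re}\,\bm{h}$ has signature $(2a, 2b)$. In particular both indices of $\bm{g}$ are even, contradicting the Lorentzian signature $(1, 2m-1)$. Hence $r \geq 1$, so $r = 1$.

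The argument is essentially formal. The one place deserving care is the reality bookkeeping in the last step — verifying that the $\C$-linear map $J$ genuinely descends to a structure on the real space $V$, and that $\bm{g}$-compatibility forces an even–even signature. This is precisely where the parity obstruction to $r = 0$ in even dimensions resides, and it explains why no such obstruction exists, and hence both values $r = 0$ and $r = 1$ remain admissible, when $\epsilon = 1$.
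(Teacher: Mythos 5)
Your argument is correct, but note that the paper itself offers no proof of this lemma: it is quoted from the cited reference (Kopczy\'nski--Trautman), where the possible real indices of maximal totally null subspaces are determined for general signature $(p,q)$ by an analysis phrased in terms of pure spinors and Witt decompositions. Your proof is a self-contained, elementary linear-algebra substitute specialised to the Lorentzian case, and both halves check out. The bound $r\le 1$ follows exactly as you say, since $\mcN\cap\overline{\mcN}$ is conjugation-stable, hence the complexification of a real totally isotropic subspace, and the Witt index of a Lorentzian form is $1$. The exclusion of $r=0$ for $\epsilon=0$ via the induced complex structure $J$ is also sound: $J$ commutes with conjugation because $\overline{\mcN}$ is the $(-\ii)$-eigenspace, the compatibility $\bm{g}(J\cdot,J\cdot)=\bm{g}(\cdot,\cdot)$ follows from total nullity of $\mcN$ and $\overline{\mcN}$, and the resulting even--even signature $(2a,2b)$ (obtained by splitting off $J$-invariant nondegenerate planes $\mathrm{span}\{X,JX\}$) contradicts $(1,2m-1)$. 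This parity obstruction is precisely the standard reason a pseudo-Riemannian manifold of signature with an odd index admits no orthogonal complex structure, and it is a cleaner route than invoking the general Kopczy\'nski--Trautman index formula. One small remark: as stated, the lemma only constrains the possible values of $r$, so you are not obliged to exhibit $r=0$ in odd dimensions; the paper does this anyway in its analysis of the black ring (region $\mcA$), where null structures of real index $0$ occur.
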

This allows us to make the following definitions.
\begin{defn}\label{defn-OptStr}
 An \emph{almost optical (or Robinson) structure} $(\mcN , \mcK)$ on $\mcM$ is an almost null structure on $\mcM$ of real index $1$, and $\mcK \rightarrow \mcM$ is the real null line bundle defined by $\mcN \cap \overline{\mcN} =: \C \otimes \mcK$. We say that $(\mcN , \mcK)$ is an \emph{optical (or Robinson) structure} on $\mcM$, when both $\mcN$ and its orthogonal complement $\mcN^\perp$ are integrable, in the sense of Frobenius. The \emph{screen space} of $(\mcN, \mcK)$ is the rank-$(2m-2 + \epsilon)$ bundle $\mcK^\perp / \mcK \rightarrow \mcM$.
\end{defn}

\begin{rem}
 In even dimensions, there is some redundancy in the formulation of Definition \ref{defn-OptStr}, since in this case $\mcN = \mcN^\perp$. This remark applies to the subsequent definitions and results which also hold in even dimensions.
\end{rem}

The next proposition gives some properties of optical structures. We omit the proof, which can already be found in \cite{Nurowski2002} in the even-dimensional case, while the odd-dimensional case is a straightforward extension.
\begin{prop}
 Let $(\mcN , \mcK)$ be an optical structure on $\mcM$. Then,
\begin{enumerate}
 \item the integrable curves of the generators of $\mcK$ are null geodesics;
 \item the flow of $\mcK$ preserves the null structure $\mcN$ and its orthogonal complement $\mcN^\perp$;
 \item the screen space $\mcK^\perp / \mcK$ is naturally equipped with a complex structure when $\epsilon = 0$, and with a CR structure when $\epsilon = 1$.
\end{enumerate}
\end{prop}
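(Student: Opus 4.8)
\emph{Proof strategy.}\quad All three assertions are pointwise in nature, so I would fix a local real section $\bm{k}$ of $\mcK$ (which exists since $\mcK$ is a real line bundle), noting $\bm{g}(\bm{k},\bm{k})=0$ by construction. For (1) it then remains to check $\nabla_{\bm{k}}\bm{k}\parallel\bm{k}$, and the plan is to squeeze $\nabla_{\bm{k}}\bm{k}$ into successively smaller subbundles. First I would establish the auxiliary fact that $\nabla_{\bm{k}}\bm{Z}\in\Gamma(\mcN^\perp)$ for every $\bm{Z}\in\Gamma(\mcN^\perp)$: expanding $\bm{g}(\nabla_{\bm{k}}\bm{Z},\bm{W})$ for $\bm{W}\in\Gamma(\mcN)$ via the Koszul formula, the three metric terms vanish because $\mcN\subset\mcN^\perp$ is totally null and $\bm{Z}\perp\mcN$, while the three bracket terms vanish because $\mcN$ and $\mcN^\perp$ are both integrable and $\bm{k}\in\Gamma(\mcN)$ (so $[\bm{k},\bm{Z}],[\bm{Z},\bm{W}]\in\Gamma(\mcN^\perp)$ and $[\bm{W},\bm{k}]\in\Gamma(\mcN)$). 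Differentiating $\bm{g}(\bm{k},\bm{Z})=0$ along $\bm{k}$ then gives $\bm{g}(\nabla_{\bm{k}}\bm{k},\bm{Z})=-\bm{g}(\bm{k},\nabla_{\bm{k}}\bm{Z})=0$, since $\nabla_{\bm{k}}\bm{Z}\in\Gamma(\mcN^\perp)$ and $\bm{k}\in\Gamma((\mcN^\perp)^\perp)=\Gamma(\mcN)$. Ranging over all $\bm{Z}\in\Gamma(\mcN^\perp)$ yields $\nabla_{\bm{k}}\bm{k}\in\Gamma(\mcN)$; as $\nabla_{\bm{k}}\bm{k}$ is also real it lies in $\Gamma(\overline{\mcN})$, hence in $\Gamma(\mcN\cap\overline{\mcN})=\Gamma(\C\otimes\mcK)$, and being real it is a section of $\mcK$. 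So the integral curves of $\bm{k}$ are unparametrised null geodesics.

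For (2) I would simply note that $\mcK\subset\mcN\subset\mcN^\perp$ and that both $\mcN$ and $\mcN^\perp$ are Frobenius-integrable, whence $[\Gamma(\mcK),\Gamma(\mcN)]\subset[\Gamma(\mcN),\Gamma(\mcN)]\subset\Gamma(\mcN)$ and likewise $[\Gamma(\mcK),\Gamma(\mcN^\perp)]\subset\Gamma(\mcN^\perp)$; by the Frobenius theorem this is exactly the statement that the local flow of $\mcK$ carries $\mcN$ and $\mcN^\perp$ to themselves. (Rerunning the Koszul bookkeeping of the previous paragraph with $\bm{Z}\in\Gamma(\mcN)$ in place of $\Gamma(\mcN^\perp)$ likewise gives $\nabla_{\bm{k}}\Gamma(\mcN)\subset\Gamma(\mcN)$, i.e.\ parallel transport along the geodesics of $\mcK$ preserves the flag $\mcN\subset\mcN^\perp$.)

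For (3) the plan is to realise the advertised structure as the image of $\mcN$ in the screen bundle. Since $\mcK\subset\mcN$ forces $\mcN\subset\mcN^\perp\subset\mcK^\perp$ and likewise $\overline{\mcN}\subset\mcK^\perp$, and since $(\mcK^\perp)^\perp=\mcK$, the metric $\bm{g}$ induces a non-degenerate (positive-definite) fibre metric on the rank-$(2m-2+\epsilon)$ bundle $\mcK^\perp/\mcK$. I would then put $\tilde{\mcN}:=\mcN/(\C\otimes\mcK)\subset\C\otimes(\mcK^\perp/\mcK)$, a rank-$(m-1)$ complex subbundle which is totally null for the quotient metric and satisfies $\tilde{\mcN}\cap\overline{\tilde{\mcN}}=(\mcN\cap\overline{\mcN})/(\C\otimes\mcK)=0$. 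When $\epsilon=0$ a rank count forces $\C\otimes(\mcK^\perp/\mcK)=\tilde{\mcN}\oplus\overline{\tilde{\mcN}}$, so declaring $\tilde{\mcN}$ the $\ii$-eigenbundle produces a real endomorphism $J$ with $J^2=-\Id$, compatible with the fibre metric precisely because both eigenbundles are totally null: the required complex structure. When $\epsilon=1$, the datum $\tilde{\mcN}$ with $\tilde{\mcN}\cap\overline{\tilde{\mcN}}=0$ and $\dim_{\C}\tilde{\mcN}=\tfrac12(\dim_{\R}(\mcK^\perp/\mcK)-1)$ is, by definition, an almost CR structure of hypersurface type on the fibres. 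In both cases the construction is canonical in $\mcN$ and $\mcK$, which is the content of ``naturally''; and the integrability of $\mcN$, $\mcN^\perp$ together with (2) is what propagates this fibrewise structure coherently along the leaves of $\mcN^\perp$ and along $\mcK$.

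The one step I expect to require genuine care — by contrast with the even-dimensional treatment of \cite{Nurowski2002}, where $\mcN=\mcN^\perp$ collapses most of the brackets at once — is the geodesy argument when $\epsilon=1$: there one really must invoke the integrability of $\mcN^\perp$ (not merely of $\mcN$) to control $\nabla_{\bm{k}}\Gamma(\mcN^\perp)$, and one has to keep the rank and reality bookkeeping straight, leaning on Lemma~\ref{lem-realinddim}, to be sure that $\mcN\cap\overline{\mcN}$ is a line and that the screen metric is non-degenerate. Everything else is routine manipulation of the Koszul formula and Frobenius integrability.
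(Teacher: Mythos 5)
Your proof is correct, and all three parts go through as written: the Koszul-formula computation for part (1) is valid (all six terms do vanish under the stated memberships, and the final descent $\nabla_{\bm{k}}\bm{k}\in\Gamma(\mcN)\cap\Gamma(\overline{\mcN})=\Gamma(\C\otimes\mcK)$ plus reality is exactly what is needed), part (2) is immediate from Frobenius integrability, and the quotient construction in part (3) correctly produces the almost complex (resp.\ almost CR) structure on the screen bundle. Be aware, though, that the paper itself supplies no proof to compare against: it explicitly omits the argument, citing \cite{Nurowski2002} for the even-dimensional case and calling the odd-dimensional case a straightforward extension. The nearest thing in the text is the five-dimensional spin-coefficient lemma that follows the proposition, where integrability of $\mcN$ and $\mcN^\perp$ is shown via the commutation relations to be equivalent to \eqref{eq-folcond_5_L}, i.e.\ $\kappa=\sigma=\mdigamma=\phi=\chi=\psi=\eta=0$, together with the remark that $\kappa=\mdigamma=0$ is precisely the geodesy of $k^a$ --- a component-wise verification of your part (1) in a fixed null frame. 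Your invariant route buys dimension-independence and makes explicit the point the paper only flags in passing: when $\epsilon=1$ the integrability of $\mcN$ alone does not suffice, and it is the brackets $[\bm{k},\bm{Z}]$ and $[\bm{Z},\bm{W}]$ with $\bm{Z}\in\Gamma(\mcN^\perp)$ that force you to invoke the integrability of $\mcN^\perp$. The only stylistic caveat is in part (3): since $\mcK^\perp/\mcK$ is a quotient bundle over $\mcM$ rather than the tangent bundle of a manifold, the ``complex/CR structure'' asserted is the fibrewise datum $J$ (resp.\ $\tilde{\mcN}$) together with its invariance under the flow of $\mcK$, which is exactly what your parts (2) and (3) jointly establish; your closing sentence about propagation along the leaves is the right gloss but could be stated as a consequence of $[\Gamma(\mcK),\Gamma(\mcN)]\subset\Gamma(\mcN)$ and $[\Gamma(\mcK),\Gamma(\mcN^\perp)]\subset\Gamma(\mcN^\perp)$ rather than left implicit.
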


From now on, we shall assume that $(\mcM, \bm{g})$ is a five-dimensional spacetime unless otherwise stated. We shall make use of the spin coefficient notation given in Appendix \ref{App-Notation}, which is a modified version of that of \cite{Garc'ia-ParradoG'omez-Lobo2009}. We can always assume that the metric takes the form
\begin{align*}
 g_{a b} = 2 k_{\lp{a}} \ell_{\rp{b}} + 2 m_{\lp{a}} \bar{m}_{\rp{b}} + u_a u_b \, ,
\end{align*}
where $k_a$, $\ell_a$ and $u_a$ are real, and $m_a$ complex basis $1$-forms. Indices are lowered and raised via the metric. This gives us two canonical almost optical structures $(\mcN_0 , \mcL)$ and $(\mcN_1, \mcK)$, where
\begin{align*}
 \mcN_0 & = \sspan\{ \ell^a, \bar{m}^a \} \, , & \mcL & = \sspan\{ \ell^a \} \, , \\
 \mcN_1 & = \sspan\{ k^a, \bar{m}^a \} \, ,  & \mcK & = \sspan\{ k^a \} \, .
\end{align*}
When no ambiguity can arise, we shall drop the subscript on $\mcN_{0}$ and $\mcN_{1}$. The following lemma is an easy application of the commutation relations \eqref{eq-Commut-Rel-2}, \eqref{eq-Commut-Rel-3} and \eqref{eq-Commut-Rel-7}.
\begin{lem}
Let $(\mcN , \mcK)$ be the almost optical structure defined above. Then
\begin{itemize}
 \item $\mcN$ is integrable if and only if
\begin{align} \label{eq-quasifolcond_5_L}
 \kappa = \sigma & = 0 \, , & \chi & = \psi \, ,
\end{align}
 \item \emph{both} $\mcN$ and $\mcN^\perp$ are integrable if and only if
\begin{align} \label{eq-folcond_5_L}
 \kappa = \sigma = \mdigamma = \phi = \chi = \psi = \eta = 0 \, .
\end{align} 
\end{itemize}
\end{lem}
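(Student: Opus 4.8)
The plan is to use the elementary criterion that a distribution spanned by a local frame is integrable if and only if the pairwise Lie brackets of the frame fields are again sections of that distribution, and to read those brackets off from the structure equations \eqref{eq-Commut-Rel-2}, \eqref{eq-Commut-Rel-3}, \eqref{eq-Commut-Rel-7}. For the first assertion, $\mcN = \sspan\{k^a, \bar{m}^a\}$ has a two-element frame, so by antisymmetry and $[k,k] = [\bar{m},\bar{m}] = 0$ the only bracket to examine is $[k,\bar{m}]$, and $\mcN$ is integrable precisely when $[k,\bar{m}]$ has vanishing components along $\ell^a$, $m^a$ and $u^a$. I would extract these three components by contracting $[k,\bar{m}]$ against $\bm{g}$ with $k$, $\bar{m}$ and $u$ (the $\bm{g}$-duals of $\ell^a$, $m^a$, $u^a$); the Leibniz rule and metric compatibility turn each such contraction into contractions of covariant derivatives of frame fields, i.e. into spin coefficients. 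Comparing with the definitions of Appendix~\ref{App-Notation}, these come out as $\kappa$ (the rotation of $k^a$ into $\bar{m}^a$), $\sigma$ (the shear $\bm{g}(k,\nabla_{\bar{m}}\bar{m})$), and the difference $\chi - \psi$ of the $u^a$-parts of $\nabla_k\bar{m}$ and $\nabla_{\bar{m}}k$; their vanishing is exactly \eqref{eq-quasifolcond_5_L}, and the converse is obtained by reading the same expansion of $[k,\bar{m}]$ backwards. Only the commutation relation for $[k,\bar{m}]$ is needed here.

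For the second assertion, $\mcN^\perp = \sspan\{k^a, \bar{m}^a, u^a\}$ has a three-element frame, so on top of $[k,\bar{m}]$ — which, once $\mcN$ is integrable, automatically lies in $\mcN \subset \mcN^\perp$ — I must also control $[k,u]$ and $[\bar{m},u]$, supplied by \eqref{eq-Commut-Rel-3} and \eqref{eq-Commut-Rel-7}. Since $\mcN^\perp$ has corank two, its integrability amounts to the $\ell^a$- and $m^a$-components of $[k,u]$ and $[\bar{m},u]$ vanishing; rewriting these four scalars in spin coefficients as above, they bring in the new coefficients $\phi$, $\mdigamma$, $\eta$ and reintroduce $\chi$ and $\psi$ in a combination which, together with the relation $\chi = \psi$ already forced by integrability of $\mcN$, sharpens to $\chi = \psi = \mdigamma = \phi = \eta = 0$. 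Collecting everything recovers exactly \eqref{eq-folcond_5_L}, and the converse again follows by substituting the vanishing spin coefficients into the three bracket expansions and checking closure on $\mcN$ and $\mcN^\perp$.

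I expect the only real difficulty to be bookkeeping: matching, in the conventions of Appendix~\ref{App-Notation}, each transverse slot of each bracket with the correct spin coefficient, in particular disentangling the genuinely odd-dimensional contributions along $u^a$, which have no counterpart in the even-dimensional treatment of \cite{Nurowski2002} and which are the source of the extra conditions $\chi = \psi$ (resp. $\chi = \psi = \mdigamma = \phi = \eta = 0$). A minor point to get right is that geodesy of the null congruence generated by $\mcK$ is captured by $\kappa = 0$ alone, the remaining part of $\nabla_k k$ being absorbed into the frame normalisation.
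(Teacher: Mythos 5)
Your proposal is correct and is exactly the argument the paper intends: the lemma is stated there as "an easy application of the commutation relations \eqref{eq-Commut-Rel-2}, \eqref{eq-Commut-Rel-3} and \eqref{eq-Commut-Rel-7}", i.e.\ reading off the $\tilde{D}$-, $\bar{\delta}$- and (for $\mcN$) $\mathring{\delta}$-components of $[D,\delta]$, $[D,\mathring{\delta}]$ and $[\delta,\mathring{\delta}]$ as the Frobenius obstructions. Your bookkeeping also comes out right: $\chi=\psi$ from $\mcN$, together with $\chi+\eta=0$ and $\psi=\eta$ from the two extra brackets, forces $\chi=\psi=\eta=0$, and the remaining coefficients $\kappa,\sigma,\mdigamma,\phi$ appear exactly as the $\tilde{D}$- and $\bar{\delta}$-coefficients.
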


\begin{rem}
 The conditions $\kappa = \mdigamma = 0$ is equivalent to $k^a$ being geodetic. Hence, the lemma tells us that the integrability of $\mcN$ only is not sufficient to guarantee the geodesy of $k^a$, while the integrability of both $\mcN$ and $\mcN^\perp$ is.
\end{rem}

\begin{rem}\label{rem-spinor}
 Optical structures, or more generally null structures, can be conveniently expressed in terms of \emph{pure spinor fields} \cite{Cartan1981}. Roughly, a spinor field $\bm{\xi}$ on a (pseudo-) Riemannian manifold $( \mcM, \bm{g} )$ defines a totally null distribution $\mcN_{\bm{\xi}}$ on the complexified tangent bundle $\Tgt \mcM$, given by
\begin{align*}
  \mcN_{\bm{\xi}} & := \left\{ \bm{X} \in \Gamma (\C \otimes \Tgt \mcM ) | \bm{X} \cdot \bm{\xi} = 0 \right\} \, ,
\end{align*}
where the $\cdot$ denotes the Clifford action. When $\mcN_{\bm{\xi}}$ is maximal totally null, then $\bm{\xi}$ is said to be \emph{pure}. One can then talk of the real index of a pure spinor field in the obvious way. An optical structure $(\mcN,\mcK)$ can then be defined by a pure spinor field (up to scale) of real index $1$, which satisfies the integrability conditions
\begin{align*}
 \bm{Y} \cdot \left( \nabla_{\bm{X}} \bm{\xi} \right) & = 0 \, , & & \mbox{and} & \bm{X} \cdot \bm{Y} \cdot \left( \nabla_{\bm{Z}} \bm{\xi} \right) & = 0 \, , 
\end{align*}
for all vector fields $\bm{X}, \bm{Y} \in \Gamma(\mcN_{\bm{\xi}})$, $\bm{Z} \in \Gamma(\mcN_{\bm{\xi}}^\perp)$, and where $\nabla$ denotes the spin connection induced from the Levi-Civita connection. Sections of $\mcK$ can be expressed as the tensor product of $\bm{\xi}$ and its conjugate. We shall avoid the use of spinors in this paper, and the reader is referred to the literature for further details \cites{Cartan1981,Penrose1984,Penrose1986,Hughston1988,Budinich1987,Budinich1988,Budinich1989,Kopczy'nski1992,Kopczy'nski1997,Mason2010}.
\end{rem}

\subsection{Integrability conditions on the Weyl conformal tensor}
The existence of an optical structure on a Lorentzian manifold is subject to integrability conditions on the Weyl tensor as given by the following proposition.
\begin{prop}
 Let $(\mcM , \bm{g} )$ be a Lorentzian manifold endowed with an almost optical structure $(\mcN , \mcK)$. If $\mcN$ and $\mcN^\perp$ are integrable, then the Weyl tensor satisfies
\begin{align} \label{eq-int-cond-Weyl}
 \bm{C} ( \bm{X}, \bm{Y}, \bm{Z}, \bm{W} ) & = 0 \, ,
\end{align}
for all $\bm{X}, \bm{Y} \in \Gamma( \mcN^\perp )$ and $\bm{Z}, \bm{W} \in \Gamma (\mcN )$.
\end{prop}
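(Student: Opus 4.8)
The plan is to reduce the statement to the same assertion for the Riemann tensor, and then to extract the conclusion from the integrability of $\mcN$ and $\mcN^\perp$ through the Levi-Civita connection $\nabla$.

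First I would note that on the index configuration at hand the Weyl tensor agrees with the Riemann tensor. Writing $\bm{C}$ as the Riemann tensor minus the usual trace-adjustment terms built from the Schouten tensor and the scalar curvature, every correction term contains a metric pairing of one of $\bm{X},\bm{Y}\in\Gamma(\mcN^\perp)$ with one of $\bm{Z},\bm{W}\in\Gamma(\mcN)$, or the pairing $\bm{g}(\bm{Z},\bm{W})$. Pairings of the first kind vanish because $\mcN^\perp$ is orthogonal to $\mcN$, and $\bm{g}(\bm{Z},\bm{W})=0$ because $\mcN$ is totally null. Hence it suffices to show
\[
\bm{g}\bigl(\nabla_{\bm{X}}\nabla_{\bm{Y}}\bm{Z}-\nabla_{\bm{Y}}\nabla_{\bm{X}}\bm{Z}-\nabla_{[\bm{X},\bm{Y}]}\bm{Z},\,\bm{W}\bigr)=0
\]
for all $\bm{X},\bm{Y}\in\Gamma(\mcN^\perp)$ and $\bm{Z},\bm{W}\in\Gamma(\mcN)$.

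Next I would establish two ``adaptedness'' properties of $\nabla$. (a) For $\bm{Z},\bm{W}\in\Gamma(\mcN)$ one has $\nabla_{\bm{Z}}\bm{W}\in\Gamma(\mcN^\perp)$: this is the standard consequence of the integrability of $\mcN$, since the $(0,3)$-tensor $(\bm{Z},\bm{W},\bm{V})\mapsto\bm{g}(\nabla_{\bm{Z}}\bm{W},\bm{V})$ on $\mcN$ is symmetric in its first two arguments (torsion-freeness together with $[\Gamma(\mcN),\Gamma(\mcN)]\subseteq\Gamma(\mcN)$ and total nullity) and skew in its last two (metric compatibility and total nullity), hence vanishes. (b) For $\bm{X}\in\Gamma(\mcN^\perp)$ and $\bm{Z}\in\Gamma(\mcN)$ one has $\nabla_{\bm{X}}\bm{Z}\in\Gamma(\mcN^\perp)$: in even dimensions $\mcN^\perp=\mcN$ and this is nothing but (a), while in five dimensions, writing $\mcN^\perp=\mcN\oplus\R u$ with $u^a$ a spacelike unit vector, one uses torsion-freeness and $[\Gamma(\mcN^\perp),\Gamma(\mcN^\perp)]\subseteq\Gamma(\mcN^\perp)$ to see that (b) is equivalent to the vanishing of the ``screen shear'' $\bm{g}(\nabla_{\bm{Z}}\bm{W},u)$, $\bm{Z},\bm{W}\in\Gamma(\mcN)$; that is, $\mcN$ must be totally geodesic inside $\mcN^\perp$, which is precisely among the integrability conditions \eqref{eq-folcond_5_L} supplied by the Lemma. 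Granting (a) and (b): by (b) and total nullity, $\bm{g}(\nabla_{\bm{Y}}\bm{Z},\bm{W})\equiv0$, so its derivative along $\bm{X}$ vanishes, and $\bm{g}(\nabla_{[\bm{X},\bm{Y}]}\bm{Z},\bm{W})=0$ since $[\bm{X},\bm{Y}]\in\Gamma(\mcN^\perp)$; metric compatibility then collapses the displayed quantity to the antisymmetrisation over $\bm{X},\bm{Y}$ of $-\bm{g}(\nabla_{\bm{Y}}\bm{Z},\nabla_{\bm{X}}\bm{W})$. By (b) both factors lie in $\mcN^\perp$, which is totally null in even dimensions; in five dimensions the surviving term is $\bm{g}(\nabla_{\bm{Y}}\bm{Z},u)\,\bm{g}(\nabla_{\bm{X}}\bm{W},u)$, and since the screen shear vanishes each factor depends on its $\mcN^\perp$-slot only through its $u$-component, making the product symmetric under $\bm{X}\leftrightarrow\bm{Y}$; it therefore drops out of the antisymmetrisation, and the displayed quantity is zero.

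Equivalently, and more quickly in five dimensions, once the Lemma has been invoked so that \eqref{eq-folcond_5_L} holds in the chosen null frame, one may substitute those conditions directly into the frame-component formula for $\bm{C}$ recorded in Appendix \ref{App-Notation} and read off that the components of $\bm{C}$ with two slots in $\mcN=\sspan\{k^a,\bar{m}^a\}$ and two in $\mcN^\perp=\sspan\{k^a,\bar{m}^a,u^a\}$ --- of which the symmetries of $\bm{C}$ leave at most the three $\bm{C}(k,\bar{m},k,\bar{m})$, $\bm{C}(k,u,k,\bar{m})$, $\bm{C}(\bar{m},u,k,\bar{m})$ independent --- all vanish, which in turn follows from the Ricci (commutator) identities of the Appendix. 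I expect the real obstacle to be exactly the odd-dimensional bookkeeping in step (b): in even dimensions everything collapses to (a) because $\mcN^\perp=\mcN$ is totally null, whereas in five dimensions one must control the spacelike screen direction $u$, and this is where the full strength of the hypothesis --- integrability of $\mcN$ \emph{and} of $\mcN^\perp$, i.e. the whole list \eqref{eq-folcond_5_L} rather than merely the conditions governing integrability of $\mcN$ on its own --- is actually used.
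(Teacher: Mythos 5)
Your argument is correct, and it supplies a proof that the paper itself omits (the paper defers to the even-dimensional complex case in Mason--Taghavi-Chabert and asserts the odd-dimensional case is ``similar''); what you write is essentially that standard route. The reduction from $\bm{C}$ to $\bm{R}$ is right, since every trace-adjustment term carries a factor $\bm{g}(\bm{X},\bm{Z})$, $\bm{g}(\bm{X},\bm{W})$, $\bm{g}(\bm{Y},\bm{Z})$ or $\bm{g}(\bm{Y},\bm{W})$, all of which vanish because $\mcN\subset\mcN^\perp$; properties (a) and (b) are exactly what is needed; and the final cancellation is sound: writing $x=\bm{g}(\bm{X},\bm{u})$, $y=\bm{g}(\bm{Y},\bm{u})$, the vanishing of the screen shear gives $\bm{g}(\nabla_{\bm{X}}\bm{W},\bm{u})=x\,\bm{g}(\nabla_{\bm{u}}\bm{W},\bm{u})$, so the surviving term in the antisymmetrisation is $xy$ times an expression symmetric in the pair $(\bm{X},\bm{Y})$ and drops out. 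One presentational caveat: the proposition is stated for a general $(2m+\epsilon)$-dimensional Lorentzian manifold, whereas you obtain the vanishing of the screen shear $\bm{g}(\nabla_{\bm{Z}}\bm{W},\bm{u})$ by citing the five-dimensional frame Lemma ($\mdigamma=\phi=\chi=\psi=0$). This is easily made dimension-free: integrability of $\mcN$ makes $\bm{g}(\nabla_{\bm{Z}}\bm{W},\bm{u})$ symmetric in $(\bm{Z},\bm{W})$ (since $[\bm{Z},\bm{W}]\in\Gamma(\mcN)\perp\bm{u}$), while metric compatibility together with $[\bm{Z},\bm{u}]\in\Gamma(\mcN^\perp)\perp\Gamma(\mcN)$ (integrability of $\mcN^\perp$) makes it skew, so it vanishes --- the same symmetric/skew trick you already use in (a). With that replacement your proof works verbatim in all dimensions and signatures, which is consistent with the paper's remark that the statement is really about null distributions in the complexification.
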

We omit the proof, which is already given in \cite{Mason2010} for the complex even-dimensional case, the odd-dimensional case being similar.

In four dimensions, condition \eqref{eq-int-cond-Weyl} is equivalent to the Petrov type I condition. While it is always satisfied in four dimensions, in the sense that one can find an almost null structure such that at least one of the five components of the Weyl tensor vanishes, condition \eqref{eq-int-cond-Weyl} is non-trivial in higher dimensions. This makes it somewhat more `special' than in four dimensions, but we shall reserve the term for a stronger algebraic condition defined later.

\paragraph{Sachs equations}
The integrability condition \eqref{eq-int-cond-Weyl} can also be read off directly from the Sachs equations, which can be obtained from the Ricci identity by choosing a parallely transported frame along the affinely parametrised null geodesic generated by the vector field $k^a$. Here, we present a subset of these equations in the five-dimensional context by setting $\epsilon = \kappa = \mdigamma = \pi = \chi = \mqoppa = 0$ in the Ricci identity given in Appendix \ref{App-Ricci}:
\begin{align}
\tag{\ref{eq-R-8}}
D \sigma & =  - \eta \psi - \rho \sigma - \bar{\rho} \sigma + \Psi_0  \, , \\
\tag{\ref{eq-R-10}}
D \psi & =  - \psi \bar{\rho} - \bar{\psi} \sigma - \psi \yogh + \Psi^0_0  \, , \\
\tag{\ref{eq-R-12}}
D \phi & =  - \mae \psi - \phi \bar{\rho} - \sigma \upsilon + \Psi_{ 0 3 }  \, , \\
\tag{\ref{eq-R-34}}
D \eta & =  - \eta \rho - \bar{\eta} \sigma - \eta \yogh + \Psi^0_0 \, .
\end{align}
It becomes apparent that $\Psi_0$, $\Psi^0_0$ and $\Psi_{0 3}$ are obstructions to the propagation of the CR structure of $\mcK^\perp / \mcK$ along the geodesic, and this, independently of the Einstein equations.

\section{Algebraically special spacetimes}
 The integrability condition \eqref{eq-int-cond-Weyl} is not sufficient to guarantee the existence of an optical structure on a manifold. In four dimensions, the Goldberg-Sachs theorem, however, tells us that a vacuum four-dimensional spacetime admits an optical structure if and only if it is algebraically special, i.e. it is at least of Petrov type II. In five dimensions, we first define which condition an `algebraically special' spacetime should satisfy, and we then demonstrate that it naturally leads to a partial generalisation of the Goldberg-Sachs thorem in five dimensions.

\subsection{A higher-dimensional version of Petrov type II spacetimes}
Here, we introduce an alternative characterisation of the algebraic speciality of the Weyl tensor, which we shall argue in the next section generalises the Petrov type II condition.
\begin{defn}\label{defn-AlgSp}
 Let $(\mcM , \bm{g} )$ be a Lorentzian manifold endowed with an almost optical structure $(\mcN , \mcK)$. We say that the Weyl tensor is \emph{algebraically special (relative to $\mcN$)} if it satisfies
\begin{align} \label{eq-AlgSpW}
 \bm{C} ( \bm{X}, \bm{Y}, \bm{Z}, \cdot ) & = 0 \, ,
\end{align}
for all $\bm{X}, \bm{Y} \in \Gamma ( \mcN^\perp )$ and $\bm{Z} \in \Gamma (\mcN)$. If in addition, the Weyl tensor does not degenerate any further, we say that the Weyl tensor is \emph{generically} algebraically special.
\end{defn}

Reexpressing the above conditions in the five-dimensional Lorentzian context using the five-dimensional Newman-Penrose spin coefficient formalism of Appendix \ref{App-Notation}, we have
\begin{lem}
Let $(\mcM , \bm{g} )$ be a five-dimensional Lorentzian manifold. Then, the Weyl tensor is algebraically special if ond only if
\begin{align} \label{eq-Weylcond_5_L}
 \Psi_0 = \Psi^0_0 = \Psi_{0 3} = \Psi_1  = \Psi^0_1 = \Psi^1_1 = \Psi^2_1 = \Psi^0_{1 3} = \Psi^1_{1 3} = \Psi^2_{1 3} = \Psi_{1 4} = 0 \, .
\end{align}
\end{lem}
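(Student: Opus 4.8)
The statement is just the translation of the coordinate-free condition \eqref{eq-AlgSpW} into the frame components tabulated in Appendix \ref{App-Notation}, so I would prove it by a finite bookkeeping argument rather than anything conceptual.

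First I would pin down the two distributions explicitly. With $g_{ab} = 2 k_{(a}\ell_{b)} + 2 m_{(a}\bar m_{b)} + u_a u_b$ and $\mcN = \mcN_1 = \sspan\{k^a, \bar m^a\}$, an immediate inner-product computation gives $\mcN^\perp = \sspan\{k^a, \bar m^a, u^a\}$: the vector $u^a$ is orthogonal to both $k^a$ and $\bar m^a$, while $k^a$ and $\bar m^a$ are null and mutually orthogonal. So \eqref{eq-AlgSpW} says precisely that $\bm C(\bm X, \bm Y, \bm Z, \cdot)$ vanishes whenever $\bm X, \bm Y$ are taken from $\{k^a, \bar m^a, u^a\}$ and $\bm Z$ from $\{k^a, \bar m^a\}$.

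Next, using multilinearity and the skew-symmetry of $\bm C$ in its first pair, I would reduce the condition to the three antisymmetric pairs $(k, \bar m)$, $(k, u)$, $(\bar m, u)$ spanning $\Lambda^2 \mcN^\perp$, with $\bm Z \in \{k^a, \bar m^a\}$, and the free slot contracted against the dual frame $\{k^a, \ell^a, m^a, \bar m^a, u^a\}$. The components in which the free index equals $\bm Z$ drop out by skew-symmetry in the second pair; the rest I would collapse using the pair-interchange symmetry $C_{abcd} = C_{cdab}$, the first Bianchi identity $C_{a[bcd]} = 0$, and trace-freeness. Matching the surviving independent components against the definitions in Appendix \ref{App-Notation} should produce exactly the eleven coefficients $\Psi_0, \Psi^0_0, \Psi_{0 3}, \Psi_1, \Psi^0_1, \Psi^1_1, \Psi^2_1, \Psi^0_{1 3}, \Psi^1_{1 3}, \Psi^2_{1 3}, \Psi_{1 4}$, i.e. \eqref{eq-Weylcond_5_L}. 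The converse is then automatic: assuming \eqref{eq-Weylcond_5_L}, each $\bm C(\bm X, \bm Y, \bm Z, W)$ with $\bm X, \bm Y \in \{k^a, \bar m^a, u^a\}$, $\bm Z \in \{k^a, \bar m^a\}$ and $W$ in the dual frame is, after the same use of symmetries, a linear combination of the eleven vanishing coefficients, hence zero, and multilinearity extends this to all of $\Gamma(\mcN^\perp)$ and $\Gamma(\mcN)$.

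The only genuine content is in that reduction step: one has to be sure that, after quotienting by all the algebraic symmetries of the Weyl tensor — especially the first Bianchi identity, which links components such as $C(k, \bar m, k, m)$ with $C(k, m, k, \bar m)$, and trace-freeness, which couples the ``mixed trace'' components through the dual pair $k^a, \ell^a$ — the list of truly independent surviving components is neither smaller nor larger than the eleven in \eqref{eq-Weylcond_5_L}. I would verify this against the explicit component dictionary of Appendix \ref{App-Notation} (or with a brief symbolic computation), rather than attempting it by hand.
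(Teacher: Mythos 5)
Your proposal is correct and coincides with what the paper does: the lemma is presented as a direct re-expression of condition \eqref{eq-AlgSpW} in the frame components of Appendix \ref{App-Notation} (note $\mcN^\perp=\sspan\{k^a,\bar m^a,u^a\}$, exactly as you compute), with no further argument given, so the finite bookkeeping you outline \emph{is} the proof. The one ingredient worth adding explicitly to your list of symmetries is the reality of the Weyl tensor, which identifies conjugate components (e.g.\ $\overline{\Psi_{1 4}}=C_{2 0 \tilde{1} 2}=-C_{2 0 2 \tilde{1}}$): components such as $\Psi_{1 4}$ and $\Psi^2_{1 3}$ cannot be brought to the form $\bm{C}(\bm{X},\bm{Y},\bm{Z},\cdot)$ with $\bm{X},\bm{Y}\in\Gamma(\mcN^\perp)$, $\bm{Z}\in\Gamma(\mcN)$ by the tensor symmetries alone, but only after conjugation, so without this step your reduction would terminate with too short a list.
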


\paragraph{Jordan normal form of the Weyl curvature operator}
In four dimensions, algebraically special spacetime are equivalently characterised by the eigenvalue structure of the Weyl curvature operator, i.e. the Weyl tensor viewed as an operator on $2$-forms or bivectors. More precisely, the Weyl tensor is algebraically special if its corresponding operator has a repeated eigenvalue. Recently, this approach has been considered for higher-dimensional spacetimes \cites{Hervik2010,Coley2010,Coley2010a}. In particular, it was found that the Weyl operator of a five-dimensional spacetime of CMPP type II has at least three eigenvalues of (at least) multiplicity 2 \cite{Coley2010}. Here, we assume the stronger condition that the Weyl tensor is generically algebraically special relative to an almost optical structure. Details of the following results are relegated to Appendix \ref{App-Notation}.
\begin{prop}
Let $(\mcM, \bm{g})$ be a five-dimensional spacetime, and suppose that the Weyl tensor is generically algebraically special relative to an optical structure $\mcN$. Then, the Weyl curvature operator $C\ind{_\alpha ^\beta}$ has Segre characteristic $[2 2 2 (1 1) 1 1]$ with eigenvalues
\begin{align*}
 \Psi_2 \, , \, \, \, \bar{\Psi}_2 \, , \, \, \, \Psi_2^0 \, , \, \, \, - \Psi_2^0 \, , \, \, \,
  - (\Psi_2 + \bar{\Psi}_2 ) + \sqrt{ (\Psi_2 - \bar{\Psi}_2)^2 + (\Psi_2^0)^2 } \, , \, \, \, - (\Psi_2 + \bar{\Psi}_2 ) - \sqrt{ (\Psi_2 - \bar{\Psi}_2)^2 + (\Psi_2^0)^2 } \, ,
\end{align*}
respectively. Further, the $3$-forms and $2$-forms annihilating $\mcN$ and $\mcN^\perp$ respectively are eigenforms of $C \ind{_\alpha^\beta}$. 
\end{prop}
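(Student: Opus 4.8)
\noindent \emph{Sketch of proof.} The plan is to reduce the statement to an explicit computation of the $10\times 10$ matrix of $C\ind{_\alpha^\beta}$ acting on $2$-forms, carried out in the null coframe $\{k,\ell,u,m,\bar m\}$ of Appendix \ref{App-Notation} and organised around the boost-weight grading attached to $\mcK$. The full bookkeeping is relegated to Appendix \ref{App-Notation}; here we only outline the structure of the argument.

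By the Lemma above, the hypothesis ``generically algebraically special relative to $\mcN$'' is equivalent to the vanishing of exactly the eleven components listed in \eqref{eq-Weylcond_5_L}, with no further component degenerating. Substituting these into the component form of $C_{abcd}$ and contracting against a basis of $2$-forms built from the coframe yields the desired matrix. Now grade $\Lambda^2\mcM = V_{+1}\oplus V_0\oplus V_{-1}$ by boost weight under the action $k^a\mapsto\lambda k^a$, $\ell^a\mapsto\lambda^{-1}\ell^a$: the coframe $2$-forms containing a $k$-factor have weight $+1$, those containing an $\ell$-factor weight $-1$, the rest weight $0$, so that $\dim V_{\pm1}=3$ and $\dim V_0=4$. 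Inspection of the component form shows that the vanishing of the eleven components forces $C\ind{_\alpha^\beta}$ to send $V_b$ into $\bigoplus_{b'\le b}V_{b'}$; hence it is block lower-triangular, its characteristic polynomial factors through the three diagonal blocks $V_{+1}\to V_{+1}$, $V_0\to V_0$, $V_{-1}\to V_{-1}$, and each of these depends only on the boost-weight-zero Weyl components $\Psi_2$, $\bar\Psi_2$, $\Psi_2^0$.

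One then computes these three blocks directly. The two $3\times3$ blocks on $V_{+1}$ and $V_{-1}$ account between them for the eigenvalues $\Psi_2$, $\bar\Psi_2$, $\Psi_2^0$ and the double eigenvalue $-\Psi_2^0$, while the characteristic quadratic of the $4\times4$ block on $V_0$ supplies the pair $-(\Psi_2+\bar\Psi_2)\pm\sqrt{(\Psi_2-\bar\Psi_2)^2+(\Psi_2^0)^2}$; as a consistency check, the ten eigenvalues sum to zero, matching $\tr C\ind{_\alpha^\beta}=0$ for a trace-free Weyl tensor. In parallel, using the vanishing of $\Psi_1^0$, $\Psi_1^1$, $\Psi_1^2$ and the remaining $\Psi_1$-type terms in \eqref{eq-Weylcond_5_L} to kill the off-diagonal contributions on the relevant vectors, one verifies that the decomposable forms canonically associated with the optical structure -- the $2$-form spanning $\Lambda^2(\Ann\mcN^\perp)$, which lies in $V_{+1}$, and the Hodge dual of the $3$-form spanning $\Lambda^3(\Ann\mcN)$, which lies in $V_{-1}$ -- are genuine eigenforms of $C\ind{_\alpha^\beta}$ (with eigenvalues $-\Psi_2^0$ and $\Psi_2^0$ respectively).

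It remains to determine the Jordan structure. The eigenvalues $\Psi_2$, $\bar\Psi_2$, $\Psi_2^0$ each have algebraic multiplicity two; one shows -- and this is the step where genericity is essential -- that the boost-lowering (nilpotent) off-diagonal blocks of $C\ind{_\alpha^\beta}$ have exactly enough rank to put each of these three eigenvalues into a single $2\times2$ Jordan block rather than two $1\times1$ blocks, while leaving a two-dimensional eigenspace for $-\Psi_2^0$ (the ``$(1\,1)$'' entry) and keeping the two square-root eigenvalues simple. Collecting multiplicities gives the Segre characteristic $[2\,2\,2\,(1\,1)\,1\,1]$, and the genericity assumption is precisely what excludes any further coincidence among the $\Psi_i$'s or collapse of the nilpotent part. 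The main obstacle is this last step -- controlling the nilpotent off-diagonal blocks on the whole $10$-dimensional space -- together with handling the Lorentzian reality conditions so that the square-root eigenvalues come out in the stated real form; these are the computations deferred to Appendix \ref{App-Notation}.
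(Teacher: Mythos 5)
Your overall strategy coincides with the paper's: Appendix \ref{App-Notation} writes $C\ind{_\alpha ^\beta}$ in a bivector basis ordered by boost weight, obtains the almost lower-triangular matrix \eqref{eq-WTriangul-AlgSp}, reads the eigenvalues off the diagonal blocks, and settles the Segre characteristic by a direct (computer-assisted) computation. The problem is that the block-by-block bookkeeping you report is wrong, and since the proposition \emph{is} the outcome of that bookkeeping, this is not a cosmetic slip. Your eigenvalue count does not close: you assign five eigenvalues ($\Psi_2$, $\bar\Psi_2$, $\Psi^0_2$ and a double $-\Psi^0_2$) to the two $3\times3$ blocks, which have six diagonal slots between them, and only two eigenvalues to the $4\times4$ block, whose characteristic polynomial is a quartic rather than a quadratic; your totals come to seven, not ten. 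The correct distribution, visible in \eqref{eq-WTriangul-AlgSp}, is that \emph{each} of the two weight-$\pm1$ blocks is itself diagonal with entries $\Psi_2,\bar\Psi_2,\Psi^0_2$ --- which is precisely where the algebraic multiplicity two of those three eigenvalues comes from, and why the boost-lowering off-diagonal part can pair the two copies into $2\times2$ Jordan blocks --- while the weight-zero block splits as $\diag(-\Psi^0_2,-\Psi^0_2)$ on $\sspan\{m\wedge u,\ \bar m\wedge u\}$ plus a symmetric $2\times2$ block on $\sspan\{k\wedge\ell,\ m\wedge\bar m\}$ carrying the two square-root eigenvalues. In particular the double eigenvalue $-\Psi^0_2$ originates in $V_0$, not in $V_{\pm1}$.

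Your identification of the distinguished eigenforms is also incorrect. The annihilator of $\mcN^\perp$ is $\sspan\{k_a,\bar m_a\}$, so the $2$-form in question is $k_{[a}\bar m_{b]}$; it has boost weight $+1$ and, by Table \ref{table-evalues-WeylOp}, is an eigenform for the eigenvalue $\Psi_2$, not for $-\Psi^0_2$. (The Hodge dual of the annihilator $3$-form $k\wedge\bar m\wedge u$ of $\mcN$ is again proportional to $k\wedge\bar m$, so it does not lie in $V_{-1}$ either.) The genuine $-\Psi^0_2$ eigenforms are the combinations involving $m_{[a}u_{b]}$ and $\bar m_{[a}u_{b]}$ listed in Table \ref{table-evalues-WeylOp}. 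Since every substantive claim of the proposition is deferred to a computation whose reported intermediate output is false, the argument as written does not establish the result; you need to actually exhibit the matrix \eqref{eq-WTriangul-AlgSp} (or its boost-graded blocks) and verify the rank of $(C-\lambda I)$ for each repeated eigenvalue $\lambda$ to justify the Segre characteristic $[2\,2\,2\,(1\,1)\,1\,1]$.
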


It is also worth mentioning that this Jordan normal form does \emph{not} imply the algebraic condition \eqref{eq-AlgSpW}: if we weaken this condition by assuming $\Psi_{1 4} \neq 0$, the Weyl curvature operator has the same Jordan normal form and eigenvalues as the algebraically special case. Some properties of its eigenforms are pointed out in Appendix \ref{App-Notation}.

Finally, computations have shown that assuming fewer vanishing components of the Weyl tensor does not lead to such concise expressions of the eigenvalues.

\paragraph{Relation to the CMPP classification}
Roughly, the idea behind the CMPP classification of the Weyl tensor is that spacetimes are classified according to the vanishing of components of the Weyl tensor of a given boost weight \cite{Coley2004}. According to this scheme, the Weyl tensor of a CMPP Type II spacetime has vanishing components of boost weights $2$ and $1$. While in four dimensions, this agrees with the algebraic conditions given in the present paper, it is clear that in five (or higher) dimensions, neither the integrability condition \eqref{eq-int-cond-Weyl} for an almost optical structure $\mcN$, nor the algebraic speciality \eqref{eq-AlgSpW} of the Weyl tensor relative to it,  fits into the CMPP classification. This can be inferred directly from Table \ref{table-Weylboostweight} of the appendix. The algebraic speciality implies the vanishing of not only all components of boost weights $2$ and $1$ ($\Psi_0$, $\Psi^0_0$, $\Psi^0_1$, $\Psi_{0 3}$, $\Psi_1$, $\Psi^1_1$, and $\Psi^2_1$), but also of three components of boost weight $0$ ($\Psi^0_{1 3}$, $\Psi^1_{1 3}$, and $\Psi^2_{1 3}$), and one of boost weight $-1$ ($\Psi_{1 4}$). Thus, a spacetime algebraically special with respect to an almost optical structure is a stronger condition than the CMPP type II condition.

\paragraph{Relation to the De Smet classification}
The De Smet classification of spacetimes was first proposed in \cite{De2005} as a generalisation of the classification of the  Weyl conformal spinor to five dimensions. The classification was more recently refined in \cite{Godazgar2010}, where reality conditions on spin space are taken into account. We recall that the Weyl tensor in four and five dimensions can be viewed as a totally symmetric element $\Psi_{A B C D}$ of the spinor algebra. In four dimensions, one can always express the  Weyl conformal spinor as a symmetric product of spinors. However, this is not always the case in five dimensions. In the De Smet classification, spacetimes are categorised according to the factorisibility of $\Psi_{A B C D}$. This procedure imposes stringent algebraic conditions on the Weyl tensor, and the geometry of algebraically special spacetimes in the sense of De Smet may be excessively restrictive, as some examples in \cite{Godazgar2010} show. In fact, an algebraically special spacetime relative to an optical structure can still be algebraically general in the sense of De Smet. The point will be illustrated by Example \ref{exa-KerrNUTAdS} and Remark \ref{rem-Kerr5} below.

\subsection{A five-dimensional Goldberg-Sachs theorem}
We now come to the main result of the paper. The Cotton-York tensor is defined in Appendix \ref{App-Notation}
\begin{thm}[Partial five-dimensional Lorentzian Goldberg-Sachs theorem]\label{thm-GS-5-L}
 Let $(\mcM, \bm{g})$ be a five-dimensional Lorentzian manifold, and $(\mcN, \mcK)$ an almost optical structure. Suppose that the Weyl tensor is generically algebraically special relative to $\mcN$, and the Cotton-York tensor satisfies
\begin{align}\label{eq-CYcond_5_L}
\bm{A}( \bm{Z}, \bm{X}, \bm{Y}) & = 0 \, ,
\end{align}
for all $\bm{X}, \bm{Y} \in \Gamma ( \mcN^\perp ) $, $\bm{Z} \in \Gamma (\mcN )$.
Then, the almost optical structure $(\mcN , \mcK)$ is integrable.
\end{thm}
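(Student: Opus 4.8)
The plan is to proceed entirely in the five-dimensional Newman–Penrose spin coefficient formalism of Appendix \ref{App-Notation}, and to show that the hypotheses force the vanishing of the six spin coefficients $\kappa, \sigma, \mdigamma, \phi, \chi - \psi$ (and then $\psi, \eta$) which, by the Lemma following equation \eqref{eq-folcond_5_L}, is exactly the integrability condition for both $\mcN$ and $\mcN^\perp$. Concretely, the generic algebraic speciality hypothesis translates into the vanishing of the eleven Weyl components listed in \eqref{eq-Weylcond_5_L}, and the Cotton–York condition \eqref{eq-CYcond_5_L} translates into the vanishing of the corresponding components of $\bm{A}$; I would begin by writing both sets of vanishing components out explicitly in the frame adapted to $(\mcN,\mcK)$, exactly as was done for the Sachs equations above.

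First I would substitute these vanishing conditions into the contracted Bianchi identity (equivalently, the second Bianchi identity relating $\nabla \bm{C}$ to the Cotton–York tensor), component by component. The point of feeding in \emph{both} the Weyl and the Cotton–York conditions is that the derivative-of-Weyl terms that survive in the relevant Bianchi components are then algebraic: the $\nabla \bm{C}$-pieces either vanish outright or reduce to products of the already-vanishing Weyl components with spin coefficients, while the Cotton–York side is zero by \eqref{eq-CYcond_5_L}. What remains in each such equation is a homogeneous bilinear expression in the non-vanishing Weyl components $\Psi_2, \bar\Psi_2, \Psi_2^0$ (and possibly $\Psi_{1\,4}$, which I would keep track of but need not assume vanishes) and the candidate-obstruction spin coefficients $\kappa, \sigma, \mdigamma, \phi, \chi-\psi, \psi, \eta$. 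Collecting these identities gives a linear system, with coefficients built from $\Psi_2, \bar\Psi_2, \Psi_2^0$, whose unknowns are precisely those spin coefficients.

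The crux is then a genericity/non-degeneracy argument: the "does not degenerate any further" clause in Definition \ref{defn-AlgSp} is exactly what guarantees that the relevant minors of this linear system — which by the Proposition on the Jordan normal form are essentially the quantities $\Psi_2 - \bar\Psi_2$, $\Psi_2^0$, and the discriminant $(\Psi_2-\bar\Psi_2)^2 + (\Psi_2^0)^2$ controlling the eigenvalue splitting — are nonzero. With these invertible, the system forces $\kappa = \sigma = \mdigamma = \phi = \chi - \psi = 0$; one then revisits the remaining Bianchi components (or the commutator relations) with these five already known to vanish to peel off $\psi = \eta = 0$ as well. I expect the main obstacle to be organisational rather than conceptual: correctly identifying \emph{which} components of the Bianchi identity to use (there are many, most of which give no information), managing the boost-weight bookkeeping so that the surviving system really is closed in the seven target spin coefficients, and pinning down the precise non-degeneracy hypothesis on $\Psi_2,\bar\Psi_2,\Psi_2^0$ that makes the linear algebra go through — this is presumably why the theorem is stated as only a \emph{partial} Goldberg–Sachs theorem, with the genericity assumption doing essential work and the degenerate cases deferred to the later discussion of type D.
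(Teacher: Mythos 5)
Your proposal matches the paper's proof in essence: the paper substitutes the vanishing Weyl and Cotton--York components into the Bianchi identity, obtains a purely algebraic linear system in $\kappa,\sigma,\mdigamma,\chi,\psi,\eta,\phi$ with coefficients built from $\Psi_2,\bar{\Psi}_2,\Psi^0_2$ (namely equations \eqref{eqB47}, \eqref{eqB48}, \eqref{eqB15}, \eqref{eqB43}, the combination \eqref{eqB51}$+$\eqref{eqB54}$-$\eqref{eqB13}$-$\eqref{eqB16}, and \eqref{eqB14}/\eqref{eqB44}/\eqref{eqB53}), and solves it using only the genericity --- in particular the non-vanishing --- of $\Psi_2$ and $\Psi^0_2$, exactly as you outline. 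The one small inaccuracy is your guess that the eigenvalue discriminant $(\Psi_2-\bar{\Psi}_2)^2+(\Psi^0_2)^2$ controls the invertibility: the actual non-degeneracy conditions used are simpler (e.g.\ $\Psi^0_2\neq0$ kills $\kappa$ and $\sigma$, the pair $\Psi_2\neq\bar{\Psi}_2$ or $\Psi^0_2\neq-\bar{\Psi}_2$ kills $\mdigamma$, and $\Psi_2\neq0$ kills $\eta$).
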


\begin{proof}
By assumption, the Weyl tensor satisfies relations \eqref{eq-Weylcond_5_L}, and the Cotton-York tensor
\begin{align*}
 A_{1 1 2} = A_{1 1 0} = A_{1 2 0} = A_{2 1 2} = A_{2 2 0} = 0 \, .
\end{align*}
By the genericity assumption, no other components of the Weyl tensor vanish identically except possibly at isolated points. This reduces parts of the Bianchi identity given in Appendix \ref{App-Bianchi} to the following algebraic equations:
\begin{align}\tag{\ref{eqB1}}
0 & = \sigma \Psi^0_2 + 3 \sigma \Psi_2 - \kappa \bar{\Psi}^1_3 \\
\tag{\ref{eqB2}}
0 & = \kappa \Psi^0_2 - 3 \kappa \Psi_2 \\ 
\tag{\ref{eqB13}}
0 & =  (  - \chi + 2 \psi )  \Psi^0_2 + \psi \Psi_2 +  ( \chi + \psi )  \bar{\Psi}_2 - \mdigamma \bar{\Psi}_3 - \kappa \Psi^2_3 \\
\tag{\ref{eqB14}}
0 & = \phi \Psi_2 - \phi \bar{\Psi}_2 + \kappa \bar{\Psi}^0_4 +  (  - 2 \chi + \psi )  \bar{\Psi}^1_3 +  ( \chi + \psi )  \bar{\Psi}_3 - \mdigamma \bar{\Psi}_4 + 2 \sigma \bar{\Psi}^2_3 + \sigma \Psi^2_3 \\
\tag{\ref{eqB15}}
0 & =  - \mdigamma \Psi_2 + \mdigamma \bar{\Psi}_2
\\
\tag{\ref{eqB16}}
0 & =  ( 2 \chi - \psi )  \Psi^0_2 - \chi \Psi_2 +  (  - \chi - \psi )  \bar{\Psi}_2 - \mdigamma \bar{\Psi}^1_3 + \mdigamma \bar{\Psi}_3 - 2 \kappa \bar{\Psi}^2_3 + \kappa \Psi^2_3 \\
\tag{\ref{eqB42}}
0 & =  (  - \eta + 2 \psi )  \Psi^0_2 +  (  - \chi - 3 \eta + \psi )  \Psi_2 +  ( \chi + \psi )  \bar{\Psi}_2 - \mdigamma \bar{\Psi}^1_3 - \mdigamma \bar{\Psi}_3 - 2 \kappa \bar{\Psi}^2_3 - \kappa \Psi^2_3
\\
\tag{\ref{eqB43}}
0 & = \mdigamma \Psi^0_2 + \mdigamma \bar{\Psi}_2
\\
\tag{\ref{eqB44}}
0 & = \phi \Psi^0_2 - \phi \bar{\Psi}_2 + \kappa \bar{\Psi}^0_4 +  (  - 2 \chi - \eta )  \bar{\Psi}^1_3 +  ( \chi + \psi )  \bar{\Psi}_3 - \mdigamma \bar{\Psi}_4 + \sigma \Psi^2_3
\\
\tag{\ref{eqB45}}
0 & =  ( 2 \chi + \eta )  \Psi^0_2 +  (  - \chi - 3 \eta + \psi )  \Psi_2 +  (  - \chi - \psi )  \bar{\Psi}_2 - \mdigamma \bar{\Psi}^1_3 + \mdigamma \bar{\Psi}_3 - 2 \kappa \bar{\Psi}^2_3 + \kappa \Psi^2_3
\\
\tag{\ref{eqB47}}
0 & =  - 2 \kappa \Psi^0_2
\\
\tag{\ref{eqB48}}
0 & = 2 \sigma \Psi^0_2 - 2 \kappa \bar{\Psi}^1_3
\\
\tag{\ref{eqB51}}
0 & =  ( \chi - \eta )  \Psi^0_2 +  (  - \chi - 3 \eta )  \Psi_2 - \mdigamma \bar{\Psi}^1_3 - 2 \kappa \bar{\Psi}^2_3
\\
\tag{\ref{eqB53}}
0 & = \phi \Psi^0_2 - \phi \Psi_2 +  (  - \eta - \psi )  \bar{\Psi}^1_3 - 2 \sigma \bar{\Psi}^2_3
\\
\tag{\ref{eqB54}}
0 & =  ( \eta + \psi )  \Psi^0_2 +  (  - 3 \eta + \psi )  \Psi_2
\end{align}
We now show that the connection components satisfy relations \eqref{eq-folcond_5_L}.
\begin{itemize}
 \item From equation \eqref{eqB47}, one obtains $\kappa=0$ immediately.
 \item It follows from equation \eqref{eqB48} that $\sigma=0$.
 \item By the genericity assumption, we have that $\mdigamma = 0$ by equations \eqref{eqB15} or \eqref{eqB43}.
 \item Now, \eqref{eqB51} + \eqref{eqB54} - \eqref{eqB13} - \eqref{eqB16} gives $0= - 6 \eta \Psi_2 $, i.e. $\eta=0$.
 \item Hence, from equations \eqref{eqB51}, \eqref{eqB54}, \eqref{eqB13} and \eqref{eqB16} again, we deduce $\chi = \psi = 0$.
 \item Finally, any of equations \eqref{eqB14}, \eqref{eqB44} or \eqref{eqB53} yields $\phi = 0$. 
\end{itemize}
At this stage, we observe that the condition on the Weyl tensor is frame independent as can be seen from the gauge transformations given in Appendix \ref{App-Notation}. Hence, all the required connection components are zero in any null frame.
\end{proof}

Before we proceed, a number of remarks are in order.
\begin{rem}
 There arises the issue of whether the algebraic speciality \eqref{eq-CYcond_5_L} is the \emph{mininum} condition which guarantees the existence of an optical structure. So, suppose that $\Psi_{1 4} \neq 0$ (and for simplicity we maintain the same condition on the Cotton-York tensor), which by gauge invariance, is the only next weakest assumption possible. Then, the only algebraic equation involving $\phi$ is
\begin{align}\tag{\ref{eqB53}}
0 & = \phi \Psi^0_2 - \phi \Psi_2 +  (  - \eta - \psi )  \bar{\Psi}^1_3 + \yogh \bar{\Psi}_{ 1 4 } - 2 \sigma \bar{\Psi}^2_3 \, .
\end{align}
In particular, $\phi$ cannot vanish unless $\yogh$ vanishes itself, which is impossible under our assumptions.
 We must thus conclude that any weaker condition on the Weyl tensor will not guarantee the integrability of the optical structure, i.e. the integrability of \emph{both} $\mcN$ and $\mcN^\perp$. In fact, it can be checked that weaker conditions (such as the one we have just considered) may guarantee the integrability of the null distribution $\mcN$, but not its orthogonal complement $\mcN^\perp$.
\end{rem}
\begin{rem}
 As pointed earlier, the weaker condition given by $\Psi_{1 4} \neq 0$ still implies a very simple eigenvalue structure of the Weyl curvature operator $C\ind{_\alpha ^\beta}$. In the light of the previous remark, there thus appears to be an affinity between the integrability of $\mcN$ and the Jordan normal form of $C\ind{_\alpha ^\beta}$.
\end{rem}
\begin{rem}
 It is worth noting that the assumptions of Theorem \eqref{thm-GS-5-L} are \emph{conformally invariant}: if Theorem \eqref{thm-GS-5-L} holds for a frame, then it will also hold for another one conformally related to the first one. To see the conformal invariance of the optical structure, suppose for definiteness that each basis $1$-form is rescaled by a positive factor $\Omega$. Then, denoting the connection components relative to the rescaled by a $\hat{}$, we have
\begin{align*}
 \hat{\kappa} & = \Omega^{-1} \kappa \, , & \hat{\mdigamma} & = \Omega^{-1} \mdigamma \, ,  & \hat{\sigma} & = \Omega^{-1} \sigma \, ,
 & \hat{\chi} & = \Omega^{-1} \chi \, , & \hat{\psi} & = \Omega^{-1} \psi \, , & \hat{\eta} & = \Omega^{-1} \eta \, , & \hat{\phi} & = \Omega^{-1} \phi \, , 
\end{align*}
and the invariance of the integrability follows. The conformal invariance of the theorem itself is less straightforward, but can be checked from the tranformation laws of the Bianchi identity. This property is well-known in four dimensions \cites{Penrose1986,Gover2010}, and we defer its generalisation to higher dimensions for a future publication \cite{Taghavi-Chaberta}.
\end{rem}

We end this section by another aspect of the Goldberg-Sachs theorem, whose proof is a straightforward modification of that of Theorem \ref{thm-GS-5-L}.
\begin{prop}\label{prop-GS-5-L-conv}
 Let $(\mcM, \bm{g})$ be a five-dimensional Lorentzian manifold, and $(\mcN, \mcK)$ an optical structure. Suppose that the Weyl tensor is  algebraically special relative to $\mcN$. Then, the Cotton-York tensor satisfies
\begin{align*}
\bm{A}( \bm{Z}, \bm{X}, \bm{Y}) & = 0 \, ,
\end{align*}
for all $\bm{X}, \bm{Y} \in \Gamma ( \mcN^\perp ) $, $\bm{Z} \in \Gamma (\mcN )$.
\end{prop}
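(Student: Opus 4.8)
The plan is to run the argument of Theorem~\ref{thm-GS-5-L} in reverse, using the same portion of the Bianchi identity. First I would invoke the integrability Lemma: since $(\mcN,\mcK)$ is an \emph{optical} structure, both $\mcN$ and $\mcN^\perp$ are integrable, so the connection components satisfy \eqref{eq-folcond_5_L}, i.e. $\kappa=\sigma=\mdigamma=\phi=\chi=\psi=\eta=0$. By hypothesis the Weyl tensor is algebraically special relative to $\mcN$, so its components satisfy \eqref{eq-Weylcond_5_L}; each of these components therefore vanishes identically, and hence so do all of its frame derivatives.

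Next I would revisit exactly those components of the Bianchi identity in Appendix~\ref{App-Bianchi} which produced equations \eqref{eqB1}--\eqref{eqB54} in the proof of Theorem~\ref{thm-GS-5-L}, but this time \emph{without} imposing the degeneracy of the Cotton-York tensor. Schematically each of these reads
\begin{align*}
0 = (\text{derivatives of vanishing Weyl components}) + (\text{connection})\cdot(\text{Weyl}) + (\text{Cotton-York component})\, ,
\end{align*}
the first group being absent precisely because the relevant Weyl components vanish identically under \eqref{eq-Weylcond_5_L} --- this is the same fact that rendered those Bianchi components purely algebraic in the proof of Theorem~\ref{thm-GS-5-L}. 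Substituting \eqref{eq-folcond_5_L} kills every term in the second group, leaving only the Cotton-York components. One then checks that the five components $A_{112},A_{110},A_{120},A_{212},A_{220}$ each occur in one of these equations, so that all of them must vanish; this is precisely $\bm{A}(\bm{Z},\bm{X},\bm{Y})=0$ for $\bm{X},\bm{Y}\in\Gamma(\mcN^\perp)$, $\bm{Z}\in\Gamma(\mcN)$. As in Theorem~\ref{thm-GS-5-L}, the condition \eqref{eq-AlgSpW} on the Weyl tensor and the integrability of $(\mcN,\mcK)$ being frame-independent, the conclusion holds in every null frame adapted to $\mcN$.

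The only real work is bookkeeping: one must identify, among the full list of Bianchi components in Appendix~\ref{App-Bianchi}, exactly those in which (i) every explicit frame-derivative term is a derivative of a Weyl component set to zero by \eqref{eq-Weylcond_5_L}, and (ii) the free Cotton-York term is one of $A_{112},A_{110},A_{120},A_{212},A_{220}$ --- the mirror images of \eqref{eqB1}--\eqref{eqB54}. Note that no genericity hypothesis on the Weyl tensor is required here: unlike in the proof of Theorem~\ref{thm-GS-5-L}, one never divides by a Weyl component, but only sets connection components to zero and reads the Cotton-York components off directly. I expect the one mild subtlety to be ensuring that the collection of Bianchi components invoked is large enough to yield all five Cotton-York components, rather than only a proper subset of them.
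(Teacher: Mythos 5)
Your proposal is correct and is precisely the ``straightforward modification'' of the proof of Theorem \ref{thm-GS-5-L} that the paper intends: one reruns the same purely algebraic Bianchi components --- e.g.\ \eqref{eqB2}, \eqref{eqB43}, \eqref{eqB42}, \eqref{eqB1} and \eqref{eqB44} yield $A_{112}$, $A_{110}$, $A_{120}$, $A_{212}$ and $A_{220}$ respectively --- with the connection components now annihilated by the integrability conditions \eqref{eq-folcond_5_L} rather than the Cotton-York components annihilated by hypothesis. Your observation that no genericity assumption on the Weyl tensor is needed (one never divides by $\Psi_2$ or $\Psi^0_2$, only reads off the Cotton-York terms) is also correct and explains why the proposition is stated for ``algebraically special'' rather than ``generically algebraically special''.
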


\subsection{Simplification of the Bianchi identity}
As a consequence of Theorem \ref{thm-GS-5-L}, the Bianchi identity for an algebraically special spacetime simplifies considerably. Here we record some of the most notable equations, which present an elegant simplicity:
\begin{align}\tag{\ref{eqB21}}
D \Psi_2 & =  - A _{ 1  1  \tilde{1} } - A _{ 2  1  \bar{2} } - \rho \Psi^0_2 - 3 \rho \Psi_2 \\
\tag{\ref{eqB22}}
\delta \Psi_2 & = A _{ 1  \tilde{1}  2 } - A _{ 2  2  \bar{2} } - \tau \Psi^0_2 + 3 \tau \Psi_2 - \rho \bar{\Psi}^1_3 \\
\tag{\ref{eqB91}}
\mathring{\delta} \Psi_2 & = A _{ 1  \tilde{1}  0 } + A _{ 2  \bar{2}  0 } +  ( \mstigma - \upsilon )  \Psi^0_2 +  ( \mstigma + \upsilon )  \Psi_2 +  ( 2 \rho - \yogh )  \bar{\Psi}^2_3
\end{align}
\begin{align}\tag{\ref{eqB75}}
D \Psi^0_2 & =  - A _{ 0  1  0 } - A _{ 1  1  \tilde{1} } - 2 \yogh \Psi^0_2 - \yogh \Psi_2 - \yogh \bar{\Psi}_2 \\
\tag{\ref{eqB79}}
\delta \Psi^0_2 & =  - A _{ 0  2  0 } + A _{ 1  \tilde{1}  2 } +  (  - \mae + 2 \tau )  \Psi^0_2 + \mae \bar{\Psi}_2 + \bar{\rho} \bar{\Psi}^1_3 - \yogh \bar{\Psi}_3
\end{align}
\begin{align}\tag{\ref{eqB76}}
D \bar{\Psi}^1_3 & =  - A _{ 2  1  \tilde{1} } +  (  - 2 \bar{\pi} + 2 \tau )  \Psi^0_2 - \mae \Psi_2 + \mae \bar{\Psi}_2 +  (  - 2 \bar{\epsilon} - \yogh )  \bar{\Psi}^1_3 - \yogh \bar{\Psi}_3 \\
\tag{\ref{eqB80}}
\delta \bar{\Psi}^1_3 & = A _{ 2  \tilde{1}  2 } - 2 \bar{\lambda} \Psi^0_2 +  (  - 2 \mae - 2 \bar{\alpha} + 2 \tau )  \bar{\Psi}^1_3 + \mae \bar{\Psi}_3 - \yogh \bar{\Psi}_4
\end{align}
\begin{align}
\tag{\ref{eqB34}}
D \bar{\Psi}^2_3 & = A _{ 0  1  \tilde{1} } +  ( \mqoppa - \mstigma )  \Psi^0_2 +  ( \mqoppa - \upsilon )  \Psi_2 +  (  - \mstigma + \upsilon )  \bar{\Psi}_2 +  (  - \epsilon - \bar{\epsilon} - 2 \rho )  \bar{\Psi}^2_3 - \rho \Psi^2_3 \\ 
\tag{\ref{eqB36}}
\delta \bar{\Psi}^2_3 & =  - A _{ 0  \tilde{1}  2 } +  (  - 2 \omega + \xi )  \Psi^0_2 +  ( \omega + \xi )  \Psi_2 + \omega \bar{\Psi}_2 + \rho \bar{\Psi}^0_4 +  ( \mstigma - 2 \upsilon + \bar{\upsilon} )  \bar{\Psi}^1_3 \\
& \qquad {} +  (  - \mstigma + \upsilon )  \bar{\Psi}_3 +  (  - \bar{\alpha} - \beta + 2 \tau )  \bar{\Psi}^2_3 - \tau \Psi^2_3 \nonumber
\end{align}
We also have the following equations, which do not depend on the derivatives of the Weyl tensor components:
\begin{align}\tag{\ref{eqB81}}
0 & =  - A _{ 1  2  \bar{2} } +  ( 2 \rho - 2 \bar{\rho} )  \Psi^0_2 - \yogh \Psi_2 + \yogh \bar{\Psi}_2 \\
\tag{\ref{eqB35}}
0 & =  - A _{ 0  2  \bar{2} } +  (  - \upsilon + \bar{\upsilon} )  \Psi^0_2 +  ( \mstigma - \bar{\upsilon} )  \Psi_2 +  (  - \mstigma + \upsilon )  \bar{\Psi}_2 + \bar{\rho} \bar{\Psi}^2_3 - \rho \Psi^2_3 \, .
\end{align}
Further equations can be derived, such as
\begin{align}
 0 & = - A _{ 0  2  0 } - A _{ \tilde{1}  1  2 } + ( \mae + 2 \tau ) \Psi^0_2 - \mae \Psi_2  +  ( 2 \rho  - \yogh )  \bar{\Psi}^1_3 \, ,
\end{align}
which is given by adding equation \eqref{eqB79} to equation \eqref{eqB82}.

\subsection{Further degeneracy of the Weyl tensor}
In four dimensions, it is well-known that further algebraic specialities of the Weyl tensor (Petrov types D, III, and N) together with appropriate conditions on the Cotton-York tensor still guarantee the integrability of an almost optical structure. In this section, we briefly consider what stronger algebraic conditions on the Weyl tensor imply for the connection components. We start by the following remark:
\begin{rem}\label{rem-proof}
 From the proof of Theorem \ref{thm-GS-5-L}, one can see that only the assumption that the components $\Psi_2$ and $\Psi^0_2$ of the Weyl tensor be generic, and in particular non-vanishing, is needed, and none of the remaining components ($\Psi_3$, $\Psi^0_3$, $\Psi^1_3$,$\Psi^2_3$,$\Psi_4$,$\Psi^1_4$) play any part in the vanishing of the connection components.
\end{rem}

\subsubsection{Five-dimensional generalisation of Petrov type D spacetimes}
In four dimensions, Petrov type D vacuum spacetimes are characterised by the existence of \emph{two} distinct integrable optical structures. In \cite{Mason2010}, it is argued that a $(2m+\epsilon)$-dimensional generalisation of the Petrov type D condition should also give rise to $2^m$ distinct optical structures. Recall that on a five-dimensional spacetime where the metric has the form
\begin{align*}
 g_{a b} = 2 k_{\lp{a}} \ell_{\rp{b}} + 2 m_{\lp{a}} \bar{m}_{\rp{b}} + u_a u_b \, ,
\end{align*}
one can canonically define two almost optical structures $(\mcN_0, \mcL)$ and $(\mcN_1, \mcK)$, where $\mcN_0 := \sspan \{ \ell^a, m^a \}$ and $\mcN_1 := \sspan \{ k^a, m^a \}$.
From Remark \ref{rem-proof}, we know that if all the components of the Weyl tensor vanish with the exception of $\Psi_2$ and $\Psi^0_2$, then $(\mcN_1,\mcK)$ is integrable. But by symmetry, $(\mcN_0,\mcL)$ must also be integrable. We have thus proved the following corollary to Theorem \ref{thm-GS-5-L}.
\begin{cor}
  Let $(\mcM, g)$ be a five-dimensional Lorentzian manifold, and $(\mcN_0, \mcL)$ and $(\mcN_1, \mcK)$ be the almost optical structures defined as above. Suppose that the Weyl tensor and the Cotton-York tensor satisfy
\begin{align*}
 \bm{C} ( \bm{X}, \bm{Y}, \bm{Z}, \cdot ) & = 0 \, , & \bm{A}( \bm{Z}, \bm{X}, \bm{Y}) & = 0 \, ,
\end{align*}
for all $\bm{X}, \bm{Y} \in \Gamma ( \mcN_i ^\perp ) $, $\bm{Z} \in \Gamma (\mcN_i )$, for each $i=0,1$, and that the Weyl tensor does not degenerate otherwise (i.e. generic $\Psi_2, \Psi^0_2 \neq 0$).
Then, the almost optical structures $(\mcN_0, \mcL)$ and $(\mcN_1, \mcK)$ are integrable.
\end{cor}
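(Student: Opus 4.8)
The plan is to derive the corollary from Theorem~\ref{thm-GS-5-L}, from Remark~\ref{rem-proof}, and from the discrete symmetry of the five-dimensional null frame which interchanges the two canonical almost optical structures. The whole point is that imposing algebraic speciality relative to \emph{both} $\mcN_0$ and $\mcN_1$ forces the Weyl tensor down to its two generic components $\Psi_2$ and $\Psi^0_2$, after which Theorem~\ref{thm-GS-5-L} applies twice.

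First I would rewrite the two algebraic speciality hypotheses in the spin coefficient formalism of Appendix~\ref{App-Notation}. For $i=1$ the hypothesis $\bm{C}(\bm{X},\bm{Y},\bm{Z},\cdot)=0$ on $\mcN_1$ is exactly the characterisation \eqref{eq-Weylcond_5_L}, so that $\Psi_0 = \Psi^0_0 = \Psi_{03} = \Psi_1 = \Psi^0_1 = \Psi^1_1 = \Psi^2_1 = \Psi^0_{13} = \Psi^1_{13} = \Psi^2_{13} = \Psi_{14} = 0$. For $i=0$ I would use the observation that interchanging $k^a \leftrightarrow \ell^a$, keeping $m^a$ and $u^a$ fixed, leaves the metric invariant and exchanges $\mcK \leftrightarrow \mcL$, $\mcN_1 \leftrightarrow \mcN_0$; it therefore acts as a relabelling on the spin coefficients and on the Weyl and Cotton--York components, and carries the Ricci and Bianchi identities of Appendices~\ref{App-Ricci} and~\ref{App-Bianchi} to themselves. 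Under this relabelling, condition \eqref{eq-AlgSpW} for $\mcN_0$ is the mirror image of \eqref{eq-Weylcond_5_L}, and hence forces the vanishing of the complementary list of Weyl components (those of boost weight $-2$ and $-1$, together with the three remaining boost-weight-$0$ components and one boost-weight-$+1$ component). Inspecting the component list of Appendix~\ref{App-Notation}, the union of the two vanishing lists annihilates every component of the Weyl tensor except $\Psi_2$ and $\Psi^0_2$; the genericity hypothesis then says precisely that these two are nonvanishing away from isolated points.

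Next I would invoke Remark~\ref{rem-proof}: in the proof of Theorem~\ref{thm-GS-5-L} only the genericity of $\Psi_2$ and $\Psi^0_2$ is used, while $\Psi_3$, $\Psi^0_3$, $\Psi^1_3$, $\Psi^2_3$, $\Psi_4$ and $\Psi^1_4$ never enter the reduced algebraic equations \eqref{eqB47}, \eqref{eqB48}, \eqref{eqB15}, \eqref{eqB43}, \eqref{eqB51}, \eqref{eqB54}, \eqref{eqB13}, \eqref{eqB16}, \eqref{eqB14}, \eqref{eqB44}, \eqref{eqB53} that are solved there. Since by hypothesis the Cotton--York tensor is also degenerate with respect to $\mcN_1$, Theorem~\ref{thm-GS-5-L} applies verbatim and yields that $(\mcN_1,\mcK)$ is integrable, i.e. the connection components satisfy \eqref{eq-folcond_5_L}, and this holds in any null frame by the frame independence remarked at the end of that proof. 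Finally, applying the $k^a \leftrightarrow \ell^a$ symmetry once more: since the hypotheses on the Weyl and Cotton--York tensors have been imposed symmetrically for $i=0$ and $i=1$, the identical chain of deductions with the roles of $\mcN_0$ and $\mcN_1$ exchanged shows that $(\mcN_0,\mcL)$ is integrable as well.

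I expect the one genuinely delicate point to be the bookkeeping behind the $k^a \leftrightarrow \ell^a$ symmetry: one must check that this interchange is a well-defined involution on the families of spin coefficients, Weyl components and Cotton--York components as set up in the appendices, and that it sends each of the eleven Bianchi-identity equations used in the proof of Theorem~\ref{thm-GS-5-L} to its mirror image, so that the $i=0$ conclusion is a legitimate transcription of the $i=1$ one. Once that is verified the corollary is immediate; alternatively, one may bypass the symmetry argument entirely and simply rerun the computation of Theorem~\ref{thm-GS-5-L} with $\mcN_0$ in place of $\mcN_1$, which only duplicates the algebra already carried out.
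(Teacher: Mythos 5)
Your proposal is correct and follows essentially the same route as the paper: the paper likewise observes that imposing algebraic speciality relative to both $\mcN_0$ and $\mcN_1$ kills every Weyl component except $\Psi_2$ and $\Psi^0_2$, invokes Remark~\ref{rem-proof} to conclude that $(\mcN_1,\mcK)$ is integrable, and then appeals to the $k^a\leftrightarrow\ell^a$ symmetry for $(\mcN_0,\mcL)$. Your explicit attention to verifying that this interchange is a genuine involution on the spin coefficients, Weyl components and Bianchi equations is a point the paper leaves implicit with the words ``by symmetry,'' but it does not change the argument.
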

In this case, the Bianchi identity simplifies drastically:
\begin{align}\tag{\ref{eqB21}}
D \Psi_2 & = - A _{ 1  1  \tilde{1} } - A _{ 2  1  \bar{2} } - \rho \Psi^0_2 - 3 \rho \Psi_2 \\
\tag{\ref{eqB11}}
\tilde{D} \Psi_2 & =A _{ \tilde{1}  1  \tilde{1} } - A _{ \bar{2}  \tilde{1}  2 } - \mu \Psi^0_2 - 3 \mu \Psi_2 \\
\tag{\ref{eqB22}}
\delta \Psi_2 & =A _{ 1  \tilde{1}  2 } - A _{ 2  2  \bar{2} } - \tau \Psi^0_2 + 3 \tau \Psi_2 \\
\tag{\ref{eqB12}}
\bar{\delta} \Psi_2 & =A _{ \tilde{1}  1  \bar{2} } + A _{ \bar{2}  2  \bar{2} } - \pi \Psi^0_2 + 3 \pi \Psi_2 \\
\tag{\ref{eqB46}}
\mathring{\delta} \Psi_2 & =A _{ \tilde{1}  1  0 } + A _{ \bar{2}  2  0 } +  ( \mqoppa - \bar{\upsilon} )  \Psi^0_2 +  ( \mqoppa + \bar{\upsilon} )  \Psi_2
\end{align}
\begin{align}\tag{\ref{eqB75}}
D \Psi^0_2 & = - A _{ 0  1  0 } - A _{ 1  1  \tilde{1} } - 2 \yogh \Psi^0_2 - \yogh \Psi_2 - \yogh \bar{\Psi}_2 \\
\tag{\ref{eqB83}}
 - \tilde{D} \Psi^0_2 & =A _{ 0  \tilde{1}  0 } - A _{ \tilde{1}  1  \tilde{1} } + 2 \wynn \Psi^0_2 + \wynn \Psi_2 + \wynn \bar{\Psi}_2 \\
\tag{\ref{eqB82}}
 - \delta \Psi^0_2 & = - A _{ 0  2  0 } - A _{ 2  2  \bar{2} } + 2 \mae \Psi^0_2 - \mae \Psi_2 - \mae \bar{\Psi}_2 \\
\label{eqXtra55-70}
\mathring{\delta} \Psi^0_2  & = - A _{ 1  \tilde{1}  0 } - A _{ \tilde{1}  1  0 } +  (  \mqoppa +  \mstigma + \upsilon + \bar{\upsilon} )  \Psi^0_2 + \tfrac{1}{2} ( \mqoppa + \mstigma - \upsilon - \bar{\upsilon} ) ( \Psi_2 + \bar{\Psi}_2 )
\end{align}
where equation \eqref{eqXtra55-70} can be obtained from equations \eqref{eqB55} and \eqref{eqB70}.
Other relations can be obtained using the following equations
\begin{align}
\tag{\ref{eqB19}}
0 & =A _{ 0  2  \bar{2} } +  ( \upsilon - \bar{\upsilon} )  \Psi^0_2 +  ( \mqoppa - \upsilon )  \Psi_2 +  (  - \mqoppa + \bar{\upsilon} )  \bar{\Psi}_2 \\
\tag{\ref{eqB34}}
0 & =A _{ 0  1  \tilde{1} } +  ( \mqoppa - \mstigma )  \Psi^0_2 +  ( \mqoppa - \upsilon )  \Psi_2 +  (  - \mstigma + \upsilon )  \bar{\Psi}_2 \\
\tag{\ref{eqB35}}
0 & = - A _{ 0  2  \bar{2} } +  (  - \upsilon + \bar{\upsilon} )  \Psi^0_2 +  ( \mstigma - \bar{\upsilon} )  \Psi_2 +  (  - \mstigma + \upsilon )  \bar{\Psi}_2 \\
\tag{\ref{eqB76}}
0 & = - A _{ 2  1  \tilde{1} } +  (  - 2 \bar{\pi} + 2 \tau )  \Psi^0_2 - \mae \Psi_2 + \mae \bar{\Psi}_2 \\
\tag{\ref{eqB81}}
0 & = - A _{ 1  2  \bar{2} } +  ( 2 \rho - 2 \bar{\rho} )  \Psi^0_2 - \yogh \Psi_2 + \yogh \bar{\Psi}_2 \\
\tag{\ref{eqB89}}
0 & = - A _{ \tilde{1}  2  \bar{2} } +  (  - 2 \mu + 2 \bar{\mu} )  \Psi^0_2 + \wynn \Psi_2 - \wynn \bar{\Psi}_2
\end{align}
When the Einstein equations are imposed, the Cotton-York tensor vanishes and these equations acquire an even simpler form.

\begin{rem}
 Evidently, whether the Weyl tensor is algebraically special with respect to $(\mcN_1, \mcK)$ only, or to both $(\mcN_0, \mcL)$ and $(\mcN_1, \mcK)$, the eigenvalues of the Weyl curvature operator remain the same. In the latter case, the Segre characteristic becomes $[(11)(11)(11)(11)(11)11]$.
\end{rem}

\begin{exa}[Five-dimensional (Lorentzian) Kerr-NUT-AdS black hole]\label{exa-KerrNUTAdS}
 The five-dimensional\linebreak (Lorentzian) Kerr-NUT-AdS black hole metric can be cast in the form
\begin{align*}
 \bm{g} & = 2 \bm{\theta}^1 \odot \tilde{\bm{\theta}}^{\tilde{1}} + 2 \bm{\theta}^2 \odot \bar{\bm{\theta}}^{\bar{2}} + \bm{e}^0 \otimes \bm{e}^0
\end{align*}
where the basis $1$-forms $\{ \bm{\theta}^1 , \tilde{\bm{\theta}}^{\tilde{1}} , \bm{\theta}^2 , \bar{\bm{\theta}}^{\bar{2}} , \bm{e}^0 \}$ are given by
\begin{align*}
 \bm{\theta}^1 & := \frac{1}{\sqrt{2 Q}} \left( - \dd r + Q \left( \dd \psi + y^2 \dd \phi \right) \right) \, , & \bm{\theta}^2 & := \frac{1}{\sqrt{2 P}} \left( \ii \dd y + P \left( \dd \psi - r^2 \dd \phi \right) \right) \, , \\
 \tilde{\bm{\theta}}^{\tilde{1}} & := \frac{1}{\sqrt{2 Q}} \left( \dd r + Q \left( \dd \psi + y^2 \dd \phi \right) \right) \, , & \bm{e}^0 & := \frac{\sqrt{c}}{ry} \left( \dd \psi + (y^2 - r^2) \dd \phi - y^2 r^2 \dd \chi \right) \, ,
\end{align*}
and
\begin{align*}
 Q & := \frac{X}{U} \, , & X & := -a r^2 + b r^4 - \frac{c}{r^2} - 2 A \, , & U & := y^2 + r^2 \, , \\
 P & := \frac{Y}{V} \, , & Y & := a y^2 + b y^4 + \frac{c}{y^2} - 2 B \, , & V & := - ( y^2 + r^2 )\, .
\end{align*}
Here, $a$ and $c$ are constants related to the angular momenta of the black hole solutions, $b$ is related to the cosmological constant, and $A$ and $B$ are the mass and NUT parameters. It is shown \cite{Mason2010} to be algebraically special relative to the two optical structures\footnote{Here, the basis  vector fields are denoted by a subscript.} $\mcN_0 = \left\{ \tilde{\bm{\theta}}_{\tilde{1}} ,\bm{\theta}_2 \right\}$ and $\mcN_1 = \left\{ \bm{\theta}_{1} ,\bm{\theta}_2 \right\}$. Hence, it has only two non-vanishing components, which are found to be
\begin{align*}
 \Psi_2 & = 2 (A-B) \frac{(r - \ii y)^2}{(y^2 + r^2)^3} \, , &  \Psi^0_2 = 2 \frac{(A-B)}{(y^2 + r^2)^2} \, .
\end{align*}
At a glance, one notices that
\begin{align} \label{eq-Kerr5Cond}
 \Psi^0_2 & = \sqrt{\Psi_2 \bar{\Psi}_2 } \, .
\end{align}
This relation in fact implies that the  Weyl conformal spinor $\Psi _{A B C D}$ can be factorised into the square of a symmetric $2$-valent spinor in normal form, i.e. it is of De Smet type $(\underline{2},\underline{2})$.
\end{exa}

\begin{rem}\label{rem-Kerr5} It is clear from this example that the Kerr-NUT-AdS metric is only one of many possible metrics of `Petrov type D'. Weakening the relation \eqref{eq-Kerr5Cond} should lead to new families of five-dimensional metrics, which are not algebraically special in the sense of De Smet.
\end{rem}

\subsubsection{Other assumptions}
Further degeneracy of the Weyl tensor will \emph{not} guarantee the integrability of the optical structure in general. We shall make no attempt to give a complete study of every case in this paper, but we shall briefly comment on four cases to illustrate the situation:
\begin{itemize}
 \item By a slight alteration of its proof, one can demonstrate that Theorem \ref{thm-GS-5-L} remains valid under the stronger assumption that \emph{either} $\Psi_2$ \emph{or} $\Psi^0_2$ vanishes. Such assumptions may lead to the vanishing of more connection components if more components of the Cotton-York vanish (this can be seen directly from the simplified Bianchi identity).
 \item Assuming further $\Psi_2 = \Psi^0_2 = 0$ and, for simplicity, Ricci-flatness, and no other components vanishing, one can show that this does not put the connection component $\phi$ under any algebraic restriction, and thus the orthogonal complement of the null distribution itself may not be integrable. However, enough connection components vanish to conclude that the null distribution remains integrable.
 \item Choosing $\Psi_2 = \Psi^0_2 = \Psi_3 = \Psi^1_3 = \Psi^2_3 = 0$ and Ricci-flatness, one checks that there are no algebraic equations involving either $\phi$ or $\chi$, and so the null distribution is not necessarily integrable.
 \item At the other extreme, one can consider the case of a vacuum spacetime, where all components of the Weyl tensor vanish except for $\Psi^0_3$. One can show that such spacetime must be of De Smet type $(\underline{1 1}, \underline{1 1})$, i.e. its Weyl tensor is solely determined by a spinor field. As explained in Remark \ref{rem-spinor}, this spinor field also defines a null distribution of real index $1$ of the complexifed tangent bundle, and hence an almost optical structure $(\mcN,\mcK)$. This is one of the cases studied in \cite{Godazgar2010}. The Bianchi identity then tells us that not only does the spacetime admit an optical structure (i.e. equations \eqref{eq-folcond_5_L} are satisfied), but 
 also the congruence of rays generated by $\mcK$ is divergence-free ($\yogh = \rho = 0$, i.e. $\nabla^a k_a = 0$ , for any $k^a \in \Gamma (\mcK)$), and $\upsilon - \bar{\upsilon} = 0$.
\end{itemize}

\section{Optical structures and the black ring}
\subsection{An algebraically general spacetime admitting optical structures}
While we have shown that there exists an algebraic condition on the Weyl tensor, which guarantees the existence of an optical structure in five dimensions, we now use the black ring solution found in \cites{Emparan2002} as a counterexample to the converse.

The five-dimensional black ring metric takes the form
\begin{align*}
 \bm{g} & = 2 \bm{\theta}^{1} \odot \tilde{\bm{\theta}}^{\tilde{1}} + 2 \bm{\theta}^{2} \odot \tilde{\bm{\theta}}^{\tilde{2}} + \bm{e}^0 \otimes \bm{e}^0  \, ,
\end{align*}
where the basis $1$-forms $\{ \bm{\theta}^1 , \tilde{\bm{\theta}}^{\tilde{1}} , \bm{\theta}^2 , \tilde{\bm{\theta}}^{\tilde{2}} , \bm{e}^0 \}$ are given by
\begin{align*}
 \bm{\theta}^1 & := \frac{R \sqrt{-F(x) G(y)}}{\sqrt{2} (x-y)} \left(\frac{\sqrt{F(y)}}{G(y)} \dd y + \ii  \dd \psi \right) \, , &
 \tilde{\bm{\theta}}^{\tilde{1}} & := \frac{R \sqrt{-F(x) G(y)}}{\sqrt{2} (x-y)} \left( \frac{\sqrt{F(y)}}{G(y)} \dd y - \ii  \dd \psi \right) \, , \\
 \bm{\theta}^2 & := \frac{R F(y)\sqrt{G(x)}}{\sqrt{2} (x-y) \sqrt{F(x)}} \left( \frac{\sqrt{F(x)}}{G(x)} \dd x + \ii  \dd \phi \right) \, , &
 \tilde{\bm{\theta}}^{\tilde{2}} & := \frac{R F(y)\sqrt{G(x)}}{\sqrt{2} (x-y) \sqrt{F(x)}} \left( \frac{\sqrt{F(x)}}{G(x)} \dd x - \ii  \dd \phi \right) \, , \\
 \bm{e}^0 & := \sqrt{-\frac{F(x)}{F(y)}} \left( \dd t + R \sqrt{\lambda \nu} (1 + y ) \dd \psi \right) \, , &
\end{align*}
and
\begin{align*}
 F(\xi) & := 1 - \lambda \xi \, , & G(\xi) & := (1 - \xi^2)(1-\nu \xi) \, .
\end{align*}
Here, $R$, $\lambda$ and $\mu$ are positive constants with $\lambda, \nu <1$.

In terms of the dual basis of vector fields $\{ \bm{\theta}_1 , \tilde{\bm{\theta}}_{\tilde{1}} , \bm{\theta}_2 , \tilde{\bm{\theta}}_{\tilde{2}} , \bm{e}_0 \}$. the canonical distributions
\begin{align*}
 \mcN_1 & := \sspan\{ \bm{\theta}_1, \tilde{\bm{\theta}}_{\tilde{2}} \} \, , & \mcN_2 & := \sspan \{ \tilde{\bm{\theta}}_{\tilde{1}}, \bm{\theta}_2 \} \, , \\ 
 \mcN_0 & := \sspan \{ \tilde{\bm{\theta}}_{\tilde{1}}, \tilde{\bm{\theta}}_{\tilde{2}} \} & \mcN_{12} & := \sspan\{ \bm{\theta}_1, \bm{\theta}_2 \} \, ,
\end{align*}
are maximal totally null. Depending on the range of $x$ and $y$, the basis $1$-forms may be real or complex, and accordingly the eigenvalues of the metric will change signs \cite{Pravda2005}. For specificity, we shall consider only the following two regions:
\begin{align*}
 \mcA & := \left\{ (x,y,\phi,\psi,t) : -1 < x < 1 \, , y < -1 \right\} \, , &
 \mcB & := \left\{ (x,y,\phi,\psi,t) : -1 < x < 1 \, , \tfrac{1}{\lambda} < y < \tfrac{1}{\nu} \right\} \, .
\end{align*}
 In both regions, the metric is of Lorentzian signature. Then,
\begin{itemize}
 \item in region $\mcA$, we have $\overline{\bm{e}_0} = - \bm{e}_0$,
$\overline{\mcN_{1}} =\mcN_2$ and $\overline{\mcN_0} =\mcN_{12}$, so that $\mcN_0$ and $\mcN_1$ are of real index $0$.
 \item in region $\mcB$, $\overline{\bm{e}_0} = \bm{e}_0$;
$\overline{\mcN_{1}} =\mcN_{12}$ and $\overline{\mcN_{0}} =\mcN_2$, so that $\mcN_0$ and $\mcN_1$ are of real index $1$, and thus define two distinct almost optical structures.
\end{itemize}
Now, computing the structure equations gives
\begin{align*}
  \dd \bm{e}^0 & = \left( - \frac{ \lambda (x-y) \sqrt{G(x)}}{2 \sqrt{2}  R F(x) F(y)} ( \bm{\theta}^1 + \tilde{\bm{\theta}}^{\tilde{1}} ) + \frac{ \lambda (x-y) \sqrt{G(y)}}{2 \sqrt{2}  R \sqrt{-F(x)} F(y) \sqrt{F(y)}} ( \bm{\theta}^2 + \tilde{\bm{\theta}}^{\tilde{2}} ) \right) \wedge \bm{e}^0 \\
   & \qquad - \ii \frac{\lambda \nu (x-y)^2 \sqrt{G(x)} \sqrt{G(y)} }{R \sqrt{F(x)} \sqrt{F(y)} F(y) \sqrt{G(y)}} \bm{\theta}^2 \wedge \tilde{\bm{\theta}}^{\tilde{2}} \, , \\
  \dd \bm{\theta}^1 & = \left( - \frac{\sqrt{G(x)} ( 2 F(x) + \lambda (x-y))}{ \sqrt{2} R F(x) F(y)} ( \bm{\theta}^2 + \tilde{\bm{\theta}}^{\tilde{2}} ) \right. \\
   & \qquad \left. + \frac{\sqrt{-G(y)}}{2 \sqrt{2} R \sqrt{F(x)} \sqrt{F(y)} G(y)} \left( 2 G(y) + \derv{}{y} G(y) (x-y) \right) \tilde{\bm{\theta}}^{\tilde{1}} \right) \wedge \bm{\theta}^1 \, , \\
\dd \bm{\theta}^2 & = \left( \frac{1}{\sqrt{2}} \frac{(F(y) - \lambda(x-y))\sqrt{-G(y)}}{R \sqrt{F(x)}\sqrt{F(y)} F(y)} (\bm{\theta}^1 + \tilde{\bm{\theta}}^{\tilde{1}}) \right. \\
   & \qquad \left. + \frac{1}{2 \sqrt{2} R F(y) F(x) \sqrt{G(x)}} \left(-2 G(x)F(x) + \derv{}{x}G(x) F(x) (x-y) + \lambda G(x) (x-y) \right) \tilde{\bm{\theta}}^{\tilde{2}} \right) \wedge \bm{\theta}^2 \, ,
\end{align*}
and similarly for $\tilde{\bm{\theta}}^{\tilde{1}}$ and $\tilde{\bm{\theta}}^{\tilde{2}}$. By the dual form of the Frobenius theorem, one therefore sees that both distributions $\mcN_0$ and $\mcN_1$ and their orthogonal complements $(\mcN_0)^\perp$ and $(\mcN_1)^\perp$ are integrable. Hence,
\begin{itemize}
 \item in region $\mcA$, $\mcN_0$ and $\mcN_1$ define two distinct CR structures;
 \item in region $\mcB$, $\mcN_0$ and $\mcN_1$ define two distinct optical structures.
\end{itemize}

\begin{rem}
 The structure equations tells further us that the basis $1$-forms $\bm{\theta}^1$, $\tilde{\bm{\theta}}^{\tilde{1}}$, $\bm{\theta}^2$ and $\tilde{\bm{\theta}}^{\tilde{2}}$ are hypersurface orthogonal. Specialising to the case where $\mcN_0$ and $\mcN_1$ are of real index $0$, one can then introduce complex coordinates $z$ and $w$ defined by
\begin{align*}
  \dd z & := \frac{\sqrt{F(x)}}{G(x)} \dd x + \ii \dd \phi \, , & \dd w & := \frac{\sqrt{F(y)}}{G(y)} \dd y + \ii \dd \psi \, .
\end{align*}
The case where $\mcN_0$ and $\mcN_1$ are of real index $1$ is similar except that one obtains one complex coordinate, and two real null coordinates.
\end{rem}

It is however well-known that the black ring metric is generically of CMPP type $\mathrm{I}_i$, and algebraically general in the De Smet classification \cites{Pravda2005,Godazgar2010}. In fact, a \emph{Maple} computation shows that
the Weyl tensor is \emph{not} algebraically special with respect to any of the null structures. In the notation of Appendix \ref{App-Notation}, and restricting ourselves to region $\mcB$, we find
\begin{align}
 \Psi_0 = \Psi^0_0 = \Psi_{ 0 3 } = 0 = \Psi_4 = \Psi^0_4 = \Psi_{ 1 4 } \, , \label{eq-trivial}
\end{align}
which are none other than the integrability conditions for the optical structures $(\mcN_0, \mcL)$ and $(\mcN_1, \mcK)$. In addition, the remaining vanishing components lead to the relations
\begin{align}
 \Psi_2 & = \bar{\Psi}_2 \, , & \Psi^2_1 & = \bar{\Psi}^2_1 \, , & \Psi^2_3 & = \bar{\Psi}^2_3 \, , & \Psi^1_{1 3} & = - \bar{\Psi}^2_{1 3} \, , \label{eq-RWC}
\end{align}
i.e. some components of the Weyl tensor that are generically complex satisfy some reality conditions.

\subsection{A five-dimensional spacetime with eight null structures}
In four dimensions, the maximal number of optical or null structures that a non-trivial spacetime can admit is four\footnote{Here, we have counted complex pairs of null structures as two distinct null structures.}, in which case the spacetime is of Petrov type D. In \cite{Mason2010}, we argued that the generalisation of Petrov Type D spacetimes in $(2m + \epsilon)$ dimensions is characterised by the existence of $2^m$ null structures. Here, we show that a five-dimensional spacetime may admit more than four null structures.

It is shown in \cite{Emparan2006}, that setting $\lambda=1$ in the black ring metric, one recovers the five-dimensional Myers-Perry black hole with only \emph{one} rotation coefficient. In this case, one can choose \cite{Pravda2005} 
\begin{multline*}
 \bm{k}^\pm := \frac{1}{(x^2-1)(-1 + \nu y)} \left( \frac{\nu y x - y + \nu x + 1 - 2 \nu y}{x-y} R \parderv{}{t} - \sqrt{\nu} \parderv{}{\psi} \right) \\
\pm \sqrt{\frac{\nu x - 1}{(x-y)(y-1)}} \left( \parderv{}{x} + \frac{y^2-1}{x^2-1} \parderv{}{y} \right) \, ,
\end{multline*}
which turn out to be Weyl aligned null directions (WAND) for the Weyl tensor.   Now, it is shown in \cite{Mason2010} that both WANDs $\bm{k}^+$ and $\bm{k}^-$ are sections of two real null lines bundles $\mcK^+$ and $\mcK^-$ determined by two independent optical structures $(\mcN^\pm,\mcK^\pm)$, (i.e. null distributions of real index $1$). But it is clear that the null structures $\mcN_0$ and $\mcN_1$ of real index $0$, which we discovered in the generic black ring, subsist in this limit. In fact, in terms of the null basis, one has
\begin{multline*}
 \bm{k}^\pm = - \frac{R \sqrt{1-y}}{\sqrt{1-x}(x+1)(x-y)} \bm{e}_0 \mp \frac{R \sqrt{1-y}}{\sqrt{2} \sqrt{(1-x^2)(x-y)^3}} \left( \bm{\theta}_2 + \tilde{\bm{\theta}}_{\tilde{2}} \right) \\
\mp \frac{R \sqrt{y^2-1}}{\sqrt{2} \sqrt{(1-x)(1 - \nu y)(x-y)^3}(x+1)} \left( \sqrt{1-\nu x} \mp \ii \sqrt{\nu (x-y)} \right) \bm{\theta}_1 \\ \mp \frac{R \sqrt{y^2-1}}{\sqrt{2} \sqrt{(1-x)(1- \nu y)(x-y)^3}(x+1)} \left( \sqrt{1-\nu x} \pm \ii \sqrt{\nu (x-y)} \right) \tilde{\bm{\theta}}_{\tilde{1}} \, .
\end{multline*}
Therefore, \emph{the five-dimensional Myers-Perry black hole with one rotation coefficient admits two pairs of conjugate null distributions ($\mcN^\pm$, $\overline{\mcN}^\pm$) of real index $1$ and another two pairs of null distributions ($\mcN_0$, $\mcN_1$, $\mcN_2$, $\mcN_{12}$) of real index $0$}.

\section{Conclusion and outlook}
We have introduced the notion of an optical structure $(\mcN, \mcK)$  on odd-dimensional spacetimes, and we have proposed an algebraic condition on the Weyl tensor, which generalises the Petrov type II condition, and we have shown that in five dimensions, the corresponding Weyl curvature operator admits a remarkably simple Jordan normal form. This definition is justified by Theorem \ref{thm-GS-5-L}, which extends the celebrated Goldberg-Sachs theorem to five dimensions. To be precise, an algebraically special spacetime with respect to $\mcN$, provided the Cotton-York is degenerate on $\mcN$, is generically endowed with an optical structure. As a result, the Ricci and Bianchi identities simplify considerably, which should open up new avenues leading to the discovery of five-dimensional solutions to Einstein's field equations in a systematic fashion, as in the four-dimensional case \cite{Stephani2003}.

Further, we have briefly highlighted that stronger algebraic conditions on the Weyl tensor, which includes a five-dimensional `Petrov' type D condition, may or may not ensure the integrability of an almost optical structure. It would be interesting to conduct a comprehensive study of these special cases.

It is also worth emphasising that the existence of an optical structure on a spacetime guarantees the existence of complex coordinates on the null foliation of complexified spacetime. It is anticipated that under certain conditions these coordinates could provide suitable (semi-)complex coordinates on the real spacetime, thus reducing the number of independent components of the metric.

While Theorem \ref{thm-GS-5-L} is stated and proved in the context of five-dimensional spacetimes, the underlying geometry is more easily formulated in terms of null distributions in the complexification. In fact, we have checked using \emph{Mathematica} that the above theorem \emph{does} hold in the complex case too. More generally, the following conjecture has been verified to be true in dimensions six and seven.
\begin{conjec}
 Let $(\mcM , \bm{g})$ be a $(2m+\epsilon)$-dimensional complex Riemannian manifold. Let $\mcN$ denote a maximal totally null subbundle of the tangent bundle, i.e. $\mcN \subset \mcN^\perp$ and $\mcN$ has rank $m$. Suppose that the Weyl tensor and the Cotton-York tensor satisfy
\begin{align*}
 \bm{C} ( \bm{X}, \bm{Y}, \bm{Z}, \cdot ) & = 0 \, , & \bm{A} (\bm{Z}, \bm{X}, \bm{Y}) & = 0 \, ,
\end{align*}
respectively, for all $\bm{X}, \bm{Y} \in \mcN^\perp$ and $\bm{Z} \in \mcN$. Then, assuming the Weyl tensor does not degenerate any further, the distributions $\mcN$ and $\mcN^\perp$ are integrable in the sense of Frobenius, i.e.
\begin{align*}
 [ \mcN , \mcN ] & \in \mcN \, , &  [\mcN^\perp, \mcN^\perp] & \in \mcN^\perp \, .
\end{align*}
\end{conjec}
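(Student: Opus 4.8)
The plan is to reproduce, in arbitrary dimension, the mechanism behind Theorem~\ref{thm-GS-5-L}: convert the curvature hypotheses into a system of \emph{purely algebraic} relations among connection components and then eliminate. One works in the complexification and fixes a null coframe adapted to $\mcN$ — a rank-$m$ frame $\{\bm{e}_i\}$ spanning $\mcN$, a dual set $\{\tilde{\bm{e}}_i\}$ with $\bm{g}(\bm{e}_i,\tilde{\bm{e}}_j)=\delta_{ij}$, and, when $\epsilon=1$, a unit vector $\bm{u}$ — so that $\mcN^\perp=\mcN\oplus\langle\bm{u}\rangle^{\epsilon}$ and $(\mcN^\perp)^\perp=\mcN$. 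By the dual form of the Frobenius theorem, integrability of $\mcN$ and of $\mcN^\perp$ is equivalent to the vanishing of a distinguished family of connection coefficients, namely the $\tilde{\bm{e}}$- and $\bm{u}$-components of $\nabla_{\bm{e}_i}\bm{e}_j$; this is the higher-dimensional analogue of the list $\kappa=\sigma=\mdigamma=\phi=\chi=\psi=\eta=0$ in \eqref{eq-folcond_5_L}. The first step is to record, in this frame, precisely which Weyl and Cotton--York components are annihilated by $\bm{C}(\bm{X},\bm{Y},\bm{Z},\cdot)=0$ and $\bm{A}(\bm{Z},\bm{X},\bm{Y})=0$ for $\bm{X},\bm{Y}\in\mcN^\perp$, $\bm{Z}\in\mcN$, generalising \eqref{eq-Weylcond_5_L}; note that the obstruction \eqref{eq-int-cond-Weyl} is among these, so it is automatic.

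The engine is the twice-contracted second Bianchi identity, written in the form relating $\nabla_{[a}C_{bc]de}$ (equivalently $\nabla^{a}C_{abcd}$) to the Cotton--York tensor $\bm{A}$, and decomposed in the adapted frame. The structural point to be proved first, at the level of grading (boost) weights, is that every term in this decomposition that carries a derivative of a \emph{vanishing} Weyl component, or of a vanishing Cotton--York component, is itself zero; hence a subfamily of the Bianchi equations collapses to algebraic identities that are homogeneous and bilinear in (bad connection component)$\times$(surviving Weyl component) — exactly as in the displayed algebraic equations in the proof of Theorem~\ref{thm-GS-5-L}. One then carries out the elimination in the same order as in that proof: first the geodesy-type coefficients, then the shear-type coefficients, then the rest, each step dividing a bilinear relation by a non-vanishing Weyl component. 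Here the hypothesis that the Weyl tensor ``does not degenerate any further'' guarantees that the relevant surviving components are non-zero away from isolated points; frame-independence of the Weyl conditions (from the gauge transformation laws, as invoked at the end of the proof of Theorem~\ref{thm-GS-5-L}) then propagates the vanishing of the connection coefficients to every adapted frame, and conformal invariance follows from the transformation law of the Bianchi identity. This yields $[\mcN,\mcN]\subset\mcN$ and $[\mcN^\perp,\mcN^\perp]\subset\mcN^\perp$.

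The main obstacle is that this program is not manifestly \emph{uniform} in $m$. As the dimension grows, the Weyl and Bianchi tensors acquire many more components, and one must verify (i) that enough algebraic Bianchi equations survive to pin down \emph{all} the distinguished connection coefficients — not merely those controlling the integrability of $\mcN$ by itself, for which weaker hypotheses already suffice in five dimensions — and (ii) that the naive reading of ``does not degenerate any further'' (component-wise non-vanishing in the adapted frame) is the correct genericity condition. In four dimensions the Petrov type~II hypothesis is a multiplicity statement about principal null directions rather than a condition on frame components, and the faithful higher-dimensional analogue may have to be expressed as a non-degeneracy of the Weyl curvature \emph{operator} (the Segre type $[2\,2\,2\,(1\,1)\,1\,1]$ found in five dimensions, and its higher analogues); matching the component bookkeeping to such an invariant statement is the delicate part, and is presumably why the conjecture has so far been checked only dimension by dimension with computer algebra.

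A route that would avoid the case analysis is to work with the pure spinor field $\bm{\xi}$ defining $(\mcN,\mcK)$, as in Remark~\ref{rem-spinor}. There the integrability of $\mcN$ and $\mcN^\perp$ is the vanishing of the two projections of $\nabla\bm{\xi}$ written in that remark, and one would aim to show that this obstruction satisfies a first-order, curvature-driven transport relation extracted from the Bianchi identity whose source term is exactly the combination of Weyl and Cotton--York components killed by the hypotheses, forcing it to vanish identically. Producing such a dimension-independent spinorial propagation identity is, I expect, where the real difficulty lies.
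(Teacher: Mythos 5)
You should first note that the paper offers no proof of this statement: it is explicitly labelled a conjecture, established only in five dimensions (Theorem \ref{thm-GS-5-L}), verified by computer algebra in dimensions six and seven, and otherwise deferred to a future publication. Your outline reproduces exactly the mechanism of the paper's five-dimensional proof --- translate the hypotheses into the vanishing of a list of frame components of the Weyl and Cotton--York tensors generalising \eqref{eq-Weylcond_5_L}, observe that a subfamily of the Bianchi identities then collapses to bilinear algebraic relations of the form (obstructing connection component) times (surviving Weyl component), and eliminate using the genericity assumption together with frame- and conformal-invariance of the hypotheses. So in approach you have matched the paper; you have not found an alternative route.

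The genuine gap is the one you yourself flag, and it is worth being precise about why it is not a formality. Nothing in your argument establishes that, for arbitrary $m$, the algebraic subsystem extracted from the Bianchi identity contains \emph{enough independent relations} to force the vanishing of \emph{all} the connection components obstructing integrability of both $\mcN$ and $\mcN^\perp$ (the higher-dimensional analogue of the full list \eqref{eq-folcond_5_L}, not merely of \eqref{eq-quasifolcond_5_L}). In five dimensions this is checked by explicit inspection of equations \eqref{eqB47}, \eqref{eqB48}, \eqref{eqB15}, \eqref{eqB51}, \eqref{eqB54}, \eqref{eqB13}, \eqref{eqB16}, \eqref{eqB14} and \eqref{eqB53}, and the paper's remark that relaxing the hypotheses by the single component $\Psi_{1 4}$ already leaves $\phi$ unconstrained shows the count is tight: the survival of sufficiently many algebraic equations in higher dimensions cannot be assumed. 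Your boost-weight observation shows that the relevant Bianchi components \emph{are} algebraic, but not that they are numerous or independent enough to complete the elimination, nor that the component-wise reading of ``does not degenerate any further'' is the right invariant genericity condition beyond dimension five. Since that is precisely the step the paper can only certify dimension by dimension, your proposal is an accurate description of the strategy and of why the statement remains open, but it is not a proof.
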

As in the five-dimensional case, there are real versions of the conjecture, which can be obtained by taking appropriate real slices. In a forthcoming paper \cite{Taghavi-Chabert}, we shall lay out the details of the conjecture together with a proof for the six- and seven-dimensional versions of the theorem.

We have also shown that the five-dimensional black ring admits optical structures in certain regions of spacetime, and integrable null distributions of real index $0$ in others. Yet, this solution is \emph{not} algebraically special. This counterexample thus invalidates the converse of Theorem \ref{thm-GS-5-L}, in contradistinction to the four-dimensional case. We also pointed out that the Kerr black hole limit of the black ring solution retains the integrable null distributions of real index $0$, while gaining a pair of optical structures.
It would then be instructive to determine which geometric properties differentiate optical structures that are algebraically special, from those that are not. While not every optical structure can be obtained from Theorem \ref{thm-GS-5-L}, algebraically special spacetimes should nevertheless provide a wide class of solutions admitting optical structures. In \cite{Taghavi-Chaberta}, we study a special class of higher-dimensional Kerr-Schild metrics
\begin{align*}
 g_{a b} = \eta_{a b} + 2 H k_a k_b \, ,
\end{align*}
where $\eta_{a b}$ is the flat metric, $H$ a function, and $k^a$ is a real null geodesic vector field defined by an optical structure $(\mcN,\mcK)$. It turns out that such metrics are algebraically special with respect to $(\mcN,\mcK)$.

\appendix

\section{Setup and notation} \label{App-Notation}
In this appendix, we set up the notation used throughout this paper. We essentially follow and extend the index notation and the spin coefficient convention of \cite{Penrose1984}. While we have avoided the use of spinors, our notation is inspired by \cite{Garc'ia-ParradoG'omez-Lobo2009}, which is an extension of the Penrose-Newman formalism using four-spinor calculus in five dimensions. There are some minor differences between their convention and ours, which will be pointed out in due course.
 \subsection{Null basis}
Let $(\mcM , \bm{g} )$ be a five-dimensional Lorentzian manifold. The metric can then be put in the form
\begin{align*}
 g_{a b} = 2 k_{\lp{a}} \ell _{\rp{b}} + 2 m_{\lp{a}} \bar{m} _{\rp{b}} + u_a u_b \, ,
\end{align*}
where $k_a$, $\ell_a$ and $u_a$ are real, and $m_a$ complex $1$-forms. As usual, indices are raised and lowered via the metric, so $\ell^a$, $k^a$, $\bar{m}^a$, $m^a$ and $u^a$ are basis vector fields (in \cite{Garc'ia-ParradoG'omez-Lobo2009}, $u^a$ is chosen to be of norm $\sqrt{2}$). For future use, we note that this naturally defines two canonical almost optical structures $(\mcN_0 , \mcL)$ and $(\mcN_1, \mcK)$, where
\begin{align*}
 \mcN_0 & = \sspan\{ \ell^a, \bar{m}^a \} \, , & \mcL & = \sspan\{ \ell^a \} \, , &
 \mcN_1 & = \sspan\{ k^a, \bar{m}^a \} \, ,  & \mcK & = \sspan\{ k^a \} \, .
\end{align*}
It is convenient to introduce numerical indices, which are particularly useful for taking components of various tensorial quantities. Writing, $\delta \ind*{_a ^b}$ in place of $g \ind{_a ^b}$, we assign the following numerical values to the basis vectors and $1$-forms:
\begin{align*}
 	k^a & = \delta \ind*{^a _1} \, , & \ell^a & = \delta \ind*{^a _\tldo} \, , & \bar{m}^a & = \delta \ind*{^a _2} \, , & u^a & = \delta \ind*{^a _0} \, , & \mbox{and} & & 
	k_a & = \delta \ind*{_a ^\tldo} \, , & \ell_a & = \delta \ind*{_a ^1} \, , & m_a & = \delta \ind*{_a ^2} \, , & u_a & = \delta \ind*{_a ^0} \, ,
\end{align*}
respectively.
Thus, for instance, components of a tensor $A \ind{_{a b} ^c}$ in this basis will be denoted
\begin{align*}
 A \ind{_{1 2} ^{\tilde{1}}} := A \ind{_{a b} ^c} k^a \bar{m}^b k_c \, ,
\end{align*}
and so on.

With this notation, we shall denote the induced basis bivectors and basis $2$-forms by
\begin{align*}
 \delta_\alpha^{[i j]} & := \sqrt{2} \delta^{i}_{\lb{a}} \delta^{j}_{\rb{b}} \, , & \delta^\alpha_{[i j]} & := \sqrt{2} \delta_{i}^{\lb{a}} \delta_{j}^{\rb{b}} \, , & & (i,j \in \{1, \tilde{1}, 2 , \bar{2}, 0 \} )\, ,
\end{align*}
respectively. These are clearly dual to each other, i.e. $\delta_\alpha^{[i j]} \delta^\alpha_{[k \ell]} = 2 \delta_{\lb{k}}^i \delta_{\rb{\ell}}^j$. Here, a Greek index $\alpha$ can be viewed as a multi-index, or in the abstract index language \cite{Penrose1984}, as being `isomorphic' to $[a_1 a_2]$, say.

There is an induced metric on the space of bivectors given by
\begin{align*}
 g_{\alpha \beta} := g_{a_1 \lb{b_1}} g_{\rb{b_2} a_2} \, ,
\end{align*}
which can be reexpressed as
\begin{align*}
 g_{\alpha \beta} = - \delta^{1 \tilde{1}}_{\alpha} \delta^{1 \tilde{1}}_{\beta} - \delta^{2 \bar{2}}_{\alpha} \delta^{2 \bar{2}}_{\beta} + 2 \delta_{\lp{\alpha}}^{1 2} \delta_{\rp{\beta}}^{\tilde{1} \bar{2}} + 2 \delta_{\lp{\alpha}}^{\tilde{1} 2} \delta_{\rp{\beta}}^{1 \bar{2}} + 2 \delta_{\lp{\alpha}}^{1 0} \delta_{\rp{\beta}}^{\tilde{1} 0} + 2 \delta_{\lp{\alpha}}^{2 0} \delta_{\rp{\beta}}^{\bar{2} 0} \, .
\end{align*}
Similarly, the induced metric on the space of $2$-forms is given by
\begin{align*}
 g^{\alpha \beta} := g^{a_1 \lb{b_1}} g^{\rb{b_2} a_2} \, ,
\end{align*}
and clearly satisfies
\begin{align*}
 g_{\alpha \gamma} g^{\gamma \beta} = \delta_{a_1}^{\lb{b_1}} \delta_{a_2}^{\rb{b_2}} = \delta_\alpha^\beta \, .
\end{align*}

\subsection{Connection components}
Let $\nabla$ be the Levi-Civita connection on $(\mcM, \bm{g})$, i.e. the unique tosion-free connection preserving the metric $g_{ab}$, and define, for convenience, the differential operators
\begin{align*}
 D & := k^a \nabla_a \, , & \tilde{D} & := \ell^a \nabla_a  \, , &
 \delta & := \bar{m}^a \nabla_a  \, , & \mathring{\delta} & := u^a \nabla_a  \, .
\end{align*}
The components of the connection $1$-form can then be expressed as follows:
\begin{align*}
 (\epsilon + \bar{\epsilon}) & := \Gamma \ind{_{1 1 \tldo}} = \ell^a D k_a \, ,
& (\gamma + \bar{\gamma}) & := \Gamma \ind{_{\tldo 1 \tldo}} = \ell^a \tilde{D} k_a \, ,
&  (\bar{\alpha} +\beta) & := \Gamma \ind{_{2 1 \tldo}} = \ell^a \delta k_a \, , & \\
 \kappa & := \Gamma \ind{_{1 1 2}} = \bar{m}^a D k_a \, ,
& \tau & := \Gamma \ind{_{\tldo 1 2 }}  = \bar{m}^a \tilde{D} k_a \, , &
 \sigma & := \Gamma \ind{_{2 1 2 }} = \bar{m}^a \delta k_a \, , \\
 \bar{\pi} & := \Gamma \ind{_{1 \tldo 2}} =  \bar{m}^a D \ell \ind*{_a}  \, , &
\bar{\nu} & := \Gamma \ind{_{\tldo \tldo 2}} = \bar{m}^a \tilde{D}  \ell \ind*{_a}  \, , &
 \bar{\lambda} & :=  \Gamma \ind{_{2 \tldo 2 }} = \bar{m}^a \delta \ell \ind*{_a}  \, , \\
 (\epsilon - \bar{\epsilon}) & := \Gamma \ind{_{1 2 \bar{2}}} = m^a D \bar{m}_a \, , &
 (\gamma - \bar{\gamma}) & := \Gamma \ind{_{\tldo 2 \bar{2}}} = m^a \tilde{D} \bar{m}_a \, , &
 (\beta - \bar{\alpha}) & := \Gamma \ind{_{2 2 \bar{2}}} = m^a \delta \bar{m}_a \, , \\
 \mqoppa  & := \Gamma \ind{_{1 \tldo 0}}  = u^a  D \ell_a \, ,
& \msampi  & := \Gamma \ind{_{\tldo \tldo 0}} = u^a \tilde{D} \ell_a \, , &
\xi & := \Gamma \ind{_{2 \tldo 0}} = u^a \delta \ell_a  \, , \\
\mdigamma  & := \Gamma \ind{_{1 1 0 }} = u^a D k_a \, ,
& \mstigma  & := \Gamma \ind{_{\tldo 1 0 }} = u^a \tilde{D} k_a \, ,&
 \psi  & := \Gamma \ind{_{2 1 0 }} = u^a \delta k_a \, , \\
 \chi & := \Gamma \ind{_{1 2 0 }} = u^a D \bar{m}_a \, , &
\omega & := \Gamma \ind{_{\tldo 2 0 }} = u^a \tilde{D} \bar{m}_a \, ,&
 \phi & := \Gamma \ind{_{2 2 0}} = u^a \delta \bar{m}_a \, ,\\
( \theta + \bar{\theta} ) & := \Gamma \ind{_{0 1 \tldo }}  = \ell^a \mathring{\delta} k_a \, , \\
\eta & := \Gamma \ind{_{0 1 2 }} = \bar{m}^a \mathring{\delta} k_a \, , &  \rho & := \Gamma \ind{_{\bar{2} 1 2}} = \bar{m}^a \bar{\delta} k_a \, , \\
\bar{\zeta} & := \Gamma \ind{_{0 \tldo 2 }}  =  \bar{m}^a \mathring{\delta} \ell \ind*{_a}  \, , & \bar{\mu} & := \Gamma \ind{_{\bar{2} \tldo 2 }} = \bar{m}^a \bar{\delta} \ell \ind*{_a}  \, , \\
(\theta - \bar{\theta} ) & := \Gamma \ind{_{0 2 \bar{2}}} = m^a \mathring{\delta} \bar{m}_a \, , & \\
\wynn & := \Gamma \ind{_{0 \tldo 0}} = u^a \mathring{\delta} \ell_a  \, , & \\
\yogh & := \Gamma \ind{_{0 1 0}} = u^a \mathring{\delta} k_a \, , & \\
\mae  & := \Gamma \ind{_{0 2 0}} = u^a \mathring{\delta} \bar{m}_a \, , & \upsilon & := \Gamma \ind{_{\bar{2} 2 0}} = u^a \bar{\delta} \bar{m}_a \, ,
\end{align*}
Here, we have introduced the ancient Greek letters $\msampi$ (`sampi'), $\mdigamma$ (`digamma'), $\mqoppa$ (`qoppa') and $\mstigma$ (`stigma'), together with the Anglo-Saxon letters $\yogh$ (`yogh'), $\wynn$ (`wynn') and $\mae$ (`asc' pronounced `ash'). These are denoted by gothic letters in \cite{Garc'ia-ParradoG'omez-Lobo2009}.

The commutation relations among the basis vector fields are then found to be 
\begin{align}
  [ D , \tilde{D} ] & = - ( \gamma + \bar{\gamma} ) D - ( \epsilon + \bar{\epsilon} ) \tilde{D} + \left( \pi - \bar{\tau} \right) \delta + \left( \bar{\pi} - \tau \right) \bar{\delta} + \left( \mqoppa - \mstigma \right) \mathring{\delta} \, , \label{eq-Commut-Rel-1} \\
  [ D , \delta ] & = - \left( \bar{\pi} + \bar{\alpha} + \beta \right) D - \kappa \tilde{D} + \left(  \epsilon - \bar{\epsilon} - \bar{\rho} \right) \delta - \sigma \bar{\delta} + \left( \chi - \psi \right) \mathring{\delta} \, , \label{eq-Commut-Rel-2} \\
  [ D , \mathring{\delta} ] & = - \left( \mqoppa + \theta + \bar{\theta} \right) D  - \mdigamma \tilde{D} - \left( \bar{\chi} + \bar{\eta} \right) \delta - \left( \chi + \eta \right) \bar{\delta} - \yogh \mathring{\delta} \, , \label{eq-Commut-Rel-3}  \\
  [ \tilde{D} , \delta ] & = - \bar{\nu} D + \left( - \tau  + \bar{\alpha} + \beta \right) \tilde{D} + \left( \gamma - \bar{\gamma} - \mu \right) \delta - \bar{\lambda} \bar{\delta} + \left( \omega - \xi \right) \mathring{\delta} \, , \label{eq-Commut-Rel-4} \\
  [ \tilde{D} , \mathring{\delta} ] & = - \msampi  D + \left( - \mstigma + \theta + \bar{\theta} \right) \tilde{D} - \left( \bar{\omega} + \zeta \right) \delta - \left( \omega + \bar{\zeta} \right) \bar{\delta} - \wynn \mathring{\delta} \, , \label{eq-Commut-Rel-5} \\
  [ \delta , \bar{\delta} ] & = \left( - \mu + \bar{\mu}  \right) D + \left( - \bar{\rho} + \rho \right) \tilde{D} + ( \bar{\beta} - \alpha ) \delta - ( \beta - \bar{\alpha} ) \bar{\delta} + \left( \bar{\upsilon} - \upsilon \right) \mathring{\delta} \, , \label{eq-Commut-Rel-6} \\
  [ \delta , \mathring{\delta} ] & = \left( - \xi + \bar{\zeta} \right) D + \left( - \psi + \eta \right) \tilde{D} + \left( - \bar{\upsilon} - \theta + \bar{\theta} \right) \delta  - \phi \bar{\delta} - \mae \mathring{\delta} \, . \label{eq-Commut-Rel-7} 
\end{align}

\subsection{Curvature tensors}
Denote by $R \ind{_{a b c}^d}$ the Riemann curvature associated with $\nabla$, which we shall take to be conventionally defined by
\begin{align*}
 R \ind{_{a b d} ^c} V^d := 2 \nabla_{\lb{a}} \nabla_{\rb{b}} V^c \, .
\end{align*}
For the purpose of the article, we shall only be concerned with the decomposition of the Riemann tensor as
\begin{align*}
 R _{ a b c d } & = C _{ a b c d } - 4 g _{\lb{a} | \lb{c}} \Rho _{\rb{d}|\rb{b}} \, ,
\end{align*}
where $C_{a b c d}$ and $\Rho _{a b}$ are the Weyl tensor and the Rho tensor respectively.

\paragraph{The Weyl tensor and the Weyl curvature operator}
In five dimensions, the Weyl tensor has 35 independent components. In Lorentzian signature, and with respect to an almost null structure of real index 1, it is convenient to label them as follows
\begin{align*}
 \Psi_0 & := C_{1 2 1 2} \, , & \Psi_1 & := C_{1 2 1 \tilde{1}} \, , & \Psi_2 & := C_{1 2 \tilde{1} \bar{2}} & \Psi_3 & := C_{\tilde{1} \bar{2} \tilde{1} 1} \, , & \Psi_4 & := C_{\tilde{1} \bar{2} \tilde{1} \bar{2}} \, , \\
 \Psi^0_0 & := C_{1 0 1 2} \, , &  \Psi^0_1 & := C_{1 2 1 \bar{2}} \, , & \Psi^0_2 & := C_{1 0 \tilde{1} 0} & \Psi^0_3 & := C_{\tilde{1} \bar{2} \tilde{1} 2} \, , & \Psi^0_4 & := C_{\tilde{1} 0 \tilde{1} \bar{2}} \, , \\
& & \Psi^1_1 & := C_{1 0 2 0} \, , & & &  \Psi^1_3 & := C_{\tilde{1} 0 \bar{2} 0} \, ,  \\
& & \Psi^2_1 & := C_{2 0 1 \bar{2}} & & & \Psi^2_3 & := C_{\bar{2} 0 \tilde{1} 2} & & &  \\
\Psi_{0 3} & := C_{2 0 1 2} \, ,  & \Psi^1_{1 3} & := C_{1 0 \tilde{1} 2} & \Psi^0_{1 3} & := C_{1 2 \tilde{1} 2} & \Psi^2_{1 3} & := C_{\tilde{1} 0 1 \bar{2}} \, , & \Psi_{1 4} & := C_{\bar{2} 0 \tilde{1} \bar{2}} \, ,
\end{align*}
where $\Psi^0_1$, $\Psi^0_2$, $\Psi^0_3$ are real, and the 16 remaining components complex.
While we have made the same choice of independent components as \cite{Garc'ia-ParradoG'omez-Lobo2009}, their notation is more adapted to the use of a spinor basis of real index $1$. Despite the importance of pure spinors in the theory of optical structures, the algebraic speciality of the Weyl tensor is somewhat expressed more simply tensorially, and the aim of this notation is to reflect the curvature properties of the canonical optical structures, while extending the four-dimensional Petrov classification. In particular, the vanishing of the components $\Psi^{\cdot}_{0 \cdot}$ characterises the integrability condition of the almost optical structure $(\mcN_1, \mcK)$. If further the components $\Psi^{\cdot}_{1 \cdot}$ vanish, the spacetime is algebraically special with respect to $(\mcN_1, \mcK)$. By symmetry, components $\Psi^{\cdot}_{\cdot 4}$ and $\Psi^{\cdot}_{\cdot 3}$ are related to the curvature properties of the optical structure $(\mcN_0, \mcL)$. Components $\Psi^{\cdot}_{2}$ are the only non-vanishing components for `Petrov' type D spacetimes. 

The Weyl curvature tensor $C \ind{_{a b c d}}$ gives rise to the Weyl curvature operator $C \ind{_{a b} ^{c d}} = C_{a b e f} g^{c e} g^{d f}$, which we shall also denote $C \ind{_\alpha ^\beta}$ in the notation introduced earlier. In five dimensions, one can thus regard $C \ind{_\alpha ^\beta}$ as a tracefree $10 \times 10$ matrix, satisfying $C_{\alpha \beta} = C_{\beta \alpha}$, acting on $2$-forms, or dually, on bivectors. From the symmetries of the Weyl tensors, one can express the entries of $C \ind{_\alpha ^\beta}$ in terms of the components of $C_{a b c d}$. We omit the most general form of $C \ind{_\alpha ^\beta}$, and focus instead on Weyl tensors algebraically special with respect to the optical structure $(\mcN_1, \mcK)$. So, assume $\Psi^\cdot_{0 \cdot} = \Psi^\cdot_{1 \cdot} = 0$.
In the induced bases of $2$-forms $\{ \delta_\alpha ^{[ij]} \}$ and of bivectors $\{ \delta^\alpha _{[ij]} \}$ adapted to $(\mcN_1, \mcK)$, choose the index ordering $\left\{ [1 2], [1 \bar{2}], [1 0] ,[2 0], [\bar{2} 0], [1 \tilde{1}], [2 \bar{2}], [\tilde{1} 2], [\tilde{1} \bar{2}], [\tilde{1} 0] \right\}$. Then, the Weyl curvature operator $C \ind{_\alpha ^\beta}$ reduces to
{ \small
\begin{multline} \label{eq-WTriangul-AlgSp}
(\bm{C}) \ind{_\alpha ^\beta} = \\ \begin{pmatrix}
 \Psi_2 & 0 & 0 & 0 & 0 & 0 & 0 & 0 & 0 & 0 \\
 0 & \bar{\Psi}_2 & 0 & 0 & 0 & 0 & 0 & 0 & 0 & 0 \\
 0 & 0 & \Psi^0_2 & 0 & 0 & 0 & 0 & 0 & 0 & 0 \\
\bar{\Psi}^2_3 & 0 & \bar{\Psi}^1_3 & - \Psi^0_2 & 0 & 0 & 0 & 0 & 0 & 0 \\
 0 & \Psi^2_3 & \Psi^1_3 & 0 & - \Psi^0_2 & 0 & 0 & 0 & 0 & 0 \\
- \Psi_3 & - \bar{\Psi}_3 & \bar{\Psi}^2_3 + \Psi^2_3  & 0 & 0 & - \Psi^0_2 - \Psi_2 - \bar{\Psi}_2 & - \Psi_2 + \bar{\Psi}_2 & 0 & 0 & 0 \\
- \Psi_3 + \Psi^1_3 & \bar{\Psi}_3 - \bar{\Psi}^1_3 & - \Psi^2_3 + \bar{\Psi}^2_3 & 0 & 0 & - \Psi_2 + \bar{\Psi}_2 & - \Psi_2 - \bar{\Psi}_2 + \Psi^0_2   & 0 & 0 & 0 \\
\Psi_4 & \Psi^0_3 & \Psi^0_4 & 0 & \bar{\Psi}^2_3 & \Psi_3 & \Psi_3 - \Psi^1_3 & \Psi_2 & 0 & 0 \\
\Psi^0_3 & \bar{\Psi}_4 & \bar{\Psi}^0_4 & \Psi^2_3 & 0 & \bar{\Psi}_3 & - \bar{\Psi}_3 + \bar{\Psi}^1_3   & 0 & \bar{\Psi}_2 & 0 \\
\Psi^0_4 & \bar{\Psi}^0_4 & - 2 \Psi^0_3 & \Psi^1_3 & \bar{\Psi}^1_3 & - \bar{\Psi}^2_3 - \Psi^2_3  &  \Psi^2_3 - \bar{\Psi}^2_3 & 0 & 0 & \Psi^0_2
\end{pmatrix} \, .
\end{multline}
}
This is an almost lower triangular matrix, and after a little algebra, one can read off its eigenvalues. In fact, a computer computation reveals that the Weyl curvature operator has Segre characteristic $[2 2 2 (1 1) 1 1]$. Table \ref{table-evalues-WeylOp} lists the eigenvalues and the eigenforms for the repeated eigenvalues. These eigenforms can be seen to be null with respect to the induced metric on $2$-forms. The eigenforms for the simple eigenvalues are omitted for lack of space. It suffices to say that they are spanned by $k_{\lb{a}} \bar{m}_{\rb{b}}$, $k_{\lb{a}} m_{\rb{b}}$, $k_{\lb{a}} u_{\rb{b}}$, $k_{\lb{a}} \ell_{\rb{b}}$, and $\bar{m}_{\lb{a}} m_{\rb{b}}$.
\begin{table}[!ht]
\begin{center}
{\renewcommand{\arraystretch}{1.1}
\renewcommand{\tabcolsep}{0.2cm}
\begin{tabular}{|c|c|}
\hline
Eigenvalue ( multiplicity ) & Eigenvectors \\
\hline
$\Psi_2$ (2) & $k_{\lb{a}} \bar{m}_{\rb{b}}$ \\
$\bar{\Psi}_2$ (2) & $k_{\lb{a}} m_{\rb{b}}$ \\
$\Psi_2^0$ (2) & $k_{\lb{a}} u_{\rb{b}}$\\
\multirow{2}{*}{$-\Psi_2^0$ (2)} & $\frac{\bar{\Psi}^2_3}{\Psi^0_2 + \Psi_2} k_{\lb{a}} \bar{m}_{\rb{b}} +  \frac{\bar{\Psi}^1_3}{2 \Psi^0_2 } k_{\lb{a}} u_{\rb{b}} -  \bar{m}_{\lb{a}} u_{\rb{b}} $ \\
  & $\frac{\Psi^2_3}{\Psi^0_2 + \bar{\Psi}_2} k_{\lb{a}} m_{\rb{b}} +  \frac{\Psi^1_3 }{2 \Psi^0_2 } k_{\lb{a}} u_{\rb{b}} - m_{\lb{a}} u_{\rb{b}} $ \\
$- (\Psi_2 + \bar{\Psi}_2 ) + \sqrt{ (\Psi_2 - \bar{\Psi}_2)^2 + (\Psi_2^0)^2 }$ (1) & omitted \\
$- (\Psi_2 + \bar{\Psi}_2 ) - \sqrt{ (\Psi_2 - \bar{\Psi}_2)^2 + (\Psi_2^0)^2 }$ (1) & omitted \\
\hline
\end{tabular}}
\end{center}
\caption{Eigenvalues and eigenvectors of the Weyl curvature operator of a five-dimensional algebraically special spacetime}\label{table-evalues-WeylOp}
\end{table}

When $\Psi_{1 4} \neq 0$, the Weyl curvature operator has the same Jordan normal form. While the eigenvectors for the eigenvalues $\Psi_2$, $\bar{\Psi_2}$ and $\Psi_2^0$ remain the same as in the algebraically special case, the others differ: a complex conjugate pair of eigenvectors for $-\Psi^0_2$ now read
\begin{align*}
 \frac{\Psi^2_3 \bar{\Psi}^2_3 - \Psi_{1 4} \bar{\Psi}_{1 4}}{\Psi^0_2 + \Psi_2} k_{\lb{a}} \bar{m}_{\rb{b}}  + \Psi^2_3 \left(  \frac{\bar{\Psi}^1_3}{2 \Psi^0_2} k_{\lb{a}} u_{\rb{b}} - \bar{m}_{\lb{a}} u_{\rb{b}} \right) - \bar{\Psi}_{1 4} \left( \frac{\Psi^1_3}{2 \Psi^0_2} k_{\lb{a}} u_{\rb{b}} - m_{\lb{a}} u_{\rb{b}} \right) \, , \\
\frac{\Psi^2_3 \bar{\Psi}^2_3 - \Psi_{1 4} \bar{\Psi}_{1 4}}{\Psi^0_2 + \bar{\Psi}_2} k_{\lb{a}} m_{\rb{b}} + \bar{\Psi}^2_3 \left(  \frac{\Psi^1_3}{2 \Psi^0_2} k_{\lb{a}} u_{\rb{b}} - m_{\lb{a}} u_{\rb{b}} \right) - \Psi_{1 4} \left( \frac{\bar{\Psi}^1_3}{2 \Psi^0_2} k_{\lb{a}} u_{\rb{b}} - \bar{m}_{\lb{a}} u_{\rb{b}} \right) \, .
\end{align*}
Contrary to the algebraically special case (with $\Psi_{1 4} = 0$), these eigenforms are not null. The eigenvectors for $- (\Psi_2 + \bar{\Psi}_2 ) \pm \sqrt{ (\Psi_2 - \bar{\Psi}_2)^2 + (\Psi_2^0)^2 }$ are again rather complicated, but as in the algebraically special case, they are spanned by $k_{\lb{a}} \bar{m}_{\rb{b}}$, $k_{\lb{a}} m_{\rb{b}}$, $k_{\lb{a}} u_{\rb{b}}$, $k_{\lb{a}} \ell_{\rb{b}}$, and $\bar{m}_{\lb{a}} m_{\rb{b}}$.

\paragraph{The Rho and Cotton-York tensors}
The Rho tensor is symmetric, and thus has 15 independent components, which will be labelled in the usual fashion. Another important curvature
tensor is the \emph{Cotton-York tensor} defined to be
\begin{align*}
 A_{a b c} & := 2 \nabla_{\lb{b}} \Rho _{\rb{c} a} \, ,
\end{align*}
or equivalently, from the contracted Bianchi identity,
\begin{align*}
 \nabla^d C_{d a b c} & = -2 A_{a b c} \, .
\end{align*}
This implies that $A_{[a b c]} = 0$ and $A \ind{^a _{a b}} = 0$.

\subsection{Gauge transformations}
The null structure is invariant under the subgroup of the sim group generated by $( \R \oplus \uu (1) ) \ltimes \R^3$. This group can be  split into three types of transformations:
\begin{itemize}
 \item null rotations fixing $k^a$
\begin{align}\label{eq-Nrot}
\begin{aligned}
 k^a & \mapsto \hat{k}^a = k^a \, , &
 \ell^a & \mapsto \hat{\ell}^a = \ell^a + z m^a + \bar{z} \bar{m}^a + r u^a - ( z \bar{z} + \tfrac{1}{2} r^2 ) k^a\\
 m^a & \mapsto \hat{m}^a = m^a - \bar{z} k^a \, , &
 u^a & \mapsto \hat{u}^a = u^a - r k^a \, .
\end{aligned}
\end{align}
where $z \in \C$, $r \in \R$;
 \item boosts
\begin{align}\label{eq-boost}
 k^a & \mapsto \hat{k}^a = b k^a \, , &
 \ell^a & \mapsto \hat{\ell}^a = b^{-1} \ell^a \, , &
 m^a & \mapsto \hat{m}^a = m^a \, , &
 u^a & \mapsto \hat{u}^a = u^a \, .
\end{align}
where $b \in \R$;
 \item unitary rotations
\begin{align}\label{eq-Urot}
 k^a & \mapsto \hat{k}^a =  k^a \, , &
 \ell^a & \mapsto \hat{\ell}^a = \ell^a \, , &
 m^a & \mapsto \hat{m}^a = z m^a \, , &
 u^a & \mapsto \hat{u}^a = u^a \, .
\end{align}
where $z \in \U(1)$.
\end{itemize}

\paragraph{Transformation rules of the connection $1$-form}
Here, we give the transformation rules of the connection components under a boost \eqref{eq-boost} only, the other cases being somewhat more straightforward.
{ \small
\begin{align*}
\hat{\kappa} & = \kappa \, ,
&
\hat{\mdigamma} & = \mdigamma \, ,
&
\hat{\sigma} & = \sigma - z \kappa \, ,
&
\hat{\psi} & = \psi - z \mdigamma \, , 
&
\hat{\rho} & = \rho - \bar{z} \kappa \, , \\
\hat{\yogh} & = \yogh - r \mdigamma \, ,
&
\hat{\eta} & = \eta - r \kappa \, ,
&
\hat{\chi} & = \chi + r \kappa  - z \mdigamma \, ,
&
\hat{\phi} & = \phi - z \chi + r \sigma  - r z  \kappa - z \psi + z^2 \mdigamma \, ,
&
\hat{\epsilon} & = \epsilon + \bar{z} \kappa + \tfrac{1}{2} r \mdigamma \, ,
\end{align*}
\begin{align*}
 \hat{\alpha} & = \alpha - \bar{z} \epsilon + \bar{z} \rho - \bar{z}^2 \kappa + r \psi - r z \mdigamma \, ,
& \hat{\beta} & = \beta - z \epsilon + \bar{z} \sigma - z \bar{z} \kappa + r \psi - r z \mdigamma \, , \\
\bar{\hat{\pi}} & =   \bar{\pi}  + 2 z \bar{\epsilon} + z^2 \bar{\kappa}  - \tfrac{1}{2} r^2 \kappa - r \chi + r z \mdigamma  + D z
&
\hat{\tau} & = \tau + z \rho + \bar{z} \sigma + r \eta - ( z \bar{z} + \tfrac{1}{2} r^2 ) \kappa \, , \\
\hat{\mstigma} & = \mstigma + z \bar{\psi} + \bar{z} \psi + r \yogh - ( z \bar{z} + \tfrac{1}{2} r^2 ) \mdigamma \, ,
&
\hat{\mae} & = \mae - r ( \chi  - \eta ) - r^2 \kappa + r z \mdigamma - z \yogh \, , \\
\hat{\upsilon} & = \upsilon - \bar{z} \chi + r \rho  - r \bar{z} \kappa  - z \bar{\psi} + z \bar{z} \mdigamma \, ,
&
\hat{\theta}  & = \theta  - r \epsilon  + \bar{z} \eta - \bar{z} r \kappa + r \yogh - r^2 \mdigamma  \, ,
\end{align*}
\begin{align*}
\hat{\mqoppa} & =   \mqoppa  + r ( \epsilon + \bar{\epsilon} ) + z \bar{\chi}  + \bar{z}  \chi  + r ( z \bar{\kappa}  + \bar{z} \kappa ) + \tfrac{1}{2} r^2 \mdigamma - z \bar{z}  \mdigamma + D r
\end{align*}
\begin{align*}
\bar{\hat{\lambda}}  =  \bar{\lambda}  -  r \phi  -   \tfrac{1}{2} r^2 \sigma   +  2 z \bar{\alpha}  -  z \bar{\pi}   +  r z \chi  +  
 r z \psi  +  \tfrac{1}{2} r^2 z \kappa  -  2 z^2 \bar{\epsilon}  -  r z^2 \mdigamma  +  
 z^2 \bar{\rho}  - z^3  \bar{\kappa}  +  \delta z  -  z D z 
\end{align*}
\begin{align*}
\bar{\hat{\zeta}} & =  \bar{\zeta}    - r \mae -  r \bar{\pi}  +  r^2 \chi  -  \tfrac{1}{2} r^2 \eta  +   \tfrac{1}{2} r^3 \kappa    -  
 2 r z \bar{\epsilon}  -  r^2 z \mdigamma  +  2 z \bar{\theta}  +  r z \yogh  +  z^2 \bar{\eta}  -  
 z^2 \bar{\kappa}  +   \mathring{\delta} z   -  r D z 
\end{align*}
\begin{align*}
\bar{\hat{\mu}} & =   \bar{\mu}  -   \tfrac{1}{2} r^2 \rho   -  r \upsilon  +  2 z \bar{\beta}  +  r z \bar{\psi}  +  z^2 \bar{\sigma}  -  
 \bar{z} \bar{\pi}  +  r \bar{z} \chi  +  \tfrac{1}{2} r^2 \bar{z} \kappa  -  2 z \bar{z} \bar{\epsilon}  -  
 r z \bar{z} \mdigamma  -  z^2 \bar{z} \bar{\kappa}   +   \bar{\delta} z    -   \bar{z} D z 
\end{align*}
\begin{multline*}
\hat{\gamma}  = \gamma  + z \alpha + \bar{z} \beta  + r \theta  - ( z \bar{z} + \tfrac{1}{2}r^2 ) \epsilon  + \bar{z} \tau + \bar{z} z \rho + \bar{z}^2 \sigma + \bar{z} r \eta - \bar{z} ( z \bar{z} + \tfrac{1}{2}r^2 ) \kappa + r \mstigma + r z \bar{\psi} + r \bar{z} \psi \\ + r^2 \yogh
 - r ( z \bar{z} + \tfrac{1}{2} r^2 ) \mdigamma
\end{multline*}
\begin{multline*}
\hat{\xi}  =  \xi  + r  \bar{\alpha}  +  r \beta  +   \tfrac{1}{2} r^2 \psi   -  z \mqoppa  -  r z \epsilon  -  
 r z \bar{\epsilon}  -  \tfrac{1}{2} r^2 z \mdigamma  +  r z \bar{\rho}  +  z \bar{\upsilon}  -  z^2 \bar{\chi}  - 
  r z^2 \bar{\kappa}  +  \bar{z} \phi  +  r \bar{z} \sigma  -  z \bar{z} \chi  \\ -  z \bar{z} \psi  -  
 r z \bar{z} \kappa  +  z^2 \bar{z} \mdigamma  +  \delta r  -  z D r  -  ( z \bar{z}  +  \tfrac{1}{2} r^2 )  D r
\end{multline*}
\begin{multline*}
\hat{\omega}  = \omega  +  r \mae  -   \tfrac{1}{2} r^2 \chi   +  r^2 \eta  -  ( \tfrac{1}{2} r^3 +  
 r z \bar{z} ) \kappa   +  r \tau  +  
 r z \rho  -  z \mstigma  +  z \upsilon  -  r z \yogh  -   z^2 \bar{\psi} +  
 ( z \bar{z} + \tfrac{1}{2} r ^2 ) \mdigamma \\  +  \bar{z} \phi  +  r \bar{z} \sigma  -  z \bar{z} \chi  -  z \bar{z} \psi  
\end{multline*}
\begin{multline*}
\hat{\wynn}  =  \wynn    - r \mqoppa  + ( r z \bar{z} - \tfrac{1}{2} 
 r^3 ) \mdigamma  -  r^2 ( \epsilon  + \bar{\epsilon} ) +  r ( \theta  +  \bar{\theta} ) + ( \tfrac{1}{2} r^2 -  z \bar{z} )\yogh   +  z  \bar{\mae} +  \bar{z} \mae   \\ -  
 r z \bar{\chi} -  r \bar{z} \chi  +  r z \bar{\eta}  +  r \bar{z} \eta  -  
 r^2 \bar{z} \kappa -  r^2 z \bar{\kappa}     +  \mathring{\delta} r  -   r D r 
\end{multline*}
\begin{multline*}
\bar{\hat{\nu}}  =  \bar{\nu}  -  r \omega  -  r^2 \mae  -   ( \tfrac{1}{2} r^2 + z \bar{z} ) \bar{\pi}    +  ( \tfrac{1}{2} r^3 +  r z \bar{z} ) \chi    -   \tfrac{1}{2} r^3 \eta  + (  \tfrac{1}{4}
 r^4 +  \tfrac{1}{2} r^2 z \bar{z} ) \kappa  -   \tfrac{1}{2} r^2 \tau    +  2 z \bar{\gamma}  +  z \bar{\mu}  - ( r^2 z +  2 z^2 \bar{z}  ) \bar{\epsilon}  \\ - ( 
 \tfrac{1}{2} r^3 z +  
 r z^2 \bar{z} ) \mdigamma -  \tfrac{1}{2} r^2 z \rho  +  r z \mstigma  +  2 r z \bar{\theta}  -  
 r z \upsilon  +  r^2 z \yogh  +  2 z^2 \bar{\beta}  +  r z^2 \bar{\eta}  +  r z^2 \bar{\psi}  - ( 
 \tfrac{1}{2} r^2 z^2 +  z^3 \bar{z} ) \bar{\kappa}  +  z^2 \bar{\tau}  +  z^3 \bar{\sigma}  \\ +  \bar{z} \bar{\lambda}  -  r \bar{z} \phi  -  
 \tfrac{1}{2} r^2 \bar{z} \sigma  +  2 z \bar{z} \bar{\alpha}  +  
 r z \bar{z} \psi   +  z^2 \bar{z} \bar{\rho}   +  r \bar{\zeta}  +  \tilde{D} z  +  z \bar{\delta} z  +  \bar{z} \delta z  -   (  z \bar{z}  +  \tfrac{1}{2} r^2  )  D z  +  r \mathring{\delta} z
\end{multline*}
\begin{multline*}
\hat{\msampi} =  \msampi  + r  ( \gamma  +  \bar{\gamma} ) - (  \tfrac{1}{2} r^2 +  z \bar{z} ) \mqoppa    -  ( \tfrac{1}{2} r^3 +  r z \bar{z} ) ( \epsilon + \bar{\epsilon} ) -  ( \tfrac{1}{4} r^4 -  z^2 \bar{z}^2 )\mdigamma  + ( \tfrac{1}{2} r^2 -  z \bar{z} )\mstigma  +  
 r^2 ( \theta  +  \bar{\theta} ) -  r \wynn  +  ( \tfrac{1}{2} r^3 -  r z \bar{z} ) \yogh \\ +  z \bar{\omega} +  \bar{z} \omega +  r z \bar{\mae}  +  r \bar{z} \mae  + 
  r z \alpha +  r \bar{z} \bar{\alpha} +  r z \bar{\beta} +  r \bar{z} \beta -  
 ( \tfrac{1}{2} r^2 \bar{z} +  z \bar{z}^2 ) \chi  - ( \tfrac{1}{2} r^2 z +  z^2 \bar{z} ) \bar{\chi}  +  r^2 z \bar{\eta} +  r^2 \bar{z} \eta   \\  +  (
 \tfrac{1}{2} r^2 z -  z^2 \bar{z} ) \bar{\psi} + ( \tfrac{1}{2} r^2 \bar{z} -  z \bar{z}^2 ) \psi    - ( \tfrac{1}{2} r^3 z +  
 r z^2 \bar{z} ) \bar{\kappa}  - ( \tfrac{1}{2} r^3 \bar{z} + 
  r z \bar{z}^2 ) \kappa  +  r z \bar{\tau}  +  
 r \bar{z} \tau  +  z \bar{\xi} +  \bar{z} \xi  +  z^2 \bar{\phi} +  \bar{z}^2 \phi   +  
 r z^2 \bar{\sigma}  +  r \bar{z}^2 \sigma \\       +  
 r z \bar{z} \rho  +  r z \bar{z} \bar{\rho}  +  z \bar{z} \upsilon +  
 z \bar{z} \bar{\upsilon}    +  \tilde{D} r  +   z \bar{\delta} r  +  \bar{z} \delta r  +  r \mathring{\delta} r 
\end{multline*}
}

\paragraph{Transformation rules of the Weyl tensor}
To make contact with the CMPP classification, we shall state the transformations rule of the Weyl tensor under boosts in terms of boost weights, as given in Table \ref{table-Weylboostweight}. We recall that a scalar quantity $f$ is said to be of boost weight $w$, if, under a boost \eqref{eq-boost}, it transforms according to the rule $f \mapsto \hat{f} = b^w f$.
\begin{table}[!ht]
\begin{center}
{\renewcommand{\arraystretch}{1.1}
\renewcommand{\tabcolsep}{0.2cm}
\begin{tabular}{|l|c|}
\hline
Weyl tensor components & Boost weights \\
\hline
 $\Psi_0$, $\Psi^0_0$, $\Psi^0_1$ & 2 \\
 $\Psi_{0 3}$, $\Psi_1$, $\Psi^1_1$, $\Psi^2_1$ & 1 \\
 $\Psi^0_{1 3}$, $\Psi^1_{1 3}$, $\Psi^2_{1 3}$, $\Psi_2$, $\Psi^0_2$ & 0 \\
 $\Psi_{1 4}$, $\Psi_3$, $\Psi^1_3$,  $\Psi^2_3$ & -1 \\
 $\Psi_4$, $\Psi^0_4$, $\Psi^0_3$ & -2 \\
\hline
\end{tabular}}
\end{center}
\caption{Boost weights of the components of the Weyl tensor}\label{table-Weylboostweight}
\end{table}

We omit the transformation rules for unitary rotations, which are pretty straightforward. On the other hand, under a null rotation \eqref{eq-Nrot}, the Weyl tensor transforms according to
{ \small
\begin{align*}
\hat{\Psi}_0 & =  \Psi_0 \, , & \hat{\Psi}_1 & =  \Psi_1   +   z \Psi^0_1     +   \bar{z} \Psi_0    +   r \Psi^0_0 \, , \\
\hat{\Psi}^0_0 & =  \Psi^0_0 \, , & \hat{\Psi}^0_1 & =  \Psi^0_1 \, , \\
\hat{\Psi}_{0 3} & =  \Psi_{ 0 3 }   +   r \Psi_0    -   z \Psi^0_0 \, , & \hat{\Psi}^1_1 & =  \Psi^1_1    +   r \Psi^0_0   +   2 z \Psi^0_1 \, ,& \hat{\Psi}^2_1 & =  \Psi^2_1   +   r \Psi^0_1    -   z \bar{\Psi}^0_0 \, ,
\end{align*}
\begin{align*}
\hat{\Psi}^1_{ 1 3 } & =    -  \tfrac{1}{2} r^2 \Psi^0_0     +  z^2 \bar{\Psi}^0_0  +  \Psi^1_{ 1 3 }  -  r \Psi^1_1  -  
 2 r z \Psi^0_1  -  2 z \Psi^2_1 \\
\hat{\Psi}^0_{ 1 3 }  & =  r z \Psi^0_0  +  z \Psi^1_1  -   \tfrac{1}{2}  r^2 \Psi_0    +  z^2 \Psi^0_1  +  \Psi^0_{ 1 3 }  -  
 r \Psi_{ 0 3 } \\
\hat{\Psi}^2_{ 1 3 } & =  r^2 \bar{\Psi}^0_0  -   ( \tfrac{1}{2} r^2   +  z \bar{z} )  \bar{\Psi}^0_0  +  r \bar{z} \Psi^0_1  +  r \bar{\Psi}_1  +  
 r z \bar{\Psi}_0  +  \Psi^2_{ 1 3 }  +  \bar{z} \Psi^2_1  +  z \bar{\Psi}_{ 0 3 } \, , \\
\hat{\Psi}_{1 4} & =    - \bar{z}^3 \Psi^0_0  +  \tfrac{3}{2} r^2 \bar{z} \bar{\Psi}^0_0  -  3 \bar{z} \bar{\Psi}^1_{ 1 3 }  +  3 r \bar{z} \bar{\Psi}^1_1  +  
 3 r \bar{z}^2 \Psi^0_1  -   \tfrac{1}{2} r^3 \bar{\Psi}_0    +  3 r \bar{\Psi}^0_{ 1 3 }  +  
 3 \bar{z}^2 \bar{\Psi}^2_1  -   \tfrac{3}{2} r^2 \bar{\Psi}_{ 0 3 }  +  \Psi_{ 1 4 }
\end{align*}
\begin{align*}
\hat{\Psi}_2 & = r \bar{z} \Psi^0_0  -  \bar{z} \Psi^1_1  +  2 \bar{z} \Psi_1  +  \bar{z}^2 \Psi_0  -  \tfrac{1}{2}
 r^2 \Psi^0_1   +  \Psi_2  -  r \bar{\Psi}^2_1 \\
\hat{\Psi}^0_2 & =  r \bar{z} \Psi^0_0  +  r z \bar{\Psi}^0_0  +  \Psi^0_2  +  \bar{z} \Psi^1_1  +  
 z \bar{\Psi}^1_1  +   (  - r^2  +  2 z \bar{z} )  \Psi^0_1  -  r \Psi^2_1  -  r \bar{\Psi}^2_1
\end{align*}
\begin{multline*}
\hat{\Psi}_3 =    - 2 r \bar{z}^2 \Psi^0_0  +   ( \tfrac{1}{2} r^3  -  r z \bar{z} )  \bar{\Psi}^0_0  -  \bar{z} \Psi^0_2  -  
 2 r \bar{\Psi}^1_{ 1 3 }  +  \bar{z}^2 \Psi^1_1  +   ( r^2  -  z \bar{z} )  \bar{\Psi}^1_1  -  3 \bar{z}^2 \Psi_1  -  
 \bar{z}^3 \Psi_0 \\ +   (  \tfrac{5}{2} r^2 \bar{z}   -  z \bar{z}^2 )  \Psi^0_1  -  3 \bar{z} \Psi_2  +   \tfrac{1}{2}
 r^2 \bar{\Psi}_1    +  \tfrac{1}{2} r^2 z \bar{\Psi}_0  -  z \bar{\Psi}^0_{ 1 3 }  +  
 r \Psi^2_{ 1 3 }  +  \Psi_3  +  r \bar{z} \Psi^2_1  +  4 r \bar{z} \bar{\Psi}^2_1  +  r z \bar{\Psi}_{ 0 3 }
\end{multline*}
\begin{multline*}
\hat{\Psi}^0_3 =   ( -   \tfrac{1}{2} r^3 \bar{z}    +  r z \bar{z}^2 )  \Psi^0_0  +   (  -  \tfrac{1}{2} r^3 z    +  
    r z^2 \bar{z} )  \bar{\Psi}^0_0  +   (  - r^2  +  2 z \bar{z} )  \Psi^0_2  +  r \bar{z} \Psi^1_{ 1 3 }  +  
 r z \bar{\Psi}^1_{ 1 3 } \\ +   (  -  \tfrac{1}{2} r^2 \bar{z}     +  z \bar{z}^2 )  \Psi^1_1  +   (  - \tfrac{1}{2} r^2 z    +  
    z^2 \bar{z} )  \bar{\Psi}^1_1  -  r^2 \bar{z} \Psi_1  -  
 \tfrac{1}{2} r^2 \bar{z}^2 \Psi_0  +   ( \tfrac{1}{4} r^4  -  2 r^2 z \bar{z}  +  z^2 \bar{z}^2 )  \Psi^0_1  +  
 \bar{z}^2 \Psi^0_{ 1 3 } \\ -   \tfrac{1}{2} r^2 \Psi_2   -  r^2 z \bar{\Psi}_1  -  
 \tfrac{1}{2} r^2 z^2 \bar{\Psi}_0  -   \tfrac{1}{2} r^2 \bar{\Psi}_2   +  z^2 \bar{\Psi}^0_{ 1 3 }  -  
 2 r \bar{z} \bar{\Psi}^2_{ 1 3 }  -  2 r z \Psi^2_{ 1 3 }  -  \bar{z} \bar{\Psi}^1_3  -  
 z \Psi^1_3  +  \Psi^0_3  -  r \bar{z}^2 \Psi_{ 0 3 } \\ +   ( \tfrac{1}{2} r^3   -  2 r z \bar{z} )  \Psi^2_1  - 
  r \bar{\Psi}^2_3  +   ( \tfrac{1}{2} r^3  -  2 r z \bar{z} )  \bar{\Psi}^2_1  -  r z^2 \bar{\Psi}_{ 0 3 }  -  
 r \Psi^2_3
\end{multline*}
\begin{multline*}
\hat{\Psi}^1_3 =   - r \bar{z}^2 \Psi^0_0  +   ( \tfrac{1}{2} r^3 -  2 r z \bar{z} )  \bar{\Psi}^0_0  -  2 \bar{z} \Psi^0_2  -  
 r \bar{\Psi}^1_{ 1 3 }  -  
 \bar{z}^2 \Psi^1_1  +   ( \tfrac{1}{2} r^2  -  2 z \bar{z} )  \bar{\Psi}^1_1  +   ( 2 r^2 \bar{z}  -  
    2 z \bar{z}^2 )  \Psi^0_1  +  r^2 \bar{\Psi}_1 \\ +  r^2 z \bar{\Psi}_0  -  
 2 z \bar{\Psi}^0_{ 1 3 }  +  2 r \Psi^2_{ 1 3 }  +  \Psi^1_3  +  2 r \bar{z} \Psi^2_1  +  
 2 r \bar{z} \bar{\Psi}^2_1  +  2 r z \bar{\Psi}_{ 0 3 }
\end{multline*}
\begin{multline*}
\hat{\Psi}^2_3 = \tfrac{1}{2} r^2 \bar{z} \Psi^0_0  +   ( r^2 z  -  z^2 \bar{z} )  \bar{\Psi}^0_0  +  r \Psi^0_2  -  
 \bar{z} \Psi^1_{ 1 3 }  +  r \bar{z} \Psi^1_1  +   (  -  \tfrac{1}{2} r^3   +  2 r z \bar{z} )  \Psi^0_1  +  
 2 r z \bar{\Psi}_1  +  r z^2 \bar{\Psi}_0  +  r \bar{\Psi}_2 \\ +  
 2 z \Psi^2_{ 1 3 }  +   (  - r^2  +  2 z \bar{z} )  \Psi^2_1  -   \tfrac{1}{2} r^2 \bar{\Psi}^2_1    +  
 z^2 \bar{\Psi}_{ 0 3 }  +  \Psi^2_3
\end{multline*}
\begin{multline*}
\hat{\Psi}_4 =  2 r \bar{z}^3 \Psi^0_0  -  r^3 \bar{z} \bar{\Psi}^0_0  +  6 r \bar{z} \bar{\Psi}^1_{ 1 3 }  -  2 \bar{z}^3 \Psi^1_1  -  
 3 r^2 \bar{z} \bar{\Psi}^1_1  +  4 \bar{z}^3 \Psi_1  +  \bar{z}^4 \Psi_0  -  
 3 r^2 \bar{z}^2 \Psi^0_1  +  6 \bar{z}^2 \Psi_2  +   \tfrac{1}{4} r^4 \bar{\Psi}_0   \\  -  
 3 r^2 \bar{\Psi}^0_{ 1 3 }  +  2 \bar{z} \Psi^1_3  -  4 \bar{z} \Psi_3  +  \Psi_4  -  
 6 r \bar{z}^2 \bar{\Psi}^2_1  +  r^3 \bar{\Psi}_{ 0 3 }  -  2 r \Psi_{ 1 4 }
\end{multline*}
\begin{multline*}
\hat{\Psi}^0_4 =   (  \tfrac{3}{2} r^2 \bar{z}^2  -  z \bar{z}^3 )  \Psi^0_0  +   (  -  \tfrac{1}{4} r^4    +  \tfrac{3}{2} r^2 z \bar{z}  )  \bar{\Psi}^0_0  +  
 3 r \bar{z} \Psi^0_2  +   3 (  \tfrac{1}{2} r^2   -  z \bar{z} )  \bar{\Psi}^1_{ 1 3 }  +   (  -  \tfrac{1}{2} r^3    +  
    3 r z \bar{z} )  \bar{\Psi}^1_1 \\  +  3 r \bar{z}^2 \Psi_1  +  
 r \bar{z}^3 \Psi_0  +   (  -  \tfrac{3}{2} r^3 \bar{z}    +  r z \bar{z}  +  2 r z \bar{z}^2 )  \Psi^0_1  +  
 3 r \bar{z} \Psi_2  -   \tfrac{1}{2} r^3 \bar{\Psi}_1   -  \tfrac{1}{2} r^3 z \bar{\Psi}_0  +  
 3 r z \bar{\Psi}^0_{ 1 3 }  +  3 \bar{z}^2 \bar{\Psi}^2_{ 1 3 } \\  -   \tfrac{3}{2} r^2 \Psi^2_{ 1 3 }   +  \Psi^0_4  -  
 r \Psi^1_3  -  r \Psi_3  +  \bar{z}^3 \Psi_{ 0 3 }  -  \tfrac{3}{2} r^2 \bar{z} \Psi^2_1  +  
 3 \bar{z} \bar{\Psi}^2_3  +   (  - 3 r^2 \bar{z}  +  3 z \bar{z}^2 )  \bar{\Psi}^2_1  -  \tfrac{3}{2} r^2 z \bar{\Psi}_{ 0 3 }  + 
  z \Psi_{ 1 4 }
\end{multline*}
}

\section{The Ricci identity} \label{App-Ricci}
The Ricci identity are given by the well-known formula
\begin{align*}
 2 \partial \ind{_{\lb{c}}} \Gamma \ind{_{\rb{d} a b}} & = 2 \Gamma \ind{_{\lb{c}| a} ^e} \Gamma \ind{_{|\rb{d} e b}} + 2  \Gamma \ind{_{[cd]} ^e} \Gamma \ind{_{e a b}} + \Riem \ind{_{c d a b}} \, .
 \end{align*}
Taking components of the above formula and combining some of these yield a set of 54 equations:
{ \small
\begin{multline}\label{eq-R-1}
  - \tilde{D} \epsilon + D \gamma =  - 2 \epsilon \gamma  -  \bar{\epsilon} \gamma  -  \epsilon \bar{\gamma}  -  \kappa \nu +   
 \tfrac{1}{2} \bar{\chi} \omega  -  \tfrac{1}{2}  \chi \bar{\omega} + \beta \pi + \alpha \bar{\pi}  -  \tfrac{1}{2}  
 \mdigamma \msampi +  \tfrac{1}{2} \mqoppa \mstigma  -  \alpha \tau + \pi \tau  -  
 \beta \bar{\tau} + \mqoppa \theta  -  
 \mstigma \theta  + \tfrac{1}{2} \Psi^0_2 + \Psi_2  +  \Rho _{ 1  \tilde{1} } 
\end{multline}
\begin{align}\label{eq-R-2}
   - \tilde{D} \kappa + D \tau =  - 3 \gamma \kappa  -  \bar{\gamma} \kappa  -  
  \mdigamma \omega + \eta \mqoppa + \bar{\pi} \rho + \pi \sigma + \chi \mstigma  -  
  \eta \mstigma + \epsilon \tau  -  \bar{\epsilon} \tau  -  \rho \tau  -  \sigma \bar{\tau} + \Psi_1  +   \Rho _{ 1  2 } 
\end{align}
\begin{align}\label{eq-R-3}
  D \nu  -  \tilde{D} \pi =  - 3 \epsilon \nu  -  \bar{\epsilon} \nu + \gamma \pi  -  
  \bar{\gamma} \pi + \mu \pi + \lambda \bar{\pi}  -  \bar{\omega} \mqoppa + \bar{\chi} \msampi  -  
  \lambda \tau  -  \mu \bar{\tau} + \mqoppa \zeta  -  \mstigma \zeta   - \Psi_3  -   \Rho _{ \tilde{1}  \bar{2} } 
\end{align}
\begin{align}\label{eq-R-4}
   - \tilde{D} \mdigamma + D \mstigma =  - 2 \mdigamma \gamma  -  2 \mdigamma \bar{\gamma} + 
  \bar{\kappa} \omega + \kappa \bar{\omega} + \pi \psi + \bar{\pi} \bar{\psi}  -  \bar{\chi} \tau  -  
  \bar{\psi} \tau  -  \chi \bar{\tau}  -  \psi \bar{\tau} + \mqoppa \yogh  -  \mstigma \yogh  - \Psi^2_1 - \bar{\Psi}^2_1  +  \Rho _{ 1  0 } 
\end{align}
\begin{align}\label{eq-R-5}
   - \tilde{D} \mqoppa + D \msampi =  - \chi \nu  -  \bar{\chi} \bar{\nu} + \omega \pi + 
  \bar{\omega} \bar{\pi}  -  2 \epsilon \msampi  -  2 \bar{\epsilon} \msampi + \mqoppa \wynn  -  
  \mstigma \wynn + \pi \xi  -  \bar{\tau} \xi + \bar{\pi} \bar{\xi}  -  \tau \bar{\xi}  + \bar{\Psi}^2_3 + \Psi^2_3  
 -   \Rho _{ \tilde{1}  0 } 
\end{align}
\begin{align}\label{eq-R-6}
   - \tilde{D} \chi + D \omega =  - 2 \chi \gamma + \mdigamma \bar{\nu}  -  
  2 \bar{\epsilon} \omega + \phi \pi + \mae \mqoppa  -  \kappa \msampi  -  \mae \mstigma  -  
  \bar{\pi} \mstigma + \mqoppa \tau  -  \phi \bar{\tau} + \bar{\pi} \upsilon  -  \tau \upsilon +  \bar{\Psi}^2_{ 1 3 } - \Psi^1_{ 1 3 }  
\end{align}
\begin{multline}\label{eq-R-7}
 D \beta  -  \delta \epsilon =  - \bar{\alpha} \epsilon  -  \beta \bar{\epsilon}  -  \gamma \kappa  -  \kappa \mu + \tfrac{1}{2}  
 \bar{\chi} \phi  -  \epsilon \bar{\pi} +  \tfrac{1}{2} \psi \mqoppa  -  \beta \bar{\rho}  -  \alpha \sigma +
  \pi \sigma + \chi \theta  -  \psi \theta  -  \tfrac{1}{2}  \chi \bar{\upsilon}  -  \tfrac{1}{2}  
 \mdigamma \xi  - \tfrac{1}{2} \Psi^1_1 + \Psi_1 
\end{multline}
\begin{align}\label{eq-R-8}
   - \delta \kappa + D \sigma = 
 \chi \eta  -  \bar{\alpha} \kappa  -  3 \beta \kappa  -  \mdigamma \phi  -  \kappa \bar{\pi} + 
  \chi \psi  -  \eta \psi + 3 \epsilon \sigma  -  \bar{\epsilon} \sigma  -  \rho \sigma  -  
  \bar{\rho} \sigma  -  \kappa \tau  + \Psi_0 
\end{align}
\begin{align}\label{eq-R-9}
  D \mu  -  \delta \pi =  - \epsilon \mu  -  \bar{\epsilon} \mu  -  \kappa \nu  -  \bar{\alpha} \pi + 
  \beta \pi  -  \pi \bar{\pi}  -  \mu \bar{\rho}  -  \lambda \sigma  -  \mqoppa \bar{\upsilon} + 
  \bar{\chi} \xi + \chi \zeta  -  \psi \zeta  + \Psi_2 -  \Rho _{ 2  \bar{2} } -   \Rho _{ 1  \tilde{1} }
\end{align}
\begin{align}\label{eq-R-10}
   - \delta \mdigamma + D \psi =  - 2 \bar{\alpha} \mdigamma  -  2 \beta \mdigamma + 
  \bar{\kappa} \phi  -  \mdigamma \bar{\pi} + 2 \epsilon \psi  -  \chi \bar{\rho}  -  \psi \bar{\rho}  -  
  \bar{\chi} \sigma  -  \bar{\psi} \sigma  -  \kappa \mstigma + \kappa \bar{\upsilon} + \chi \yogh  - 
   \psi \yogh  + \Psi^0_0 
\end{align}
\begin{align}\label{eq-R-11}
   - \delta \mqoppa + D \xi =  - \bar{\chi} \bar{\lambda}  -  \chi \mu + \phi \pi  -  \bar{\pi} \mqoppa  -  
  \kappa \msampi + \bar{\pi} \bar{\upsilon} + \chi \wynn  -  \psi \wynn  -  2 \bar{\epsilon} \xi  -  
  \bar{\rho} \xi  -  \sigma \bar{\xi}  + \bar{\Psi}^2_{ 1 3 } -   \Rho _{ 2  0 } 
\end{align}
\begin{align}\label{eq-R-12}
   - \delta \chi + D \phi = 
 \mae \chi  -  2 \beta \chi + \mdigamma \bar{\lambda}  -  \kappa \omega + 
  2 \epsilon \phi  -  2 \bar{\epsilon} \phi  -  \chi \bar{\pi}  -  \mae \psi  -  \bar{\pi} \psi  -  
  \phi \bar{\rho} + \mqoppa \sigma  -  \sigma \upsilon  -  \kappa \xi  + \Psi_{ 0 3 } 
\end{align}
\begin{multline}\label{eq-R-13}
 D \alpha  -  \bar{\delta} \epsilon = \alpha \bar{\epsilon}  -  2 \alpha \epsilon  -  \bar{\beta} \epsilon  -  \gamma \bar{\kappa}  -  
 \kappa \lambda  -  \tfrac{1}{2}  \chi \bar{\phi}  -  \epsilon \pi +  \tfrac{1}{2} \bar{\psi} \mqoppa  -  
 \alpha \rho + \pi \rho  -  \beta \bar{\sigma} + \bar{\chi} \theta  -  \bar{\psi} \theta +  \tfrac{1}{2} 
 \bar{\chi} \upsilon  -  \tfrac{1}{2}  \mdigamma \bar{\xi}  - \tfrac{1}{2} \bar{\Psi}^1_1 -  \Rho _{ 1  \bar{2} } 
\end{multline}
\begin{align}\label{eq-R-14}
   - \bar{\delta} \kappa + D \rho = 
 \bar{\chi} \eta  -  3 \alpha \kappa  -  \bar{\beta} \kappa  -  \kappa \pi + \chi \bar{\psi}  -  
  \eta \bar{\psi} + \bar{\epsilon} \rho + \epsilon \rho  -  \rho^2  -  \bar{\sigma} \sigma  -  
  \bar{\kappa} \tau  -  \mdigamma \upsilon + \Psi^0_1 -  \Rho _{ 1  1 }
\end{align}
\begin{align}\label{eq-R-15}
  D \lambda  -  \bar{\delta} \pi = 
 \bar{\epsilon} \lambda  -  3 \epsilon \lambda  -  \bar{\kappa} \nu + \alpha \pi  -  
  \bar{\beta} \pi  -  \pi^2  -  \bar{\phi} \mqoppa  -  \lambda \rho  -  \mu \bar{\sigma} + \bar{\chi} \bar{\xi} + 
  \bar{\chi} \zeta  -  \bar{\psi} \zeta + \bar{\Psi}^0_{ 1 3 } -   \Rho _{ \bar{2}  \bar{2} } 
\end{align}
\begin{align}\label{eq-R-16}
   - \bar{\delta} \chi + D \upsilon = 
 \mae \bar{\chi}  -  2 \alpha \chi + \mdigamma \bar{\mu}  -  \bar{\kappa} \omega  -  \chi \pi  -  
  \mae \bar{\psi}  -  \bar{\pi} \bar{\psi} + \mqoppa \rho  -  \phi \bar{\sigma}  -  \rho \upsilon  -  
  \kappa \bar{\xi} + \Psi^2_1 +  \Rho _{ 1  0 } 
\end{align}
\begin{multline}\label{eq-R-17}
 \tilde{D} \beta  -  \delta \gamma = \bar{\alpha} \gamma + 2 \beta \gamma  -  \beta \bar{\gamma}  -  \alpha \bar{\lambda}  -  \beta \mu  -  
 \epsilon \bar{\nu} + \tfrac{1}{2}  \bar{\omega} \phi +  \tfrac{1}{2} \psi \msampi + \nu \sigma  -  \gamma \tau  - 
  \mu \tau + \omega \theta  -  \tfrac{1}{2}  \omega \bar{\upsilon}  -  \tfrac{1}{2}  \mstigma \xi  -  
 \theta \xi - \tfrac{1}{2} \bar{\Psi}^1_3  -  \Rho _{ \tilde{1}  2 }
\end{multline}
\begin{align}\label{eq-R-18}
  \tilde{D} \sigma  -  \delta \tau =  - \kappa \bar{\nu} + \eta \omega + \omega \psi  -  
  \bar{\lambda} \rho + 3 \gamma \sigma  -  \bar{\gamma} \sigma  -  \mu \sigma  -  \phi \mstigma +
   \bar{\alpha} \tau  -  \beta \tau  -  \tau^2  -  \eta \xi  + \Psi^0_{ 1 3 } -  \Rho _{ 2  2 } 
\end{align}
\begin{align}\label{eq-R-19}
  \tilde{D} \mu  -  \delta \nu =  - \lambda \bar{\lambda}  -  \gamma \mu  -  \bar{\gamma} \mu  -  \mu^2 + 
  \bar{\alpha} \nu + 3 \beta \nu  -  \bar{\nu} \pi  -  \nu \tau  -  \msampi \bar{\upsilon} + 
  \bar{\omega} \xi + \omega \zeta  -  \xi \zeta  + \Psi^0_3 -  \Rho _{ \tilde{1}  \tilde{1} }
\end{align}
\begin{align}\label{eq-R-20}
  \tilde{D} \psi  -  \delta \mstigma =  - \mdigamma \bar{\nu} + 2 \gamma \psi  -  \mu \psi  -  
  \bar{\lambda} \bar{\psi}  -  \omega \bar{\rho}  -  \bar{\omega} \sigma  -  \mstigma \tau + \phi \bar{\tau} + 
  \tau \bar{\upsilon} + \omega \yogh  -  \xi \yogh  + \Psi^1_{ 1 3 } -   \Rho _{ 2  0 } 
\end{align}
\begin{align}\label{eq-R-21}
   - \delta \msampi + \tilde{D} \xi =  - \mu \omega  -  \bar{\lambda} \bar{\omega} + \nu \phi  -  
  \bar{\nu} \mqoppa + 2 \bar{\alpha} \msampi + 2 \beta \msampi  -  \msampi \tau + 
  \bar{\nu} \bar{\upsilon} + \omega \wynn  -  2 \bar{\gamma} \xi  -  \mu \xi  -  \wynn \xi  -  
  \bar{\lambda} \bar{\xi}  + \bar{\Psi}^0_4 
\end{align}
\begin{align}\label{eq-R-22}
   - \delta \omega + \tilde{D} \phi =  - \chi \bar{\nu} + \mae \omega + 2 \bar{\alpha} \omega + 
  2 \gamma \phi  -  2 \bar{\gamma} \phi  -  \mu \phi  -  \bar{\nu} \psi + \msampi \sigma + 
  \bar{\lambda} \mstigma  -  \omega \tau  -  \bar{\lambda} \upsilon  -  \mae \xi  -  \tau \xi  + \bar{\Psi}_{ 1 4 } 
\end{align}
\begin{multline}\label{eq-R-23}
 \tilde{D} \alpha  -  \bar{\delta} \gamma = \alpha \bar{\gamma} + \bar{\beta} \gamma  -  \beta \lambda  -  \alpha \bar{\mu}  -  
 \epsilon \nu  -  \tfrac{1}{2}  \omega \bar{\phi} + \nu \rho +  \tfrac{1}{2} \bar{\psi} \msampi  -  
 \gamma \bar{\tau}  -  \lambda \tau + \bar{\omega} \theta +  \tfrac{1}{2} \bar{\omega} \upsilon  -   \tfrac{1}{2} 
 \mstigma \bar{\xi}  -  \theta \bar{\xi} + \Psi_3 - \tfrac{1}{2} \Psi^1_3 
\end{multline}
\begin{align}\label{eq-R-24}
  \tilde{D} \rho  -  \bar{\delta} \tau =  - \kappa \nu + \eta \bar{\omega} + \omega \bar{\psi} + 
  \bar{\gamma} \rho + \gamma \rho  -  \bar{\mu} \rho  -  \lambda \sigma  -  \alpha \tau + 
  \bar{\beta} \tau  -  \bar{\tau} \tau  -  \mstigma \upsilon  -  \eta \bar{\xi}  + \Psi_2 -  \Rho _{ 2  \bar{2} } -   \Rho _{ 1  \tilde{1} }
\end{align}
\begin{align}\label{eq-R-25}
  \tilde{D} \lambda  -  \bar{\delta} \nu = 
 \bar{\gamma} \lambda  -  3 \gamma \lambda  -  \lambda \bar{\mu}  -  \lambda \mu + 
  3 \alpha \nu + \bar{\beta} \nu  -  \nu \pi  -  \bar{\phi} \msampi  -  \nu \bar{\tau} + \bar{\omega} \bar{\xi} +
   \bar{\omega} \zeta  -  \bar{\xi} \zeta  + \Psi_4 
\end{align}
\begin{align}\label{eq-R-26}
   - \bar{\delta} \omega + \tilde{D} \upsilon =  - \chi \nu + \mae \bar{\omega} + 2 \bar{\beta} \omega  - 
   \lambda \phi  -  \bar{\nu} \bar{\psi} + \rho \msampi + \bar{\mu} \mstigma  -  \omega \bar{\tau}  -  
  \bar{\mu} \upsilon  -  \mae \bar{\xi}  -  \tau \bar{\xi} + \bar{\Psi}^2_3 +  \Rho _{ \tilde{1}  0 } 
\end{align}
\begin{multline}\label{eq-R-27}
 \delta \alpha  -  \bar{\delta} \beta = \alpha \bar{\alpha}  -  2 \alpha \beta + \beta \bar{\beta}  -  \epsilon \mu + 
 \epsilon \bar{\mu}  -   \tfrac{1}{2} \phi \bar{\phi} + \gamma \rho + \mu \rho  -  \gamma \bar{\rho}  -  
 \lambda \sigma  -  \theta \upsilon + \theta \bar{\upsilon} +   \tfrac{1}{2}
 \upsilon \bar{\upsilon} + \tfrac{1}{2}  \bar{\psi} \xi  -   \tfrac{1}{2} 
 \psi \bar{\xi} +  \Psi_2  - \tfrac{1}{2} \Psi^0_2 +  \Rho _{ 2  \bar{2} } 
\end{multline}
\begin{align}\label{eq-R-28}
  \delta \rho  -  \bar{\delta} \sigma =  - \kappa \mu + \kappa \bar{\mu} + \phi \bar{\psi} + 
  \bar{\alpha} \rho + \beta \rho  -  3 \alpha \sigma + \bar{\beta} \sigma + \rho \tau  -  
  \bar{\rho} \tau  -  \eta \upsilon  -  \psi \upsilon + \eta \bar{\upsilon}   - \Psi^1_1 + \Psi_1   -   \Rho _{ 1  2 }
\end{align}
\begin{align}\label{eq-R-29}
  \delta \lambda  -  \bar{\delta} \mu = 
 \bar{\alpha} \lambda  -  3 \beta \lambda + \alpha \mu + \bar{\beta} \mu  -  \mu \pi + 
  \bar{\mu} \pi + \nu \rho  -  \nu \bar{\rho}  -  \bar{\phi} \xi + \bar{\upsilon} \bar{\xi}  -  \upsilon \zeta + 
  \bar{\upsilon} \zeta  -  \Psi_3 + \Psi^1_3  +  \Rho _{ \tilde{1}  \bar{2} } 
\end{align}
\begin{align}\label{eq-R-30}
   - \bar{\delta} \psi + \delta \bar{\psi} =  - \mdigamma \mu + \mdigamma \bar{\mu}  -  2 \alpha \psi + 
  2 \bar{\alpha} \bar{\psi} + \bar{\phi} \sigma  -  \phi \bar{\sigma} + \rho \mstigma  -  \bar{\rho} \mstigma +
   \bar{\rho} \upsilon  -  \rho \bar{\upsilon}  -  \upsilon \yogh + \bar{\upsilon} \yogh + \Psi^2_1 - \bar{\Psi}^2_1 
\end{align}
\begin{align}\label{eq-R-31}
   - \bar{\delta} \xi + \delta \bar{\xi} =  - \lambda \phi + \bar{\lambda} \bar{\phi}  -  \mu \mqoppa + 
  \bar{\mu} \mqoppa + \rho \msampi  -  \bar{\rho} \msampi + \mu \upsilon  -  \bar{\mu} \bar{\upsilon}  -  
  \upsilon \wynn + \bar{\upsilon} \wynn + 2 \bar{\beta} \xi  -  2 \beta \bar{\xi}  -  \Psi^2_3 + \bar{\Psi}^2_3 
\end{align}
\begin{align}\label{eq-R-32}
   - \bar{\delta} \phi + \delta \upsilon =  - \chi \mu + \chi \bar{\mu}  -  2 \alpha \phi + 
  2 \bar{\beta} \phi + \bar{\mu} \psi  -  \bar{\lambda} \bar{\psi} + \omega \rho  -  \omega \bar{\rho}  -  
  \mae \upsilon + \mae \bar{\upsilon} + \rho \xi  -  \sigma \bar{\xi} +  \bar{\Psi}^2_{ 1 3 } + \Psi^1_{ 1 3 } + \Rho _{ 2  0 } 
\end{align}
\begin{multline}\label{eq-R-33}
  - \mathring{\delta} \epsilon + D \theta =  -   \tfrac{1}{2}  \bar{\mae} \chi  -  \alpha \chi +   \tfrac{1}{2} \mae \bar{\chi}  -  \beta \bar{\chi}  -  \alpha \eta  -  
 \beta \bar{\eta}  -  \mdigamma \gamma + \eta \pi  -  \epsilon \mqoppa  -  \epsilon \theta  -  
 \epsilon \bar{\theta}  -   \tfrac{1}{2} \mdigamma \wynn + \tfrac{1}{2}  \mqoppa \yogh  -  \theta \yogh  -  
 \kappa \zeta - \bar{\Psi}^2_1   + \tfrac{1}{2}  \Rho _{ 1  0 } 
\end{multline}
\begin{align}\label{eq-R-34}
  D \eta  -  \mathring{\delta} \kappa =  - \mae \mdigamma + 2 \epsilon \eta  -  \kappa \mqoppa  -  
  \chi \rho  -  \eta \rho  -  \bar{\chi} \sigma  -  \bar{\eta} \sigma  -  \mdigamma \tau  -  
  3 \kappa \theta  -  \kappa \bar{\theta} + \chi \yogh  -  \eta \yogh  + \Psi^0_0
\end{align}
\begin{align}\label{eq-R-35}
   - \mathring{\delta} \pi + D \zeta =  - \chi \lambda  -  \eta \lambda  -  \bar{\chi} \mu  -  \bar{\eta} \mu  -  
  \mdigamma \nu  -  \bar{\mae} \mqoppa  -  \pi \mqoppa + \pi \theta  -  \pi \bar{\theta} + 
  \bar{\chi} \wynn  -  2 \epsilon \zeta  -  \yogh \zeta + \bar{\Psi}^1_{ 1 3 } -   \Rho _{ \bar{2}  0 } 
\end{align}
\begin{multline}\label{eq-R-36}
   - \mathring{\delta} \mdigamma + D \yogh =  - \bar{\chi} \eta  -  \chi \bar{\eta} + \bar{\mae} \kappa + 
  \mae \bar{\kappa}  -  \bar{\chi} \psi  -  \bar{\eta} \psi  -  \chi \bar{\psi}  -  \eta \bar{\psi}  -  
  \mdigamma \mqoppa  -  \mdigamma \mstigma  -  2 \mdigamma \theta  -  
  2 \mdigamma \bar{\theta} + \epsilon \yogh + \bar{\epsilon} \yogh  -  \yogh^2 - 2 \Psi^0_1 - \Rho _{ 1  1 }
\end{multline}
\begin{align}\label{eq-R-37}
   - \mathring{\delta} \mqoppa + D \wynn = 
 \mae \pi + \bar{\mae} \bar{\pi}  -  \mqoppa^2  -  \mdigamma \msampi  -  \epsilon \wynn  -  
  \bar{\epsilon} \wynn  -  \bar{\chi} \xi  -  \bar{\eta} \xi  -  \chi \bar{\xi}  -  \eta \bar{\xi}  -  \wynn \yogh  -  
  \chi \zeta  -  \bar{\chi} \bar{\zeta}  + \Psi^0_2 -   \Rho _{ 0  0 } - \Rho _{ \tilde{1}  1 }
\end{align}
\begin{align}\label{eq-R-38}
  D \mae  -  \mathring{\delta} \chi = 
 \mae \epsilon  -  \mae \bar{\epsilon}  -  \mdigamma \omega  -  \bar{\chi} \phi  -  \bar{\eta} \phi  -  
  \chi \mqoppa + \eta \mqoppa  -  2 \chi \theta  -  \chi \upsilon  -  \eta \upsilon  -  
  \kappa \wynn  -  \mae \yogh  -  \bar{\pi} \yogh + \mdigamma \bar{\zeta}  + \Psi^1_1 - \Rho _{ 1  2 }
\end{align}
\begin{multline}\label{eq-R-39}
  - \mathring{\delta} \gamma + \tilde{D} \theta = \eta \nu  -  \tfrac{1}{2}  \bar{\mae} \omega  -  \alpha \omega +  \tfrac{1}{2} \mae \bar{\omega}  -  \beta \bar{\omega}  -  
 \epsilon \msampi  -  \gamma \mstigma + \gamma \theta + \gamma \bar{\theta}  -   \tfrac{1}{2} 
 \mstigma \wynn  -  \theta \wynn +  \tfrac{1}{2} \msampi \yogh  -  \beta \zeta  -  \tau \zeta  - 
  \alpha \bar{\zeta}  + \bar{\Psi}^2_3  -  \tfrac{1}{2} \Rho _{ \tilde{1}  0 } 
\end{multline}
\begin{align}\label{eq-R-40}
  \tilde{D} \eta  -  \mathring{\delta} \tau = 
 2 \eta \gamma  -  \omega \rho  -  \kappa \msampi  -  \bar{\omega} \sigma  -  \mae \mstigma  -  
  \mstigma \tau  -  \tau \theta + \tau \bar{\theta}  -  \eta \wynn + \omega \yogh  -  
  \sigma \zeta  -  \rho \bar{\zeta}  + \bar{\Psi}^2_{ 1 3 } -   \Rho _{ 2  0 } 
\end{align}
\begin{align}\label{eq-R-41}
   - \mathring{\delta} \nu + \tilde{D} \zeta =  - \lambda \omega  -  \mu \bar{\omega}  -  \bar{\mae} \msampi  -  
  \pi \msampi  -  \nu \mstigma + 3 \nu \theta + \nu \bar{\theta} + \bar{\omega} \wynn  -  
  2 \gamma \zeta  -  \mu \zeta  -  \wynn \zeta  -  \lambda \bar{\zeta}  + \Psi^0_4 
\end{align}
\begin{multline}\label{eq-R-42}
   - \mathring{\delta} \mstigma + \tilde{D} \yogh =  - \bar{\eta} \omega  -  \eta \bar{\omega}  -  \bar{\omega} \psi  -  
  \omega \bar{\psi}  -  \mdigamma \msampi  -  \mstigma^2 + \bar{\mae} \tau + \mae \bar{\tau} + 
  \gamma \yogh + \bar{\gamma} \yogh  -  \wynn \yogh  -  \psi \zeta  -  \bar{\psi} \bar{\zeta} + \Psi^0_2 -   \Rho _{ 0  0 } - \Rho _{ \tilde{1}  1 }
\end{multline}
\begin{multline}\label{eq-R-43}
   - \mathring{\delta} \msampi + \tilde{D} \wynn = 
 \mae \nu + \bar{\mae} \bar{\nu}  -  \mqoppa \msampi  -  \msampi \mstigma + 2 \msampi \theta + 
  2 \msampi \bar{\theta}  -  \gamma \wynn  -  \bar{\gamma} \wynn  -  \wynn^2  -  \bar{\omega} \xi  -  
  \omega \bar{\xi}  -  \omega \zeta  -  \xi \zeta  -  \bar{\omega} \bar{\zeta}  -  \bar{\xi} \bar{\zeta} - 2 \Psi^0_3  - \Rho _{ \tilde{1}  \tilde{1} }
\end{multline}
\begin{align}\label{eq-R-44}
  \tilde{D} \mae  -  \mathring{\delta} \omega = 
 \mae \gamma  -  \mae \bar{\gamma}  -  \bar{\omega} \phi  -  \chi \msampi + \eta \msampi  -  
  \omega \mstigma + 2 \omega \bar{\theta}  -  \omega \upsilon  -  \mae \wynn  -  \tau \wynn  - 
   \bar{\nu} \yogh  -  \phi \zeta + \mstigma \bar{\zeta}  -  \upsilon \bar{\zeta}  + \bar{\Psi}^1_3 - \Rho _{ \tilde{1}  2 }
\end{align}
\begin{multline}\label{eq-R-45}
  - \mathring{\delta} \beta + \delta \theta = \eta \gamma + \eta \mu  -   \tfrac{1}{2} \bar{\mae} \phi  -  \alpha \phi  -  \gamma \psi  -  \mae \theta  -  
 \beta \theta + \beta \bar{\theta} +  \tfrac{1}{2} \mae \bar{\upsilon}  -  \beta \bar{\upsilon}  -   \tfrac{1}{2} 
 \psi \wynn  -  \epsilon \xi +  \tfrac{1}{2} \xi \yogh  -  \sigma \zeta + 
 \epsilon \bar{\zeta} + \bar{\Psi}^2_{ 1 3 } +  \tfrac{1}{2} \Rho _{ 2  0 } 
\end{multline}
\begin{align}\label{eq-R-46}
  \delta \eta  -  \mathring{\delta} \sigma =  - \mae \eta + 2 \beta \eta  -  \mae \psi  -  \phi \rho + 
  \eta \tau  -  \psi \tau  -  3 \sigma \theta + \sigma \bar{\theta}  -  \sigma \bar{\upsilon}  -  
  \kappa \xi + \phi \yogh + \kappa \bar{\zeta}  + \Psi_{ 0 3 } 
\end{align}
\begin{align}\label{eq-R-47}
   - \mathring{\delta} \mu + \delta \zeta = 
 \eta \nu  -  \lambda \phi  -  \nu \psi + \mu \theta + \mu \bar{\theta}  -  \mu \bar{\upsilon} + 
  \bar{\upsilon} \wynn  -  \bar{\mae} \xi  -  \pi \xi  -  \mae \zeta  -  2 \beta \zeta + \pi \bar{\zeta} + \bar{\Psi}^2_3 +  \Rho _{ \tilde{1}  0 }
\end{align}
\begin{align}\label{eq-R-48}
   - \mathring{\delta} \psi + \delta \yogh =  - \bar{\eta} \phi  -  \phi \bar{\psi} + \mae \bar{\rho} + \bar{\mae} \sigma + 
  \eta \mstigma  -  \psi \mstigma  -  2 \psi \theta  -  \eta \bar{\upsilon}  -  
  \psi \bar{\upsilon}  -  \mdigamma \xi  -  \mae \yogh + \bar{\alpha} \yogh + \beta \yogh + 
  \mdigamma \bar{\zeta}  + \Psi^1_1 - \Rho _{ 1  2 }
\end{align}
\begin{align}\label{eq-R-49}
  \delta \wynn  -  \mathring{\delta} \xi = 
 \bar{\mae} \bar{\lambda} + \mae \mu + \eta \msampi  -  \psi \msampi  -  \mae \wynn  -  \bar{\alpha} \wynn  - 
   \beta \wynn  -  \mqoppa \xi + 2 \bar{\theta} \xi  -  \bar{\upsilon} \xi  -  \phi \bar{\xi}  -  
  \phi \zeta + \mqoppa \bar{\zeta}  -  \bar{\upsilon} \bar{\zeta}  + \bar{\Psi}^1_3 - \Rho _{ \tilde{1}  2 }
\end{align}
\begin{multline}\label{eq-R-50}
  \delta \mae  -  \mathring{\delta} \phi =  - \mae^2  -  \mae \bar{\alpha} + \mae \beta + \eta \omega  -  
  \omega \psi  -  2 \phi \theta + 2 \phi \bar{\theta}  -  \phi \upsilon  -  
  \phi \bar{\upsilon}  -  \sigma \wynn  -  \chi \xi + \eta \xi  -  \bar{\lambda} \yogh + 
  \chi \bar{\zeta} + \psi \bar{\zeta} - 2 \Psi^0_{ 1 3 } - \Rho _{ 2  2 }
\end{multline}
\begin{multline}\label{eq-R-51}
  - \mathring{\delta} \alpha + \bar{\delta} \theta = \bar{\eta} \gamma + \eta \lambda +  \tfrac{1}{2} \mae \bar{\phi}  -  \beta \bar{\phi}  -  \gamma \bar{\psi}  -  
 \alpha \bar{\theta}  -  \bar{\mae} \theta + \alpha \theta  -  \tfrac{1}{2}  \bar{\mae} \upsilon  -  
 \alpha \upsilon  -  \tfrac{1}{2}  \bar{\psi} \wynn  -  \epsilon \bar{\xi} +  \tfrac{1}{2} \bar{\xi} \yogh + 
 \epsilon \zeta  -  \rho \zeta - \bar{\Psi}^1_{ 1 3 }  - \tfrac{1}{2}  \Rho _{ \bar{2}  0 } 
\end{multline}
\begin{align}\label{eq-R-52}
  \bar{\delta} \eta  -  \mathring{\delta} \rho =  - \bar{\mae} \eta + 2 \alpha \eta  -  \mae \bar{\psi}  -  
  \bar{\phi} \sigma + \bar{\eta} \tau  -  \bar{\psi} \tau  -  \rho \bar{\theta}  -  \rho \theta  -  
  \rho \upsilon  -  \kappa \bar{\xi} + \upsilon \yogh + \kappa \zeta + \bar{\Psi}^2_1 +  \Rho _{ 1  0 } 
\end{align}
\begin{align}\label{eq-R-53}
   - \mathring{\delta} \lambda + \bar{\delta} \zeta = 
 \bar{\eta} \nu  -  \mu \bar{\phi}  -  \nu \bar{\psi}  -  \lambda \bar{\theta} + 3 \lambda \theta  -  
  \lambda \upsilon + \bar{\phi} \wynn  -  \bar{\mae} \bar{\xi}  -  \pi \bar{\xi}  -  \bar{\mae} \zeta  -  
  2 \alpha \zeta + \pi \zeta  + \Psi_{ 1 4 } 
\end{align}
\begin{align}\label{eq-R-54}
  \bar{\delta} \mae  -  \mathring{\delta} \upsilon =  - \bar{\mae} \mae + \mae \alpha  -  \mae \bar{\beta} + 
  \bar{\eta} \omega  -  \bar{\phi} \phi  -  \omega \bar{\psi}  -  \upsilon^2  -  \rho \wynn  -  
  \chi \bar{\xi} + \eta \bar{\xi}  -  \bar{\mu} \yogh + \bar{\psi} \bar{\zeta} + \chi \zeta   - 2 \Psi^0_{ 1 3 }  - \Rho _{ 2  2 }
\end{align}
}

\section{The Bianchi identity}\label{App-Bianchi}
Here, we give the components of the Bianchi identity according to the formula
\begin{align*}
\partial \ind{_{\lb{a}}} C \ind{_{b \rb{c} d e}}  & = - 2 g \ind{_{\lb{a} | \lb{d}}} A \ind{_{\rb{e} | b \rb{c}}} - 2 \Gamma \ind{_{\lb{a} b} ^f} C \ind{_{\rb{c} f d e}} - 2 \Gamma \ind{_{\lb{a}| \lb{d}} ^f} C \ind{_{\rb{e}|f| b \rb{c}}} \, ,
\end{align*}
and obtain a set of 58 equations:
{ \small
\begin{multline}\label{eqB1}
 - \delta \Psi_1 + \tilde{D} \Psi_0 - D \Psi^0_{ 1 3 } =  A _{ 2  1  2 } +  ( 2 \omega - \xi )  \Psi^0_0 + \sigma \Psi^0_2 +  (  - \chi - \psi )  \Psi^1_{ 1 3 } +  (  - \bar{\pi} + 2 \tau )  \Psi^1_1 +  (  - 2 \beta - 4 \tau )  \Psi_1   +  ( 4 \gamma - \mu )  \Psi_0 \\ - \bar{\lambda} \Psi^0_1  +  (  - 2 \epsilon + 2 \bar{\epsilon} + \bar{\rho} )  \Psi^0_{ 1 3 } + 3 \sigma \Psi_2 +  (  - \chi + 2 \psi )  \bar{\Psi}^2_{ 1 3 } - \kappa \bar{\Psi}^1_3   +  ( \mqoppa - 2 \mstigma )  \Psi_{ 0 3 } + \phi \Psi^2_1   + \mdigamma \bar{\Psi}_{ 1 4 } + \phi \bar{\Psi}^2_1
\end{multline}
\begin{multline}\label{eqB2}
 - D \Psi^1_1 + D \Psi_1 + \bar{\delta} \Psi_0 - \delta \Psi^0_1 =  - A _{ 1  1  2 } +  ( 2 \upsilon - \bar{\upsilon} )  \Psi^0_0 - \phi \bar{\Psi}^0_0 + \kappa \Psi^0_2 - \mdigamma \Psi^1_{ 1 3 } +  (  - 2 \epsilon + 2 \rho + \bar{\rho} )  \Psi^1_1 + \sigma \bar{\Psi}^1_1 \\ +  ( 2 \epsilon - 4 \rho )  \Psi_1  +  ( 4 \alpha + \pi )  \Psi_0   +  (  - 2 \bar{\alpha} - 2 \beta - \bar{\pi} )  \Psi^0_1 + \bar{\kappa} \Psi^0_{ 1 3 } - 3 \kappa \Psi_2 - \mdigamma \bar{\Psi}^2_{ 1 3 } +  ( \bar{\chi} - 2 \bar{\psi} )  \Psi_{ 0 3 } +  ( \chi + \psi )  \Psi^2_1 +  (  - 2 \chi + \psi )  \bar{\Psi}^2_1
\end{multline}
\begin{multline}\label{eqB7}
\bar{\delta} \Psi_1 + D \Psi_2 - \delta \bar{\Psi}_1 - D \bar{\Psi}_2 =   - A _{ 1  2  \bar{2} } + \bar{\xi} \Psi^0_0 - \xi \bar{\Psi}^0_0 +  (  - \rho + \bar{\rho} )  \Psi^0_2 +  (  - \bar{\chi} + \bar{\psi} )  \Psi^1_{ 1 3 } +  ( \chi - \psi )  \bar{\Psi}^1_{ 1 3 }  - \pi \Psi^1_1   + \bar{\pi} \bar{\Psi}^1_1 \\ +  ( 2 \alpha + 2 \pi )  \Psi_1 + \lambda \Psi_0 +  (  - \mu + \bar{\mu} )  \Psi^0_1 - \bar{\sigma} \Psi^0_{ 1 3 } - 3 \rho \Psi_2  +  (  - 2 \bar{\alpha} - 2 \bar{\pi} )  \bar{\Psi}_1 - \bar{\lambda} \bar{\Psi}_0 + 3 \bar{\rho} \bar{\Psi}_2 + \sigma \bar{\Psi}^0_{ 1 3 }   +  ( \bar{\chi} - 2 \bar{\psi} )  \bar{\Psi}^2_{ 1 3 } \\ +  (  - \chi + 2 \psi )  \Psi^2_{ 1 3 } + \bar{\kappa} \bar{\Psi}^1_3 - \kappa \Psi^1_3 - 2 \bar{\kappa} \bar{\Psi}_3 + 2 \kappa \Psi_3  +  ( \mqoppa - \upsilon + \bar{\upsilon} )  \Psi^2_1 - \mdigamma \bar{\Psi}^2_3 +  (  - \mqoppa - \upsilon + \bar{\upsilon} )  \bar{\Psi}^2_1 + \mdigamma \Psi^2_3
\end{multline}
\begin{multline}\label{eqB9}
 - \bar{\delta} \Psi^0_{ 1 3 } + \delta \bar{\Psi}_2 + D \bar{\Psi}^1_3 - D \bar{\Psi}_3 =  A _{ \tilde{1}  1  2 } - A _{ 2  2  \bar{2} } - \bar{\pi} \Psi^0_2 +  ( \mqoppa - \upsilon + \bar{\upsilon} )  \Psi^1_{ 1 3 } - \bar{\mu} \Psi^1_1 - \bar{\lambda} \bar{\Psi}^1_1 +  (  - 2 \alpha + 2 \bar{\beta} - \pi )  \Psi^0_{ 1 3 } \\ + 2 \bar{\lambda} \bar{\Psi}_1 + 3 \bar{\pi} \bar{\Psi}_2 +  ( \mqoppa - \upsilon )  \bar{\Psi}^2_{ 1 3 } + \phi \Psi^2_{ 1 3 }  +  (  - 2 \bar{\epsilon} - \rho - \bar{\rho} )  \bar{\Psi}^1_3 +  ( 2 \bar{\epsilon} + 2 \bar{\rho} )  \bar{\Psi}_3 - \bar{\kappa} \bar{\Psi}_4 + \kappa \Psi^0_3 + \bar{\xi} \Psi_{ 0 3 } - \xi \Psi^2_1 +  (  - \bar{\chi} + \bar{\psi} )  \bar{\Psi}_{ 1 4 } \\ - \chi \bar{\Psi}^2_3 +  ( 2 \chi - \psi )  \Psi^2_3
\end{multline}
\begin{multline}\label{eqB10}
 - D \Psi^0_2 - \bar{\delta} \Psi^1_1 - \delta \bar{\Psi}^1_1 + \bar{\delta} \Psi_1 + D \Psi_2 + \delta \bar{\Psi}_1 + D \bar{\Psi}_2 =  A _{ 2  1  \bar{2} } + A _{ \bar{2}  1  2 } +  ( \rho + \bar{\rho} )  \Psi^0_2 +  ( 2 \bar{\chi} - \bar{\psi} )  \Psi^1_{ 1 3 } +  ( 2 \chi - \psi )  \bar{\Psi}^1_{ 1 3 } \\ +  (  - 2 \alpha - 2 \pi )  \Psi^1_1 +  (  - 2 \bar{\alpha} - 2 \bar{\pi} )  \bar{\Psi}^1_1 +  ( 2 \alpha + 2 \pi )  \Psi_1 + \lambda \Psi_0   +  (  - \mu - \bar{\mu} )  \Psi^0_1 + \bar{\sigma} \Psi^0_{ 1 3 } - 3 \rho \Psi_2 +  ( 2 \bar{\alpha} + 2 \bar{\pi} )  \bar{\Psi}_1 + \bar{\lambda} \bar{\Psi}_0 - 3 \bar{\rho} \bar{\Psi}_2 \\ + \sigma \bar{\Psi}^0_{ 1 3 }   +  ( 2 \bar{\chi} - \bar{\psi} )  \bar{\Psi}^2_{ 1 3 } +  ( 2 \chi - \psi )  \Psi^2_{ 1 3 } - 2 \bar{\kappa} \bar{\Psi}^1_3 - 2 \kappa \Psi^1_3   + 2 \bar{\kappa} \bar{\Psi}_3 + 2 \kappa \Psi_3 + \bar{\phi} \Psi_{ 0 3 } +  ( \upsilon - 2 \bar{\upsilon} )  \Psi^2_1 \\ +  (  - 2 \upsilon + \bar{\upsilon} )  \bar{\Psi}^2_1 + \phi \bar{\Psi}_{ 0 3 }
\end{multline}
\begin{multline}\label{eqB11}
\tilde{D} \Psi_2 + \delta \Psi_3 - D \Psi^0_3 =  A _{ \tilde{1}  1  \tilde{1} } - A _{ \bar{2}  \tilde{1}  2 } - \mu \Psi^0_2 +  ( \omega - 2 \xi )  \bar{\Psi}^1_{ 1 3 } - \nu \Psi^1_1 + 2 \nu \Psi_1 - 3 \mu \Psi_2 - \bar{\lambda} \bar{\Psi}^0_{ 1 3 }  + \bar{\omega} \bar{\Psi}^2_{ 1 3 } \\ + \xi \Psi^2_{ 1 3 } - \bar{\chi} \bar{\Psi}^0_4   +  (  - \chi + \psi )  \Psi^0_4 + \pi \bar{\Psi}^1_3 +  ( \bar{\pi} - \tau )  \Psi^1_3 +  (  - 2 \beta + 2 \tau )  \Psi_3 +  ( 2 \epsilon + 2 \bar{\epsilon} + \bar{\rho} )  \Psi^0_3 + \sigma \Psi_4 \\ +  ( \mqoppa - \mstigma - \bar{\upsilon} )  \bar{\Psi}^2_3 - \msampi \bar{\Psi}^2_1 +  ( \mqoppa - \bar{\upsilon} )  \Psi^2_3
\end{multline}
\begin{multline}\label{eqB12}
\bar{\delta} \Psi_2 - \delta \bar{\Psi}^0_{ 1 3 } + D \Psi^1_3 - D \Psi_3 =  A _{ \tilde{1}  1  \bar{2} } + A _{ \bar{2}  2  \bar{2} } - \pi \Psi^0_2 +  ( \mqoppa + \upsilon - \bar{\upsilon} )  \bar{\Psi}^1_{ 1 3 } - \lambda \Psi^1_1 - \mu \bar{\Psi}^1_1 + 2 \lambda \Psi_1   + 3 \pi \Psi_2 \\ +  (  - 2 \bar{\alpha} + 2 \beta - \bar{\pi} )  \bar{\Psi}^0_{ 1 3 } + \bar{\phi} \bar{\Psi}^2_{ 1 3 }   +  ( \mqoppa - \bar{\upsilon} )  \Psi^2_{ 1 3 } +  (  - 2 \epsilon - \rho - \bar{\rho} )  \Psi^1_3 +  ( 2 \epsilon + 2 \rho )  \Psi_3  + \bar{\kappa} \Psi^0_3 - \kappa \Psi_4 \\ +  ( 2 \bar{\chi} - \bar{\psi} )  \bar{\Psi}^2_3 - \bar{\xi} \bar{\Psi}^2_1 + \xi \bar{\Psi}_{ 0 3 } - \bar{\chi} \Psi^2_3 +  (  - \chi + \psi )  \Psi_{ 1 4 }
\end{multline}
\begin{multline}\label{eqB13}
\tilde{D} \Psi^0_0 - D \Psi^1_{ 1 3 } + \delta \Psi^2_1 + \delta \bar{\Psi}^2_1 =   A _{ 0  1  2 } +  ( 3 \gamma + \bar{\gamma} - \mu )  \Psi^0_0 - \bar{\lambda} \bar{\Psi}^0_0 +  (  - \chi + 2 \psi )  \Psi^0_2 +  (  - \epsilon + \bar{\epsilon} + 2 \bar{\rho} )  \Psi^1_{ 1 3 } + 2 \sigma \bar{\Psi}^1_{ 1 3 } \\ +  ( \mqoppa - \mstigma )  \Psi^1_1 +  (  - \mstigma + \bar{\upsilon} )  \Psi_1 - \bar{\omega} \Psi_0   +  (  - 3 \omega + 2 \xi )  \Psi^0_1 + \bar{\chi} \Psi^0_{ 1 3 } + \psi \Psi_2 + \phi \bar{\Psi}_1 +  ( \chi + \psi )  \bar{\Psi}_2 - \bar{\rho} \bar{\Psi}^2_{ 1 3 } - \sigma \Psi^2_{ 1 3 } - \mdigamma \bar{\Psi}_3 + \bar{\tau} \Psi_{ 0 3 } \\ +  ( \bar{\alpha} + \beta + 2 \bar{\pi} )  \Psi^2_1   - \bar{\kappa} \bar{\Psi}_{ 1 4 } +  ( \bar{\alpha} + \beta + 3 \tau )  \bar{\Psi}^2_1 - \kappa \Psi^2_3
\end{multline}
\begin{multline}\label{eqB14}
\delta \Psi^1_{ 1 3 } - \delta \bar{\Psi}^2_{ 1 3 } + \tilde{D} \Psi_{ 0 3 } - D \bar{\Psi}_{ 1 4 } =   - \bar{\nu} \Psi^0_0 +  (  - \bar{\alpha} + \beta + 3 \bar{\pi} )  \Psi^1_{ 1 3 } +  ( 2 \omega - \xi )  \Psi^1_1 +  (  - \omega - \xi )  \Psi_1 + \msampi \Psi_0 +  (  - 3 \mqoppa + 3 \mstigma )  \Psi^0_{ 1 3 } \\ + \phi \Psi_2 - \phi \bar{\Psi}_2 +  ( \bar{\alpha} - \beta - 3 \tau )  \bar{\Psi}^2_{ 1 3 }   + \kappa \bar{\Psi}^0_4 +  (  - 2 \chi + \psi )  \bar{\Psi}^1_3 +  ( \chi + \psi )  \bar{\Psi}_3 - \mdigamma \bar{\Psi}_4 +  ( 3 \gamma - \bar{\gamma} - \mu )  \Psi_{ 0 3 } - 2 \bar{\lambda} \Psi^2_1 \\ +  (  - \epsilon + 3 \bar{\epsilon} + \bar{\rho} )  \bar{\Psi}_{ 1 4 } + 2 \sigma \bar{\Psi}^2_3 - \bar{\lambda} \bar{\Psi}^2_1 + \sigma \Psi^2_3
\end{multline}
\begin{multline}\label{eqB15}
\bar{\delta} \Psi^0_0 - \delta \bar{\Psi}^0_0 + D \Psi^2_1 - D \bar{\Psi}^2_1 =   ( 3 \alpha + \bar{\beta} + \pi )  \Psi^0_0 +  (  - 3 \bar{\alpha} - \beta - \bar{\pi} )  \bar{\Psi}^0_0 + 2 \bar{\kappa} \Psi^1_{ 1 3 } - 2 \kappa \bar{\Psi}^1_{ 1 3 } +  ( 2 \bar{\chi} - \bar{\psi} )  \Psi^1_1 \\ +  (  - 2 \chi + \psi )  \bar{\Psi}^1_1 +  (  - \bar{\chi} - \bar{\psi} )  \Psi_1 - \bar{\phi} \Psi_0   +  (  - 3 \upsilon + 3 \bar{\upsilon} )  \Psi^0_1 - \mdigamma \Psi_2 +  ( \chi + \psi )  \bar{\Psi}_1 + \phi \bar{\Psi}_0 + \mdigamma \bar{\Psi}_2 + \bar{\kappa} \bar{\Psi}^2_{ 1 3 } - \kappa \Psi^2_{ 1 3 } + \bar{\sigma} \Psi_{ 0 3 } \\ +  ( \epsilon + \bar{\epsilon} - 3 \bar{\rho} )  \Psi^2_1 +  (  - \epsilon - \bar{\epsilon} + 3 \rho )  \bar{\Psi}^2_1 - \sigma \bar{\Psi}_{ 0 3 }
\end{multline}
\begin{multline}\label{eqB16}
D \Psi^1_{ 1 3 } + D \bar{\Psi}^2_{ 1 3 } + \bar{\delta} \Psi_{ 0 3 } - \delta \Psi^2_1 =   A _{ 0  1  2 } - \bar{\mu} \Psi^0_0 + \bar{\lambda} \bar{\Psi}^0_0 +  ( 2 \chi - \psi )  \Psi^0_2 +  ( \epsilon - \bar{\epsilon} - 2 \bar{\rho} )  \Psi^1_{ 1 3 } +  (  - \mqoppa + 2 \upsilon - \bar{\upsilon} )  \Psi^1_1 \\ + \phi \bar{\Psi}^1_1 +  ( \mqoppa - \upsilon )  \Psi_1 + \bar{\xi} \Psi_0 - \xi \Psi^0_1 +  (  - 2 \bar{\chi} + 3 \bar{\psi} )  \Psi^0_{ 1 3 }   - \chi \Psi_2 - \phi \bar{\Psi}_1 +  (  - \chi - \psi )  \bar{\Psi}_2 +  ( \epsilon - \bar{\epsilon} - 3 \rho )  \bar{\Psi}^2_{ 1 3 } + \sigma \Psi^2_{ 1 3 } - \mdigamma \bar{\Psi}^1_3 + \mdigamma \bar{\Psi}_3 \\ +  ( 3 \alpha - \bar{\beta} + \pi )  \Psi_{ 0 3 }  +  (  - \bar{\alpha} - \beta - 2 \bar{\pi} )  \Psi^2_1 + \bar{\kappa} \bar{\Psi}_{ 1 4 } - 2 \kappa \bar{\Psi}^2_3 + \bar{\pi} \bar{\Psi}^2_1 + \kappa \Psi^2_3
\end{multline}
\begin{multline}\label{eqB17}
\tilde{D} \bar{\Psi}^2_{ 1 3 } - D \bar{\Psi}^0_4 - \delta \bar{\Psi}^2_3 - \delta \Psi^2_3 =    - A _{ 0  \tilde{1}  2 } +  ( \omega - 2 \xi )  \Psi^0_2 + \mu \Psi^1_{ 1 3 } + \bar{\lambda} \bar{\Psi}^1_{ 1 3 } + \msampi \Psi_1 - \bar{\omega} \Psi^0_{ 1 3 } +  (  - \omega - \xi )  \Psi_2 - \xi \bar{\Psi}_2 \\ +  ( \gamma - \bar{\gamma} - 2 \mu )  \bar{\Psi}^2_{ 1 3 } - 2 \bar{\lambda} \Psi^2_{ 1 3 }   +  ( \epsilon + 3 \bar{\epsilon} + \bar{\rho} )  \bar{\Psi}^0_4 + \sigma \Psi^0_4 +  ( \mqoppa - \mstigma )  \bar{\Psi}^1_3 +  ( \mqoppa - \bar{\upsilon} )  \bar{\Psi}_3 + \bar{\chi} \bar{\Psi}_4 - \phi \Psi_3 +  ( 3 \chi - 2 \psi )  \Psi^0_3 + \nu \Psi_{ 0 3 } - \pi \bar{\Psi}_{ 1 4 } \\ +  ( \bar{\alpha} + \beta - 2 \tau )  \bar{\Psi}^2_3 + \bar{\nu} \bar{\Psi}^2_1 +  ( \bar{\alpha} + \beta - 3 \bar{\pi} )  \Psi^2_3
\end{multline}
\begin{multline}\label{eqB19}
\bar{\delta} \bar{\Psi}^2_{ 1 3 } - \delta \Psi^2_{ 1 3 } + D \bar{\Psi}^2_3 - D \Psi^2_3 =   A _{ 0  2  \bar{2} } +  ( \upsilon - \bar{\upsilon} )  \Psi^0_2 + \pi \Psi^1_{ 1 3 } - \bar{\pi} \bar{\Psi}^1_{ 1 3 } + \bar{\xi} \Psi_1 - \bar{\phi} \Psi^0_{ 1 3 } +  ( \mqoppa - \upsilon )  \Psi_2   - \xi \bar{\Psi}_1 \\ +  (  - \mqoppa + \bar{\upsilon} )  \bar{\Psi}_2 + \phi \bar{\Psi}^0_{ 1 3 }   +  ( \alpha - \bar{\beta} + 2 \pi )  \bar{\Psi}^2_{ 1 3 } +  (  - \bar{\alpha} + \beta - 2 \bar{\pi} )  \Psi^2_{ 1 3 } + \bar{\kappa} \bar{\Psi}^0_4 - \kappa \Psi^0_4 +  ( 2 \bar{\chi} - \bar{\psi} )  \bar{\Psi}^1_3 +  (  - 2 \chi + \psi )  \Psi^1_3 \\ - \bar{\chi} \bar{\Psi}_3 + \chi \Psi_3 + \lambda \Psi_{ 0 3 } - \mu \Psi^2_1  +  (  - \epsilon - \bar{\epsilon} - 2 \rho )  \bar{\Psi}^2_3 + \bar{\mu} \bar{\Psi}^2_1 - \bar{\lambda} \bar{\Psi}_{ 0 3 } +  ( \epsilon + \bar{\epsilon} + 2 \bar{\rho} )  \Psi^2_3
\end{multline}
\begin{multline}\label{eqB21}
\bar{\delta} \Psi_1 - \tilde{D} \Psi^0_1 + D \Psi_2 =   - A _{ 1  1  \tilde{1} } - A _{ 2  1  \bar{2} } +  (  - \bar{\omega} + \bar{\xi} )  \Psi^0_0 - \omega \bar{\Psi}^0_0 - \rho \Psi^0_2 + \bar{\psi} \Psi^1_{ 1 3 } + \chi \bar{\Psi}^1_{ 1 3 } +  (  - \pi + \bar{\tau} )  \Psi^1_1 + \tau \bar{\Psi}^1_1 \\ +  ( 2 \alpha + 2 \pi )  \Psi_1 + \lambda \Psi_0 +  (  - 2 \gamma - 2 \bar{\gamma} + \bar{\mu} )  \Psi^0_1  - \bar{\sigma} \Psi^0_{ 1 3 } - 3 \rho \Psi_2 +  ( \bar{\chi} - 2 \bar{\psi} )  \bar{\Psi}^2_{ 1 3 } - \kappa \Psi^1_3 + 2 \kappa \Psi_3 +  ( \mstigma - \upsilon )  \Psi^2_1 \\ - \mdigamma \bar{\Psi}^2_3 +  (  - \mqoppa + \mstigma - \upsilon )  \bar{\Psi}^2_1
\end{multline}
\begin{multline}\label{eqB22}
\tilde{D} \Psi^1_1 - \tilde{D} \Psi_1 - \bar{\delta} \Psi^0_{ 1 3 } + \delta \Psi_2 =   A _{ 1  \tilde{1}  2 } - A _{ 2  2  \bar{2} } - \tau \Psi^0_2 +  ( \mstigma - \upsilon )  \Psi^1_{ 1 3 } + \phi \bar{\Psi}^1_{ 1 3 } +  ( 2 \gamma - \mu - \bar{\mu} )  \Psi^1_1 +  (  - 2 \gamma + 2 \mu )  \Psi_1 \\ - \nu \Psi_0 + \bar{\nu} \Psi^0_1 +  (  - 2 \alpha + 2 \bar{\beta} - \bar{\tau} )  \Psi^0_{ 1 3 }   + 3 \tau \Psi_2 +  ( \mstigma - \upsilon + \bar{\upsilon} )  \bar{\Psi}^2_{ 1 3 } - \rho \bar{\Psi}^1_3 - \sigma \Psi^1_3 + 2 \sigma \Psi_3 +  (  - \bar{\omega} + \bar{\xi} )  \Psi_{ 0 3 } \\ - \omega \Psi^2_1 + \bar{\psi} \bar{\Psi}_{ 1 4 } - \psi \bar{\Psi}^2_3 +  ( 2 \omega - \xi )  \bar{\Psi}^2_1
\end{multline}
\begin{multline}\label{eqB23}
\bar{\delta} \Psi^0_2 + \bar{\delta} \Psi_2 - \tilde{D} \bar{\Psi}_1 + \bar{\delta} \bar{\Psi}_2 - D \Psi_3 =   - A _{ 1  \tilde{1}  \bar{2} } - A _{ \tilde{1}  1  \bar{2} } - \msampi \bar{\Psi}^0_0 +  ( \pi + \bar{\tau} )  \Psi^0_2 +  ( 2 \mqoppa - \mstigma )  \bar{\Psi}^1_{ 1 3 } + 2 \lambda \Psi_1 - \nu \Psi^0_1 \\ + 3 \pi \Psi_2 +  (  - 2 \bar{\gamma} + 2 \bar{\mu} )  \bar{\Psi}_1 - \bar{\nu} \bar{\Psi}_0 + 3 \bar{\tau} \bar{\Psi}_2 +  ( \bar{\pi} + \tau )  \bar{\Psi}^0_{ 1 3 }  +  (  - \mqoppa + 2 \mstigma )  \Psi^2_{ 1 3 }  - \mdigamma \Psi^0_4 + 2 \bar{\sigma} \bar{\Psi}_3 +  ( 2 \epsilon + 2 \rho )  \Psi_3 - \bar{\kappa} \Psi^0_3 - \kappa \Psi_4 \\ +  ( \bar{\omega} - 2 \bar{\xi} )  \Psi^2_1 +  ( \bar{\chi} - 2 \bar{\psi} )  \bar{\Psi}^2_3 +  ( \bar{\omega} - 2 \bar{\xi} )  \bar{\Psi}^2_1 +  ( \bar{\chi} - 2 \bar{\psi} )  \Psi^2_3
\end{multline}
\begin{multline}\label{eqB26}
 - \tilde{D} \bar{\Psi}^1_1 + \bar{\delta} \Psi_2 + \tilde{D} \bar{\Psi}_1 - \bar{\delta} \bar{\Psi}_2 + D \Psi^1_3 - D \Psi_3 =   A _{ \bar{2}  1  \tilde{1} } +  (  - \pi + \bar{\tau} )  \Psi^0_2 - \bar{\phi} \Psi^1_{ 1 3 } +  ( \mqoppa - \mstigma + \upsilon )  \bar{\Psi}^1_{ 1 3 } - \lambda \Psi^1_1 +  (  - 2 \bar{\gamma} + \bar{\mu} )  \bar{\Psi}^1_1 \\ + 2 \lambda \Psi_1 - \nu \Psi^0_1 + 3 \pi \Psi_2 +  ( 2 \bar{\gamma} - 2 \bar{\mu} )  \bar{\Psi}_1 + \bar{\nu} \bar{\Psi}_0 - 3 \bar{\tau} \bar{\Psi}_2  +  (  - \bar{\pi} + \tau )  \bar{\Psi}^0_{ 1 3 } + \bar{\phi} \bar{\Psi}^2_{ 1 3 } +  ( \mqoppa - \mstigma - \upsilon )  \Psi^2_{ 1 3 } + \bar{\sigma} \bar{\Psi}^1_3 +  (  - 2 \epsilon - \rho )  \Psi^1_3 \\ - 2 \bar{\sigma} \bar{\Psi}_3 +  ( 2 \epsilon + 2 \rho )  \Psi_3 + \bar{\kappa} \Psi^0_3 - \kappa \Psi_4   +  (  - 2 \bar{\omega} + \bar{\xi} )  \Psi^2_1 +  ( 2 \bar{\chi} - \bar{\psi} )  \bar{\Psi}^2_3 +  ( \bar{\omega} - \bar{\xi} )  \bar{\Psi}^2_1 + \omega \bar{\Psi}_{ 0 3 } +  (  - \bar{\chi} + \bar{\psi} )  \Psi^2_3 - \chi \Psi_{ 1 4 }
\end{multline}
\begin{multline}\label{eqB27}
 - \tilde{D} \Psi_2 + \tilde{D} \bar{\Psi}_2 + \bar{\delta} \bar{\Psi}_3 - \delta \Psi_3 =    - A _{ \tilde{1}  2  \bar{2} } +  ( \mu - \bar{\mu} )  \Psi^0_2 +  ( \bar{\omega} - 2 \bar{\xi} )  \Psi^1_{ 1 3 } +  (  - \omega + 2 \xi )  \bar{\Psi}^1_{ 1 3 } + \nu \Psi^1_1 - \bar{\nu} \bar{\Psi}^1_1 - 2 \nu \Psi_1 - \lambda \Psi^0_{ 1 3 } \\ + 3 \mu \Psi_2  + 2 \bar{\nu} \bar{\Psi}_1   - 3 \bar{\mu} \bar{\Psi}_2 + \bar{\lambda} \bar{\Psi}^0_{ 1 3 }  +  (  - \bar{\omega} + \bar{\xi} )  \bar{\Psi}^2_{ 1 3 } +  ( \omega - \xi )  \Psi^2_{ 1 3 }   + \bar{\psi} \bar{\Psi}^0_4 - \psi \Psi^0_4 - \bar{\tau} \bar{\Psi}^1_3 + \tau \Psi^1_3  +  (  - 2 \bar{\beta} + 2 \bar{\tau} )  \bar{\Psi}_3 \\ + \bar{\sigma} \bar{\Psi}_4 +  ( 2 \beta - 2 \tau )  \Psi_3 +  ( \rho - \bar{\rho} )  \Psi^0_3 - \sigma \Psi_4   - \msampi \Psi^2_1  +  ( \mstigma - \upsilon + \bar{\upsilon} )  \bar{\Psi}^2_3 + \msampi \bar{\Psi}^2_1 +  (  - \mstigma - \upsilon + \bar{\upsilon} )  \Psi^2_3
\end{multline}
\begin{multline}\label{eqB30}
\tilde{D} \Psi^0_2 - \tilde{D} \Psi_2 - \tilde{D} \bar{\Psi}_2 + \bar{\delta} \bar{\Psi}^1_3 + \delta \Psi^1_3 - \bar{\delta} \bar{\Psi}_3 - \delta \Psi_3 =  - A _{ 2  \tilde{1}  \bar{2} } - A _{ \bar{2}  \tilde{1}  2 } +  (  - \mu - \bar{\mu} )  \Psi^0_2 +  (  - 2 \bar{\omega} + \bar{\xi} )  \Psi^1_{ 1 3 } +  (  - 2 \omega + \xi )  \bar{\Psi}^1_{ 1 3 } \\ + 2 \nu \Psi^1_1 + 2 \bar{\nu} \bar{\Psi}^1_1 - 2 \nu \Psi_1 - \lambda \Psi^0_{ 1 3 } + 3 \mu \Psi_2 - 2 \bar{\nu} \bar{\Psi}_1 + 3 \bar{\mu} \bar{\Psi}_2   - \bar{\lambda} \bar{\Psi}^0_{ 1 3 }    +  (  - 2 \bar{\omega} + \bar{\xi} )  \bar{\Psi}^2_{ 1 3 } +  (  - 2 \omega + \xi )  \Psi^2_{ 1 3 }  +  (  - 2 \bar{\beta} + 2 \bar{\tau} )  \bar{\Psi}^1_3 \\ +  (  - 2 \beta + 2 \tau )  \Psi^1_3  +  ( 2 \bar{\beta} - 2 \bar{\tau} )  \bar{\Psi}_3 - \bar{\sigma} \bar{\Psi}_4 +  ( 2 \beta - 2 \tau )  \Psi_3   +  ( \rho + \bar{\rho} )  \Psi^0_3  - \sigma \Psi_4  - \bar{\phi} \bar{\Psi}_{ 1 4 } +  (  - \upsilon + 2 \bar{\upsilon} )  \bar{\Psi}^2_3 +  ( 2 \upsilon - \bar{\upsilon} )  \Psi^2_3 - \phi \Psi_{ 1 4 }
\end{multline}
\begin{multline}\label{eqB31}
 - \tilde{D} \bar{\Psi}^0_{ 1 3 } - \bar{\delta} \Psi_3 + D \Psi_4 =  A _{ \bar{2}  \tilde{1}  \bar{2} } + \lambda \Psi^0_2 +  (  - \bar{\omega} + 2 \bar{\xi} )  \bar{\Psi}^1_{ 1 3 } - \nu \bar{\Psi}^1_1 + 3 \lambda \Psi_2 +  ( 2 \gamma - 2 \bar{\gamma} + \bar{\mu} )  \bar{\Psi}^0_{ 1 3 }   +  (  - \bar{\omega} - \bar{\xi} )  \Psi^2_{ 1 3 }  \\ +  ( 2 \bar{\chi} - \bar{\psi} )  \Psi^0_4 +  ( 2 \pi - \bar{\tau} )  \Psi^1_3 +  ( 2 \alpha - 4 \pi )  \Psi_3 - \bar{\sigma} \Psi^0_3 +  (  - 4 \epsilon - \rho )  \Psi_4 + \bar{\phi} \bar{\Psi}^2_3 + \msampi \bar{\Psi}_{ 0 3 } + \bar{\phi} \Psi^2_3 +  (  - 2 \mqoppa + \mstigma )  \Psi_{ 1 4 }
\end{multline}
\begin{multline}\label{eqB32}
 - \tilde{D} \Psi^1_3 + \tilde{D} \Psi_3 - \bar{\delta} \Psi^0_3 + \delta \Psi_4 =    - A _{ \tilde{1}  \tilde{1}  \bar{2} } + \nu \Psi^0_2 - \msampi \bar{\Psi}^1_{ 1 3 } - 3 \nu \Psi_2 + \bar{\nu} \bar{\Psi}^0_{ 1 3 } - \msampi \Psi^2_{ 1 3 } - \bar{\phi} \bar{\Psi}^0_4 +  (  - \upsilon + 2 \bar{\upsilon} )  \Psi^0_4 \\ + \lambda \bar{\Psi}^1_3  +  ( 2 \gamma + 2 \mu + \bar{\mu} )  \Psi^1_3 +  (  - 2 \gamma - 4 \mu )  \Psi_3 +  ( 2 \alpha + 2 \bar{\beta} - \bar{\tau} )  \Psi^0_3 +  (  - 4 \beta + \tau )  \Psi_4 +  (  - 2 \bar{\omega} + \bar{\xi} )  \bar{\Psi}^2_3 +  ( \bar{\omega} + \bar{\xi} )  \Psi^2_3 +  ( \omega - 2 \xi )  \Psi_{ 1 4 }
\end{multline}
\begin{multline}\label{eqB34}
 - \bar{\delta} \Psi^1_{ 1 3 } + \bar{\delta} \bar{\Psi}^2_{ 1 3 } - \tilde{D} \Psi^2_1 + D \bar{\Psi}^2_3 =   A _{ 0  1  \tilde{1} } + \bar{\nu} \bar{\Psi}^0_0 +  ( \mqoppa - \mstigma )  \Psi^0_2 +  (  - \alpha + \bar{\beta} - 2 \bar{\tau} )  \Psi^1_{ 1 3 } - \bar{\pi} \bar{\Psi}^1_{ 1 3 } +  (  - \bar{\omega} + \bar{\xi} )  \Psi^1_1 + \omega \bar{\Psi}^1_1 + \bar{\xi} \Psi_1 - \msampi \Psi^0_1 \\ +  ( \mqoppa - \upsilon )  \Psi_2 - \omega \bar{\Psi}_1 +  (  - \mstigma + \upsilon )  \bar{\Psi}_2   +  ( \alpha - \bar{\beta} + 2 \pi )  \bar{\Psi}^2_{ 1 3 } + \tau \Psi^2_{ 1 3 } - \kappa \Psi^0_4 +  ( \bar{\chi} - \bar{\psi} )  \bar{\Psi}^1_3 - \chi \Psi^1_3 - \bar{\psi} \bar{\Psi}_3 + \chi \Psi_3 + \mdigamma \Psi^0_3 + \lambda \Psi_{ 0 3 } \\ +  (  - \gamma - \bar{\gamma} + 2 \bar{\mu} )  \Psi^2_1   - \bar{\sigma} \bar{\Psi}_{ 1 4 } +  (  - \epsilon - \bar{\epsilon} - 2 \rho )  \bar{\Psi}^2_3 + \bar{\mu} \bar{\Psi}^2_1 - \rho \Psi^2_3
\end{multline}
\begin{multline}\label{eqB35}
 - \bar{\delta} \Psi^1_{ 1 3 } + \delta \bar{\Psi}^1_{ 1 3 } - \tilde{D} \Psi^2_1 + \tilde{D} \bar{\Psi}^2_1 =    - A _{ 0  2  \bar{2} } - \nu \Psi^0_0 + \bar{\nu} \bar{\Psi}^0_0 +  (  - \upsilon + \bar{\upsilon} )  \Psi^0_2 +  (  - \alpha + \bar{\beta} - 2 \bar{\tau} )  \Psi^1_{ 1 3 } +  ( \bar{\alpha} - \beta + 2 \tau )  \bar{\Psi}^1_{ 1 3 } \\ +  (  - 2 \bar{\omega} + \bar{\xi} )  \Psi^1_1 +  ( 2 \omega - \xi )  \bar{\Psi}^1_1 + \bar{\omega} \Psi_1 + \bar{\phi} \Psi^0_{ 1 3 }  +  ( \mstigma - \bar{\upsilon} )  \Psi_2   - \omega \bar{\Psi}_1 +  (  - \mstigma + \upsilon )  \bar{\Psi}_2 - \phi \bar{\Psi}^0_{ 1 3 } - \bar{\tau} \bar{\Psi}^2_{ 1 3 } + \tau \Psi^2_{ 1 3 } - \bar{\psi} \bar{\Psi}_3 + \psi \Psi_3 \\ +  (  - \gamma - \bar{\gamma} + 2 \bar{\mu} )  \Psi^2_1 - \bar{\sigma} \bar{\Psi}_{ 1 4 } + \bar{\rho} \bar{\Psi}^2_3 +  ( \gamma + \bar{\gamma} - 2 \mu )  \bar{\Psi}^2_1   - \rho \Psi^2_3 + \sigma \Psi_{ 1 4 }
\end{multline}
\begin{multline}\label{eqB36}
 - \tilde{D} \Psi^1_{ 1 3 } - \tilde{D} \bar{\Psi}^2_{ 1 3 } - \bar{\delta} \bar{\Psi}_{ 1 4 } + \delta \bar{\Psi}^2_3 =   - A _{ 0  \tilde{1}  2 } +  (  - 2 \omega + \xi )  \Psi^0_2 +  (  - \gamma + \bar{\gamma} + 3 \bar{\mu} )  \Psi^1_{ 1 3 } - \bar{\lambda} \bar{\Psi}^1_{ 1 3 } + \msampi \Psi^1_1 - \msampi \Psi_1 \\ +  ( 2 \bar{\omega} - 3 \bar{\xi} )  \Psi^0_{ 1 3 } +  ( \omega + \xi )  \Psi_2 + \omega \bar{\Psi}_2   +  (  - \gamma + \bar{\gamma} + 2 \mu )  \bar{\Psi}^2_{ 1 3 } + \rho \bar{\Psi}^0_4 - \sigma \Psi^0_4 +  ( \mstigma - 2 \upsilon + \bar{\upsilon} )  \bar{\Psi}^1_3 - \phi \Psi^1_3 \\ +  (  - \mstigma + \upsilon )  \bar{\Psi}_3 - \bar{\psi} \bar{\Psi}_4 + \phi \Psi_3 + \psi \Psi^0_3 - \nu \Psi_{ 0 3 } + 2 \bar{\nu} \Psi^2_1   +  (  - \alpha + 3 \bar{\beta} - \bar{\tau} )  \bar{\Psi}_{ 1 4 } +  (  - \bar{\alpha} - \beta + 2 \tau )  \bar{\Psi}^2_3 - \bar{\nu} \bar{\Psi}^2_1 - \tau \Psi^2_3
\end{multline}
\begin{multline}\label{eqB39}
 - \bar{\delta} \bar{\Psi}^0_4 + \delta \Psi^0_4 - \tilde{D} \bar{\Psi}^2_3 + \tilde{D} \Psi^2_3 =   - \nu \Psi^1_{ 1 3 } + \bar{\nu} \bar{\Psi}^1_{ 1 3 } - \msampi \Psi_2 + \msampi \bar{\Psi}_2 - 2 \nu \bar{\Psi}^2_{ 1 3 } + 2 \bar{\nu} \Psi^2_{ 1 3 } +  ( \alpha + 3 \bar{\beta} - \bar{\tau} )  \bar{\Psi}^0_4 \\ +  (  - \bar{\alpha} - 3 \beta + \tau )  \Psi^0_4 +  (  - 2 \bar{\omega} + \bar{\xi} )  \bar{\Psi}^1_3 +  ( 2 \omega - \xi )  \Psi^1_3   +  ( \bar{\omega} + \bar{\xi} )  \bar{\Psi}_3 + \bar{\phi} \bar{\Psi}_4 +  (  - \omega - \xi )  \Psi_3 +  ( 3 \upsilon - 3 \bar{\upsilon} )  \Psi^0_3 - \phi \Psi_4 - \lambda \bar{\Psi}_{ 1 4 } \\ +  ( \gamma + \bar{\gamma} + 3 \mu )  \bar{\Psi}^2_3  +  (  - \gamma - \bar{\gamma} - 3 \bar{\mu} )  \Psi^2_3  + \bar{\lambda} \Psi_{ 1 4 }
\end{multline}
\begin{multline}\label{eqB41}
 - \delta \Psi^0_0 + \mathring{\delta} \Psi_0 + D \Psi_{ 0 3 } =   ( 2 \mae - \bar{\alpha} - 3 \beta - \bar{\pi} )  \Psi^0_0 +  ( 2 \chi + 2 \eta + \psi )  \Psi^1_1 +  (  - \chi - 4 \eta + \psi )  \Psi_1 \\ +  ( \mqoppa + 4 \theta + \bar{\upsilon} )  \Psi_0 + 3 \phi \Psi^0_1 + 3 \mdigamma \Psi^0_{ 1 3 }   - 3 \kappa \bar{\Psi}^2_{ 1 3 } +  ( 3 \epsilon - \bar{\epsilon} - \bar{\rho} - 2 \yogh )  \Psi_{ 0 3 } - 3 \sigma \bar{\Psi}^2_1
\end{multline}
\begin{multline}\label{eqB42}
 - D \Psi^1_{ 1 3 } + \mathring{\delta} \Psi_1 + D \bar{\Psi}^2_{ 1 3 } + \delta \Psi^2_1 + \delta \bar{\Psi}^2_1 =   - A _{ 1  2  0 } +  (  - \mu + \wynn )  \Psi^0_0 - \bar{\lambda} \bar{\Psi}^0_0 +  (  - \eta + 2 \psi )  \Psi^0_2 +  (  - \epsilon + \bar{\epsilon} + 2 \bar{\rho} + \yogh )  \Psi^1_{ 1 3 } \\ + 2 \sigma \bar{\Psi}^1_{ 1 3 } + \mqoppa \Psi^1_1 +  ( \mqoppa + 2 \theta + \bar{\upsilon} )  \Psi_1 + \zeta \Psi_0   +  ( 2 \xi + \bar{\zeta} )  \Psi^0_1 - \bar{\eta} \Psi^0_{ 1 3 } +  (  - \chi - 3 \eta + \psi )  \Psi_2 + \phi \bar{\Psi}_1 +  ( \chi + \psi )  \bar{\Psi}_2 +  ( \epsilon - \bar{\epsilon} - \bar{\rho} - 2 \yogh )  \bar{\Psi}^2_{ 1 3 } \\ - \sigma \Psi^2_{ 1 3 }   - \mdigamma \bar{\Psi}^1_3 - \mdigamma \bar{\Psi}_3 + \pi \Psi_{ 0 3 }  +  (  - \mae + \bar{\alpha} + \beta + 2 \bar{\pi} )  \Psi^2_1 - \bar{\kappa} \bar{\Psi}_{ 1 4 } - 2 \kappa \bar{\Psi}^2_3 +  (  - \mae + \bar{\alpha} + \beta + \bar{\pi} )  \bar{\Psi}^2_1 - \kappa \Psi^2_3
\end{multline}
\begin{multline}\label{eqB43}
 - \delta \bar{\Psi}^0_0 + \mathring{\delta} \Psi^0_1 + D \Psi^2_1 =   A _{ 1  1  0 } + \bar{\mae} \Psi^0_0 +  ( \mae - 3 \bar{\alpha} - \beta - \bar{\pi} )  \bar{\Psi}^0_0 + \mdigamma \Psi^0_2 + 2 \bar{\kappa} \Psi^1_{ 1 3 } +  ( \bar{\chi} - \bar{\eta} )  \Psi^1_1   +  (  - \chi - \eta + \psi )  \bar{\Psi}^1_1  \\ +  ( \mqoppa + 2 \theta + 2 \bar{\theta} + 3 \bar{\upsilon} )  \Psi^0_1 +  ( \chi + \psi )  \bar{\Psi}_1 + \phi \bar{\Psi}_0 + \mdigamma \bar{\Psi}_2 - \kappa \Psi^2_{ 1 3 } +  ( \epsilon + \bar{\epsilon} - 3 \bar{\rho} - \yogh )  \Psi^2_1 - \yogh \bar{\Psi}^2_1 - \sigma \bar{\Psi}_{ 0 3 }
\end{multline}
\begin{multline}\label{eqB44}
\delta \Psi^1_{ 1 3 } - \mathring{\delta} \Psi^0_{ 1 3 } - D \bar{\Psi}_{ 1 4 } =   - A _{ 2  2  0 } + \phi \Psi^0_2 +  (  - \mae - \bar{\alpha} + \beta + 3 \bar{\pi} )  \Psi^1_{ 1 3 } +  (  - \xi - \bar{\zeta} )  \Psi^1_1 +  (  - 3 \mqoppa - 2 \theta + 2 \bar{\theta} - \bar{\upsilon} )  \Psi^0_{ 1 3 } \\ - \phi \bar{\Psi}_2 - \mae \bar{\Psi}^2_{ 1 3 } + \kappa \bar{\Psi}^0_4 +  (  - 2 \chi - \eta )  \bar{\Psi}^1_3 +  ( \chi + \psi )  \bar{\Psi}_3 - \mdigamma \bar{\Psi}_4 + \wynn \Psi_{ 0 3 } - 2 \bar{\lambda} \Psi^2_1 +  (  - \epsilon + 3 \bar{\epsilon} + \bar{\rho} + \yogh )  \bar{\Psi}_{ 1 4 } + \sigma \Psi^2_3
\end{multline}
\begin{multline}\label{eqB45}
D \Psi^1_{ 1 3 } - \mathring{\delta} \Psi^1_1 + \mathring{\delta} \Psi_1 + D \bar{\Psi}^2_{ 1 3 } - \delta \Psi^2_1 + \delta \bar{\Psi}^2_1 =  A _{ 2  1  0 } - \mu \Psi^0_0 + \bar{\lambda} \bar{\Psi}^0_0 +  ( 2 \chi + \eta )  \Psi^0_2 +  ( \epsilon - \bar{\epsilon} - 2 \bar{\rho} - \yogh )  \Psi^1_{ 1 3 } + 2 \sigma \bar{\Psi}^1_{ 1 3 } \\ +  (  - \mqoppa - 2 \theta - 2 \bar{\upsilon} )  \Psi^1_1  + 2 \phi \bar{\Psi}^1_1 +  ( \mqoppa + 2 \theta + \bar{\upsilon} )  \Psi_1  + \zeta \Psi_0 - \bar{\zeta} \Psi^0_1 +  (  - 2 \bar{\chi} + \bar{\eta} )  \Psi^0_{ 1 3 } +  (  - \chi - 3 \eta + \psi )  \Psi_2 - \phi \bar{\Psi}_1 +  (  - \chi - \psi )  \bar{\Psi}_2 \\ +  ( \epsilon - \bar{\epsilon} - \bar{\rho} - \yogh )  \bar{\Psi}^2_{ 1 3 }   + \sigma \Psi^2_{ 1 3 }   - \mdigamma \bar{\Psi}^1_3 + \mdigamma \bar{\Psi}_3 +  ( \bar{\mae} + \pi )  \Psi_{ 0 3 }   +  ( \mae - \bar{\alpha} - \beta - 2 \bar{\pi} )  \Psi^2_1 + \bar{\kappa} \bar{\Psi}_{ 1 4 } - 2 \kappa \bar{\Psi}^2_3 \\ +  (  - 2 \mae   + \bar{\alpha} + \beta + \bar{\pi} )  \bar{\Psi}^2_1 + \kappa \Psi^2_3
\end{multline}
\begin{multline}\label{eqB46}
 - \delta \bar{\Psi}^1_{ 1 3 } + \mathring{\delta} \Psi_2 + D \bar{\Psi}^2_3 =   A _{ \tilde{1}  1  0 } + A _{ \bar{2}  2  0 } +  ( \mqoppa - \bar{\upsilon} )  \Psi^0_2 +  ( \mae - \bar{\alpha} + \beta - \bar{\pi} )  \bar{\Psi}^1_{ 1 3 } - \zeta \Psi^1_1 + \xi \bar{\Psi}^1_1 + 2 \zeta \Psi_1   +  ( \mqoppa + \bar{\upsilon} )  \Psi_2 \\ + \phi \bar{\Psi}^0_{ 1 3 }   +  ( \bar{\mae} + 2 \pi )  \bar{\Psi}^2_{ 1 3 } - \kappa \Psi^0_4 + \bar{\chi} \bar{\Psi}^1_3 +  (  - \chi - \eta )  \Psi^1_3 +  ( \chi + 2 \eta - \psi )  \Psi_3 + \mdigamma \Psi^0_3 +  (  - \epsilon - \bar{\epsilon} - \bar{\rho} - \yogh )  \bar{\Psi}^2_3 +  ( 2 \mu - \wynn )  \bar{\Psi}^2_1 - \sigma \Psi_{ 1 4 }
\end{multline}
\begin{multline}\label{eqB47}
\mathring{\delta} \Psi^0_0 + D \Psi^1_1 + 2 \delta \Psi^0_1 =   - A _{ 1  1  2 } + ( \mqoppa + 3 \theta + \bar{\theta} + 2 \bar{\upsilon} )  \Psi^0_0 + 2 \phi \bar{\Psi}^0_0 - 2 \kappa \Psi^0_2 + 2 \mdigamma \Psi^1_{ 1 3 } +  ( 2 \epsilon - 2 \bar{\rho} - \yogh )  \Psi^1_1  - 2 \sigma \bar{\Psi}^1_1 \\ - \yogh \Psi_1 - \bar{\mae} \Psi_0   +  (  - 3 \mae + 4 \bar{\alpha} + 4 \beta + 2 \bar{\pi} )  \Psi^0_1 - 2 \bar{\kappa} \Psi^0_{ 1 3 } - \mdigamma \bar{\Psi}^2_{ 1 3 } +  (  - \bar{\chi} + \bar{\eta} )  \Psi_{ 0 3 } +  (  - 2 \chi - 2 \psi )  \Psi^2_1 +  ( \chi + 3 \eta - 2 \psi )  \bar{\Psi}^2_1
\end{multline}
\begin{multline}\label{eqB48}
 - \delta \Psi^1_1 - 2 D \Psi^0_{ 1 3 } + \mathring{\delta} \Psi_{ 0 3 } =   - A _{ 2  1  2 } + (  - \xi - \bar{\zeta} )  \Psi^0_0 + 2 \sigma \Psi^0_2 +  (  - 2 \chi - 2 \psi )  \Psi^1_{ 1 3 } +  ( 2 \mae - 2 \beta - 2 \bar{\pi} )  \Psi^1_1 - \mae \Psi_1 + \wynn \Psi_0 - 2 \bar{\lambda} \Psi^0_1 \\ +  (  - 4 \epsilon + 4 \bar{\epsilon} + 2 \bar{\rho} + 3 \yogh )  \Psi^0_{ 1 3 }  +  (  - 2 \chi - 3 \eta + \psi )  \bar{\Psi}^2_{ 1 3 } - 2 \kappa \bar{\Psi}^1_3 +  ( 2 \mqoppa + 3 \theta - \bar{\theta} + \bar{\upsilon} )  \Psi_{ 0 3 } + 2 \phi \Psi^2_1 + 2 \mdigamma \bar{\Psi}_{ 1 4 } - \phi \bar{\Psi}^2_1
\end{multline}
\begin{multline}\label{eqB49}
 - \delta \Psi^0_2 + \mathring{\delta} \bar{\Psi}^2_{ 1 3 } + D \bar{\Psi}^1_3 =   A _{ 0  2  0 } - A _{ \tilde{1}  1  2 } +  ( \mae - 2 \bar{\pi} )  \Psi^0_2 +  (  - \mqoppa + \bar{\upsilon} )  \Psi^1_{ 1 3 } + \phi \bar{\Psi}^1_{ 1 3 } - \mu \Psi^1_1 - \bar{\lambda} \bar{\Psi}^1_1 + \wynn \Psi_1 \\ +  (  - \bar{\mae} - 2 \pi )  \Psi^0_{ 1 3 } - \mae \Psi_2   +  ( 2 \mqoppa + \theta - \bar{\theta} + \bar{\upsilon} )  \bar{\Psi}^2_{ 1 3 } + \phi \Psi^2_{ 1 3 } + \mdigamma \bar{\Psi}^0_4 +  (  - 2 \bar{\epsilon} - \bar{\rho} - \yogh )  \bar{\Psi}^1_3 - \sigma \Psi^1_3 + 2 \kappa \Psi^0_3 + \zeta \Psi_{ 0 3 } + \xi \Psi^2_1 \\ - \bar{\chi} \bar{\Psi}_{ 1 4 } +  (  - 2 \chi - 2 \eta + \psi )  \bar{\Psi}^2_3 +  ( \xi + \bar{\zeta} )  \bar{\Psi}^2_1 +  ( \chi + \psi )  \Psi^2_3
\end{multline}
\begin{multline}\label{eqB51}
 - \tilde{D} \Psi^0_0 + \mathring{\delta} \Psi_1 + D \bar{\Psi}^2_{ 1 3 } =   - A _{ 2  1  0 } +  (  - 3 \gamma - \bar{\gamma} + \wynn )  \Psi^0_0 +  ( \chi - \eta )  \Psi^0_2 + \yogh \Psi^1_{ 1 3 } + \mstigma \Psi^1_1 +  ( \mqoppa + \mstigma + 2 \theta )  \Psi_1  +  ( \bar{\omega} + \zeta )  \Psi_0 \\ +  ( 3 \omega + \bar{\zeta} )  \Psi^0_1   +  (  - \bar{\chi} - \bar{\eta} )  \Psi^0_{ 1 3 } +  (  - \chi - 3 \eta )  \Psi_2 +  ( \epsilon - \bar{\epsilon} - 2 \yogh )  \bar{\Psi}^2_{ 1 3 } - \mdigamma \bar{\Psi}^1_3 +  ( \pi - \bar{\tau} )  \Psi_{ 0 3 } - \mae \Psi^2_1 - 2 \kappa \bar{\Psi}^2_3 +  (  - \mae + \bar{\pi} - 3 \tau )  \bar{\Psi}^2_1
\end{multline}
\begin{multline}\label{eqB53}
 - \mathring{\delta} \Psi^0_{ 1 3 } + \delta \bar{\Psi}^2_{ 1 3 } - \tilde{D} \Psi_{ 0 3 } =   - A _{ 2  2  0 } + \bar{\nu} \Psi^0_0 + \phi \Psi^0_2 - \mae \Psi^1_{ 1 3 } +  (  - 2 \omega - \bar{\zeta} )  \Psi^1_1 +  ( \omega + \xi )  \Psi_1 - \msampi \Psi_0 \\ +  (  - 3 \mstigma - 2 \theta + 2 \bar{\theta} - \bar{\upsilon} )  \Psi^0_{ 1 3 }   - \phi \Psi_2 +  (  - \mae - \bar{\alpha} + \beta + 3 \tau )  \bar{\Psi}^2_{ 1 3 } +  (  - \eta - \psi )  \bar{\Psi}^1_3 +  (  - 3 \gamma + \bar{\gamma} + \mu + \wynn )  \Psi_{ 0 3 } + \yogh \bar{\Psi}_{ 1 4 } - 2 \sigma \bar{\Psi}^2_3 + \bar{\lambda} \bar{\Psi}^2_1
\end{multline}
\begin{multline}\label{eqB54}
 - \mathring{\delta} \Psi^1_1 + \mathring{\delta} \Psi_1 - \bar{\delta} \Psi_{ 0 3 } + \delta \bar{\Psi}^2_1 =   A _{ 1  2  0 } +  (  - \mu + \bar{\mu} )  \Psi^0_0 +  ( \eta + \psi )  \Psi^0_2 - \yogh \Psi^1_{ 1 3 } + 2 \sigma \bar{\Psi}^1_{ 1 3 } +  (  - 2 \theta - 2 \upsilon - \bar{\upsilon} )  \Psi^1_1 + \phi \bar{\Psi}^1_1 \\ +  ( 2 \theta + \upsilon + \bar{\upsilon} )  \Psi_1 +  (  - \bar{\xi} + \zeta )  \Psi_0 +  ( \xi - \bar{\zeta} )  \Psi^0_1 +  ( \bar{\eta} - 3 \bar{\psi} )  \Psi^0_{ 1 3 } +  (  - 3 \eta + \psi )  \Psi_2 +  ( 3 \rho - \bar{\rho} - \yogh )  \bar{\Psi}^2_{ 1 3 } +  ( \bar{\mae} - 3 \alpha + \bar{\beta} )  \Psi_{ 0 3 } + \mae \Psi^2_1 \\ +  (  - 2 \mae + \bar{\alpha} + \beta )  \bar{\Psi}^2_1
\end{multline}
\begin{multline}\label{eqB55}
\mathring{\delta} \Psi^0_2 + \mathring{\delta} \Psi_2 + \mathring{\delta} \bar{\Psi}_2 + \tilde{D} \Psi^2_1 + D \bar{\Psi}^2_3 + \tilde{D} \bar{\Psi}^2_1 + D \Psi^2_3 =   - A _{ 1  \tilde{1}  0 } - A _{ \tilde{1}  1  0 } - \nu \Psi^0_0 - \bar{\nu} \bar{\Psi}^0_0 +  ( 2 \mqoppa + 2 \mstigma )  \Psi^0_2 +  (  - \pi + 2 \bar{\tau} )  \Psi^1_{ 1 3 } \\ +  (  - \bar{\pi} + 2 \tau )  \bar{\Psi}^1_{ 1 3 } +  ( \bar{\omega} + 2 \zeta )  \Psi_1   + 2 \msampi \Psi^0_1 +  ( \mqoppa + \mstigma )  \Psi_2   +  ( \omega + 2 \bar{\zeta} )  \bar{\Psi}_1  +  ( \mqoppa + \mstigma )  \bar{\Psi}_2  +  ( 2 \pi - \bar{\tau} )  \bar{\Psi}^2_{ 1 3 } +  ( 2 \bar{\pi} - \tau )  \Psi^2_{ 1 3 } - \bar{\kappa} \bar{\Psi}^0_4 - \kappa \Psi^0_4 \\ +  ( \bar{\chi} + 2 \bar{\eta} )  \bar{\Psi}_3 +  ( \chi + 2 \eta )  \Psi_3  + 2 \mdigamma \Psi^0_3 +  ( \gamma + \bar{\gamma} - 2 \wynn )  \Psi^2_1   +  (  - \epsilon - \bar{\epsilon} - 2 \yogh )  \bar{\Psi}^2_3 +  ( \gamma + \bar{\gamma} - 2 \wynn )  \bar{\Psi}^2_1 +  (  - \epsilon - \bar{\epsilon} - 2 \yogh )  \Psi^2_3
\end{multline}
\begin{multline}\label{eqB58}
\mathring{\delta} \Psi_2 - \mathring{\delta} \bar{\Psi}_2 - \tilde{D} \Psi^2_1 + D \bar{\Psi}^2_3 + \tilde{D} \bar{\Psi}^2_1 - D \Psi^2_3 =   - \nu \Psi^0_0 + \bar{\nu} \bar{\Psi}^0_0 +  (  - \bar{\mae} + \pi - 2 \bar{\tau} )  \Psi^1_{ 1 3 } +  ( \mae - \bar{\pi} + 2 \tau )  \bar{\Psi}^1_{ 1 3 } +  (  - 2 \bar{\omega} - \zeta )  \Psi^1_1 \\ +  ( 2 \omega + \bar{\zeta} )  \bar{\Psi}^1_1   +  ( \bar{\omega} + 2 \zeta )  \Psi_1 +  ( \mqoppa + \mstigma )  \Psi_2   +  (  - \omega - 2 \bar{\zeta} )  \bar{\Psi}_1 +  (  - \mqoppa - \mstigma )  \bar{\Psi}_2 +  ( \bar{\mae} + 2 \pi - \bar{\tau} )  \bar{\Psi}^2_{ 1 3 }  +  (  - \mae - 2 \bar{\pi} + \tau )  \Psi^2_{ 1 3 } \\ + \bar{\kappa} \bar{\Psi}^0_4 - \kappa \Psi^0_4  +  ( 2 \bar{\chi} + \bar{\eta} )  \bar{\Psi}^1_3 +  (  - 2 \chi - \eta )  \Psi^1_3   +  (  - \bar{\chi} - 2 \bar{\eta} )  \bar{\Psi}_3  +  ( \chi + 2 \eta )  \Psi_3 +  (  - \gamma - \bar{\gamma} + \wynn )  \Psi^2_1 +  (  - \epsilon - \bar{\epsilon} - \yogh )  \bar{\Psi}^2_3 \\ +  ( \gamma + \bar{\gamma} - \wynn )  \bar{\Psi}^2_1 +  ( \epsilon + \bar{\epsilon} + \yogh )  \Psi^2_3
\end{multline}
\begin{multline}\label{eqB65}
 - \delta \bar{\Psi}^0_4 + \mathring{\delta} \bar{\Psi}_4 + \tilde{D} \bar{\Psi}_{ 1 4 } =   - 3 \bar{\nu} \Psi^1_{ 1 3 } + 3 \msampi \Psi^0_{ 1 3 } +  ( 2 \mae + 3 \bar{\alpha} + \beta - \tau )  \bar{\Psi}^0_4 +  ( 2 \omega + \xi + 2 \bar{\zeta} )  \bar{\Psi}^1_3 \\ +  (  - \omega + \xi - 4 \bar{\zeta} )  \bar{\Psi}_3  +  ( \mstigma - 4 \bar{\theta} + \bar{\upsilon} )  \bar{\Psi}_4 + 3 \phi \Psi^0_3 +  ( \gamma - 3 \bar{\gamma} - \mu - 2 \wynn )  \bar{\Psi}_{ 1 4 } - 3 \bar{\lambda} \Psi^2_3
\end{multline}
\begin{multline}\label{eqB66}
 - \tilde{D} \Psi^1_{ 1 3 } - \tilde{D} \bar{\Psi}^2_{ 1 3 } + \mathring{\delta} \bar{\Psi}^1_3 - \mathring{\delta} \bar{\Psi}_3 + \delta \bar{\Psi}^2_3 - \delta \Psi^2_3 =   - A _{ 2  \tilde{1}  0 } +  (  - 2 \omega - \bar{\zeta} )  \Psi^0_2 +  (  - \gamma + \bar{\gamma} + \mu + \wynn )  \Psi^1_{ 1 3 } - \bar{\lambda} \bar{\Psi}^1_{ 1 3 } + \msampi \Psi^1_1 - \msampi \Psi_1 \\ +  ( 2 \bar{\omega} - \zeta )  \Psi^0_{ 1 3 } +  ( \omega + \xi )  \Psi_2   +  ( \omega - \xi + 3 \bar{\zeta} )  \bar{\Psi}_2 +  (  - \gamma + \bar{\gamma} + 2 \mu + \wynn )  \bar{\Psi}^2_{ 1 3 } - 2 \bar{\lambda} \Psi^2_{ 1 3 } + \bar{\rho} \bar{\Psi}^0_4 - \sigma \Psi^0_4 +  ( \mstigma - 2 \bar{\theta} + 2 \bar{\upsilon} )  \bar{\Psi}^1_3 \\ - 2 \phi \Psi^1_3  +  (  - \mstigma + 2 \bar{\theta} - \bar{\upsilon} )  \bar{\Psi}_3 - \bar{\eta} \bar{\Psi}_4 + \phi \Psi_3 + \eta \Psi^0_3 - \nu \Psi_{ 0 3 } + 2 \bar{\nu} \Psi^2_1  +  (  - \bar{\mae} - \bar{\tau} )  \bar{\Psi}_{ 1 4 } \\ +  (  - \mae - \bar{\alpha} - \beta + 2 \tau )  \bar{\Psi}^2_3 - \bar{\nu} \bar{\Psi}^2_1   +  ( 2 \mae + \bar{\alpha} + \beta - \tau )  \Psi^2_3
\end{multline}
\begin{multline}\label{eqB67}
\bar{\delta} \Psi^1_{ 1 3 } - \delta \bar{\Psi}^1_{ 1 3 } + \mathring{\delta} \Psi_2 - \mathring{\delta} \bar{\Psi}_2 - \bar{\delta} \bar{\Psi}^2_{ 1 3 } + \delta \Psi^2_{ 1 3 } =   (  - \bar{\mae} + \alpha - \bar{\beta} )  \Psi^1_{ 1 3 } +  ( \mae - \bar{\alpha} + \beta )  \bar{\Psi}^1_{ 1 3 } +  (  - \bar{\xi} - \zeta )  \Psi^1_1 +  ( \xi + \bar{\zeta} )  \bar{\Psi}^1_1 \\ +  (  - \bar{\xi} + 2 \zeta )  \Psi_1  +  ( \upsilon + \bar{\upsilon} )  \Psi_2  +  ( \xi - 2 \bar{\zeta} )  \bar{\Psi}_1 +  (  - \upsilon - \bar{\upsilon} )  \bar{\Psi}_2   +  ( \bar{\mae} - \alpha + \bar{\beta} )  \bar{\Psi}^2_{ 1 3 }  +  (  - \mae + \bar{\alpha} - \beta )  \Psi^2_{ 1 3 } +  ( \bar{\eta} + \bar{\psi} )  \bar{\Psi}^1_3 \\ +  (  - \eta - \psi )  \Psi^1_3   +  (  - 2 \bar{\eta} + \bar{\psi} )  \bar{\Psi}_3 +  ( 2 \eta - \psi )  \Psi_3   - \lambda \Psi_{ 0 3 }   +  ( \mu - 2 \bar{\mu} + \wynn )  \Psi^2_1  + \bar{\sigma} \bar{\Psi}_{ 1 4 } +  ( 2 \rho - \bar{\rho} - \yogh )  \bar{\Psi}^2_3 \\ +  ( 2 \mu - \bar{\mu} - \wynn )  \bar{\Psi}^2_1 + \bar{\lambda} \bar{\Psi}_{ 0 3 } +  ( \rho - 2 \bar{\rho} + \yogh )  \Psi^2_3 - \sigma \Psi_{ 1 4 }
\end{multline}
\begin{multline}\label{eqB69}
\mathring{\delta} \bar{\Psi}^1_3 - \mathring{\delta} \bar{\Psi}_3 + \bar{\delta} \bar{\Psi}_{ 1 4 } - \delta \Psi^2_3 =   - A _{ \tilde{1}  2  0 } +  (  - \xi - \bar{\zeta} )  \Psi^0_2 +  ( \mu - 3 \bar{\mu} + \wynn )  \Psi^1_{ 1 3 } +  ( 3 \bar{\xi} - \zeta )  \Psi^0_{ 1 3 } +  (  - \xi + 3 \bar{\zeta} )  \bar{\Psi}_2 + \wynn \bar{\Psi}^2_{ 1 3 } - 2 \bar{\lambda} \Psi^2_{ 1 3 } \\ +  (  - \rho + \bar{\rho} )  \bar{\Psi}^0_4  +  (  - 2 \bar{\theta} + 2 \upsilon + \bar{\upsilon} )  \bar{\Psi}^1_3 - \phi \Psi^1_3 +  ( 2 \bar{\theta} - \upsilon - \bar{\upsilon} )  \bar{\Psi}_3 +  (  - \bar{\eta} + \bar{\psi} )  \bar{\Psi}_4 +  ( \eta - \psi )  \Psi^0_3 +  (  - \bar{\mae} + \alpha - 3 \bar{\beta} )  \bar{\Psi}_{ 1 4 } \\ - \mae \bar{\Psi}^2_3 +  ( 2 \mae + \bar{\alpha} + \beta )  \Psi^2_3
\end{multline}
\begin{multline}\label{eqB70}
 - \mathring{\delta} \Psi^0_2 - \bar{\delta} \Psi^1_{ 1 3 } - \delta \bar{\Psi}^1_{ 1 3 } + \mathring{\delta} \Psi_2 + \mathring{\delta} \bar{\Psi}_2 - \bar{\delta} \bar{\Psi}^2_{ 1 3 } - \delta \Psi^2_{ 1 3 } =   - A _{ 2  \bar{2}  0 } - A _{ \bar{2}  2  0 } +  (  - 2 \upsilon - 2 \bar{\upsilon} )  \Psi^0_2 +  ( 2 \bar{\mae} - \alpha + \bar{\beta} )  \Psi^1_{ 1 3 } \\ +  ( 2 \mae - \bar{\alpha} + \beta )  \bar{\Psi}^1_{ 1 3 }  +  ( \bar{\xi} - 2 \zeta )  \Psi^1_1 +  ( \xi - 2 \bar{\zeta} )  \bar{\Psi}^1_1 +  (  - \bar{\xi} + 2 \zeta )  \Psi_1 + 2 \bar{\phi} \Psi^0_{ 1 3 }   +  ( \upsilon + \bar{\upsilon} )  \Psi_2 +  (  - \xi + 2 \bar{\zeta} )  \bar{\Psi}_1 +  ( \upsilon + \bar{\upsilon} )  \bar{\Psi}_2 + 2 \phi \bar{\Psi}^0_{ 1 3 } \\ +  ( 2 \bar{\mae} - \alpha + \bar{\beta} )  \bar{\Psi}^2_{ 1 3 } +  ( 2 \mae - \bar{\alpha} + \beta )  \Psi^2_{ 1 3 } +  (  - 2 \bar{\eta} + \bar{\psi} )  \bar{\Psi}^1_3 +  (  - 2 \eta + \psi )  \Psi^1_3   +  ( 2 \bar{\eta} - \bar{\psi} )  \bar{\Psi}_3  +  ( 2 \eta - \psi )  \Psi_3 - \lambda \Psi_{ 0 3 } +  (  - \mu + 2 \bar{\mu} )  \Psi^2_1 \\ - \bar{\sigma} \bar{\Psi}_{ 1 4 }  +  ( 2 \rho - \bar{\rho} )  \bar{\Psi}^2_3 +  ( 2 \mu - \bar{\mu} )  \bar{\Psi}^2_1 - \bar{\lambda} \bar{\Psi}_{ 0 3 } +  (  - \rho + 2 \bar{\rho} )  \Psi^2_3 - \sigma \Psi_{ 1 4 }
\end{multline}
\begin{multline}\label{eqB71}
 - \tilde{D} \bar{\Psi}^1_{ 1 3 } + D \Psi^0_4 - \mathring{\delta} \Psi_3 =   A _{ \bar{2}  \tilde{1}  0 } +  (  - \bar{\omega} + \zeta )  \Psi^0_2 +  ( \gamma - \bar{\gamma} + 2 \wynn )  \bar{\Psi}^1_{ 1 3 } + \msampi \bar{\Psi}^1_1 +  ( \bar{\omega} + 3 \zeta )  \Psi_2   +  ( \omega + \bar{\zeta} )  \bar{\Psi}^0_{ 1 3 } - \wynn \Psi^2_{ 1 3 }  \\ +  (  - 3 \epsilon - \bar{\epsilon} - \yogh )  \Psi^0_4 - \mqoppa \Psi^1_3 +  (  - \mqoppa - \mstigma + 2 \theta )  \Psi_3 +  (  - 3 \bar{\chi} - \bar{\eta} )  \Psi^0_3 +  (  - \chi - \eta )  \Psi_4 +  ( \bar{\mae} + 3 \pi - \bar{\tau} )  \bar{\Psi}^2_3 + 2 \nu \bar{\Psi}^2_1 + \bar{\mae} \Psi^2_3 +  ( \bar{\pi} - \tau )  \Psi_{ 1 4 }
\end{multline}
\begin{multline}\label{eqB73}
\delta \Psi^0_4 - \mathring{\delta} \Psi^0_3 - \tilde{D} \bar{\Psi}^2_3 =   - A _{ \tilde{1}  \tilde{1}  0 } - \msampi \Psi^0_2 + \bar{\nu} \bar{\Psi}^1_{ 1 3 } - \msampi \Psi_2 - 2 \nu \bar{\Psi}^2_{ 1 3 } - \bar{\mae} \bar{\Psi}^0_4 +  (  - \mae - \bar{\alpha} - 3 \beta + \tau )  \Psi^0_4 +  (  - \bar{\omega} + \zeta )  \bar{\Psi}^1_3 \\ +  ( \omega - \xi + \bar{\zeta} )  \Psi^1_3  +  (  - \omega - \xi )  \Psi_3 +  (  - \mstigma + 2 \theta + 2 \bar{\theta} - 3 \bar{\upsilon} )  \Psi^0_3 - \phi \Psi_4 +  ( \gamma + \bar{\gamma} + 3 \mu + \wynn )  \bar{\Psi}^2_3 + \wynn \Psi^2_3 + \bar{\lambda} \Psi_{ 1 4 }
\end{multline}
\begin{multline}\label{eqB75}
D \Psi^0_2 + 2 \tilde{D} \Psi^0_1 - \mathring{\delta} \Psi^2_1 - \mathring{\delta} \bar{\Psi}^2_1 =    - A _{ 0  1  0 } - A _{ 1  1  \tilde{1} } +  ( 2 \bar{\omega} + \zeta )  \Psi^0_0 +  ( 2 \omega + \bar{\zeta} )  \bar{\Psi}^0_0 - 2 \yogh \Psi^0_2 +  (  - \bar{\chi} - 2 \bar{\eta} )  \Psi^1_{ 1 3 } +  (  - \chi - 2 \eta )  \bar{\Psi}^1_{ 1 3 } \\ +  ( \pi - 2 \bar{\tau} )  \Psi^1_1 +  ( \bar{\pi} - 2 \tau )  \bar{\Psi}^1_1 - \bar{\mae} \Psi_1   +  ( 4 \gamma + 4 \bar{\gamma} - 2 \wynn )  \Psi^0_1 - \yogh \Psi_2 - \mae \bar{\Psi}_1 - \yogh \bar{\Psi}_2  +  (  - \bar{\chi} + \bar{\eta} )  \bar{\Psi}^2_{ 1 3 } +  (  - \chi + \eta )  \Psi^2_{ 1 3 } + \bar{\kappa} \bar{\Psi}^1_3 \\ + \kappa \Psi^1_3   +  (  - \mqoppa - 2 \mstigma - \theta - \bar{\theta} )  \Psi^2_1 - \mdigamma \bar{\Psi}^2_3   +  (  - \mqoppa - 2 \mstigma - \theta - \bar{\theta} )  \bar{\Psi}^2_1 - \mdigamma \Psi^2_3
\end{multline}
\begin{multline}\label{eqB76}
 - \mathring{\delta} \Psi^1_{ 1 3 } - \tilde{D} \Psi^1_1 + \mathring{\delta} \bar{\Psi}^2_{ 1 3 } + D \bar{\Psi}^1_3 =   - A _{ 2  1  \tilde{1} } - \msampi \Psi^0_0 +  (  - 2 \bar{\pi} + 2 \tau )  \Psi^0_2 +  (  - \mqoppa - 2 \mstigma - \theta + \bar{\theta} )  \Psi^1_{ 1 3 } +  (  - 2 \gamma + \wynn )  \Psi^1_1 + \wynn \Psi_1 - 2 \bar{\nu} \Psi^0_1 \\ +  (  - 2 \pi + 2 \bar{\tau} )  \Psi^0_{ 1 3 } - \mae \Psi_2 + \mae \bar{\Psi}_2   +  ( 2 \mqoppa + \mstigma + \theta - \bar{\theta} )  \bar{\Psi}^2_{ 1 3 } + \mdigamma \bar{\Psi}^0_4 +  (  - 2 \bar{\epsilon} - \yogh )  \bar{\Psi}^1_3 - \yogh \bar{\Psi}_3 + 2 \kappa \Psi^0_3 +  ( \bar{\omega} + \zeta )  \Psi_{ 0 3 } \\ +  ( 2 \omega + 2 \bar{\zeta} )  \Psi^2_1 +  (  - \bar{\chi} - \bar{\eta} )  \bar{\Psi}_{ 1 4 } +  (  - 2 \chi - 2 \eta )  \bar{\Psi}^2_3   +  (  - \omega + \bar{\zeta} )  \bar{\Psi}^2_1 +  ( \chi - \eta )  \Psi^2_3
\end{multline}
\begin{multline}\label{eqB78}
 - D \Psi^0_2 - \bar{\delta} \Psi^1_1 + \mathring{\delta} \Psi^2_1 =   - A _{ 0  1  0 } - A _{ 2  1  \bar{2} }  - \bar{\xi} \Psi^0_0 - \bar{\zeta} \bar{\Psi}^0_0 +  ( 2 \rho + \yogh )  \Psi^0_2 +  ( \bar{\chi} + 2 \bar{\eta} - 2 \bar{\psi} )  \Psi^1_{ 1 3 } + \chi \bar{\Psi}^1_{ 1 3 } +  ( \bar{\mae} - 2 \alpha - \pi )  \Psi^1_1 \\ +  (  - \mae - \bar{\pi} )  \bar{\Psi}^1_1 +  (  - 2 \bar{\mu} + \wynn )  \Psi^0_1   + 2 \bar{\sigma} \Psi^0_{ 1 3 } + \mae \bar{\Psi}_1 + \yogh \bar{\Psi}_2 +  ( \bar{\chi} + \bar{\psi} )  \bar{\Psi}^2_{ 1 3 } +  ( \chi - \eta )  \Psi^2_{ 1 3 } - \bar{\kappa} \bar{\Psi}^1_3 - \kappa \Psi^1_3 + \bar{\phi} \Psi_{ 0 3 } \\ +  ( \mqoppa + \theta + \bar{\theta} + 2 \upsilon )  \Psi^2_1 + \mdigamma \bar{\Psi}^2_3 +  ( \mqoppa - \upsilon )  \bar{\Psi}^2_1 + \mdigamma \Psi^2_3
\end{multline}
\begin{multline}\label{eqB79}
\delta \Psi^0_2 - \mathring{\delta} \Psi^1_{ 1 3 } - \tilde{D} \Psi^1_1 =    - A _{ 0  2  0 } + A _{ 1  \tilde{1}  2 } - \msampi \Psi^0_0 +  (  - \mae + 2 \tau )  \Psi^0_2 +  (  - 2 \mstigma - \theta + \bar{\theta} - \bar{\upsilon} )  \Psi^1_{ 1 3 } - \phi \bar{\Psi}^1_{ 1 3 } +  (  - 2 \gamma + \mu + \wynn )  \Psi^1_1 + \bar{\lambda} \bar{\Psi}^1_1 \\ - 2 \bar{\nu} \Psi^0_1 +  ( \bar{\mae} + 2 \bar{\tau} )  \Psi^0_{ 1 3 }   + \mae \bar{\Psi}_2 +  ( \mstigma - \bar{\upsilon} )  \bar{\Psi}^2_{ 1 3 } - \phi \Psi^2_{ 1 3 } + \bar{\rho} \bar{\Psi}^1_3 + \sigma \Psi^1_3 - \yogh \bar{\Psi}_3 + \bar{\omega} \Psi_{ 0 3 } +  ( 2 \omega - \xi + 2 \bar{\zeta} )  \Psi^2_1 - \bar{\eta} \bar{\Psi}_{ 1 4 } - \psi \bar{\Psi}^2_3 \\ +  (  - \omega - \xi )  \bar{\Psi}^2_1 +  (  - \eta - \psi )  \Psi^2_3
\end{multline}
\begin{multline}\label{eqB80}
2 \tilde{D} \Psi^0_{ 1 3 } + \delta \bar{\Psi}^1_3 - \mathring{\delta} \bar{\Psi}_{ 1 4 } =   A _{ 2  \tilde{1}  2 } - 2 \bar{\lambda} \Psi^0_2 +  ( 2 \omega - \xi + 3 \bar{\zeta} )  \Psi^1_{ 1 3 } + 2 \bar{\nu} \Psi^1_1 +  ( 4 \gamma - 4 \bar{\gamma} - 2 \mu - 3 \wynn )  \Psi^0_{ 1 3 } +  ( 2 \omega + 2 \xi )  \bar{\Psi}^2_{ 1 3 } \\ +  ( \eta + \psi )  \bar{\Psi}^0_4   +  (  - 2 \mae - 2 \bar{\alpha} + 2 \tau )  \bar{\Psi}^1_3 + \mae \bar{\Psi}_3 - \yogh \bar{\Psi}_4 + 2 \sigma \Psi^0_3 - 2 \msampi \Psi_{ 0 3 } +  (  - 2 \mstigma - \theta + 3 \bar{\theta} - \bar{\upsilon} )  \bar{\Psi}_{ 1 4 } - 2 \phi \bar{\Psi}^2_3 + \phi \Psi^2_3
\end{multline}
\begin{multline}\label{eqB81}
 - \bar{\delta} \Psi^1_1 + \delta \bar{\Psi}^1_1 + \mathring{\delta} \Psi^2_1 - \mathring{\delta} \bar{\Psi}^2_1 =   - A _{ 1  2  \bar{2} } + (  - \bar{\xi} + \zeta )  \Psi^0_0 +  ( \xi - \bar{\zeta} )  \bar{\Psi}^0_0 +  ( 2 \rho - 2 \bar{\rho} )  \Psi^0_2 +  ( 2 \bar{\eta} - 2 \bar{\psi} )  \Psi^1_{ 1 3 } +  (  - 2 \eta + 2 \psi )  \bar{\Psi}^1_{ 1 3 } \\ +  ( 2 \bar{\mae} - 2 \alpha )  \Psi^1_1 +  (  - 2 \mae + 2 \bar{\alpha} )  \bar{\Psi}^1_1 - \bar{\mae} \Psi_1   +  ( 2 \mu - 2 \bar{\mu} )  \Psi^0_1 + 2 \bar{\sigma} \Psi^0_{ 1 3 } - \yogh \Psi_2 + \mae \bar{\Psi}_1 + \yogh \bar{\Psi}_2 - 2 \sigma \bar{\Psi}^0_{ 1 3 } +  ( \bar{\eta} + \bar{\psi} )  \bar{\Psi}^2_{ 1 3 } \\ +  (  - \eta - \psi )  \Psi^2_{ 1 3 } + \bar{\phi} \Psi_{ 0 3 } +  ( \theta + \bar{\theta} + 2 \upsilon + \bar{\upsilon} )  \Psi^2_1  +  (  - \theta - \bar{\theta} - \upsilon - 2 \bar{\upsilon} )  \bar{\Psi}^2_1 - \phi \bar{\Psi}_{ 0 3 }
\end{multline}
\begin{multline}\label{eqB82}
 - \delta \Psi^0_2 + \mathring{\delta} \Psi^1_{ 1 3 } + 2 \bar{\delta} \Psi^0_{ 1 3 } + \mathring{\delta} \bar{\Psi}^2_{ 1 3 } =    - A _{ 0  2  0 } - A _{ 2  2  \bar{2} } + 2 \mae \Psi^0_2 +  ( \theta - \bar{\theta} + 2 \upsilon + \bar{\upsilon} )  \Psi^1_{ 1 3 } + \phi \bar{\Psi}^1_{ 1 3 } +  (  - \mu + 2 \bar{\mu} - \wynn )  \Psi^1_1 - \bar{\lambda} \bar{\Psi}^1_1 \\ + \wynn \Psi_1   +  (  - 2 \bar{\mae} + 4 \alpha - 4 \bar{\beta} )  \Psi^0_{ 1 3 } - \mae \Psi_2 - \mae \bar{\Psi}_2   +  ( \theta - \bar{\theta} + 2 \upsilon + \bar{\upsilon} )  \bar{\Psi}^2_{ 1 3 } + \phi \Psi^2_{ 1 3 } +  ( 2 \rho - \bar{\rho} - \yogh )  \bar{\Psi}^1_3 - \sigma \Psi^1_3 + \yogh \bar{\Psi}_3 \\ +  (  - 2 \bar{\xi} + \zeta )  \Psi_{ 0 3 }  +  ( \xi - 2 \bar{\zeta} )  \Psi^2_1 +  ( \bar{\eta} - 2 \bar{\psi} )  \bar{\Psi}_{ 1 4 } +  (  - 2 \eta + \psi )  \bar{\Psi}^2_3  +  ( \xi + \bar{\zeta} )  \bar{\Psi}^2_1 +  ( \eta + \psi )  \Psi^2_3
\end{multline}
\begin{multline}\label{eqB83}
 - \tilde{D} \Psi^0_2 - 2 D \Psi^0_3 + \mathring{\delta} \bar{\Psi}^2_3 + \mathring{\delta} \Psi^2_3 =   A _{ 0  \tilde{1}  0 } - A _{ \tilde{1}  1  \tilde{1} } + 2 \wynn \Psi^0_2 +  ( \bar{\omega} - \zeta )  \Psi^1_{ 1 3 } +  ( \omega - \bar{\zeta} )  \bar{\Psi}^1_{ 1 3 } - \nu \Psi^1_1 - \bar{\nu} \bar{\Psi}^1_1 + \wynn \Psi_2 + \wynn \bar{\Psi}_2 \\ +  ( \bar{\omega} + 2 \zeta )  \bar{\Psi}^2_{ 1 3 } +  ( \omega + 2 \bar{\zeta} )  \Psi^2_{ 1 3 } +  (  - 2 \bar{\chi} - \bar{\eta} )  \bar{\Psi}^0_4   +  (  - 2 \chi - \eta )  \Psi^0_4  +  ( 2 \pi - \bar{\tau} )  \bar{\Psi}^1_3 +  ( 2 \bar{\pi} - \tau )  \Psi^1_3 + \bar{\mae} \bar{\Psi}_3 + \mae \Psi_3 \\ +  ( 4 \epsilon + 4 \bar{\epsilon} + 2 \yogh )  \Psi^0_3 + \msampi \Psi^2_1 +  ( 2 \mqoppa + \mstigma - \theta - \bar{\theta} )  \bar{\Psi}^2_3 + \msampi \bar{\Psi}^2_1   +  ( 2 \mqoppa + \mstigma - \theta - \bar{\theta} )  \Psi^2_3
\end{multline}
\begin{multline}\label{eqB89}
 - \bar{\delta} \bar{\Psi}^1_3 + \delta \Psi^1_3 + \mathring{\delta} \bar{\Psi}^2_3 - \mathring{\delta} \Psi^2_3 =  - A _{ \tilde{1}  2  \bar{2} } + (  - 2 \mu + 2 \bar{\mu} )  \Psi^0_2 +  ( \bar{\xi} + \zeta )  \Psi^1_{ 1 3 } +  (  - \xi - \bar{\zeta} )  \bar{\Psi}^1_{ 1 3 } + 2 \lambda \Psi^0_{ 1 3 } + \wynn \Psi_2 - \wynn \bar{\Psi}_2 - 2 \bar{\lambda} \bar{\Psi}^0_{ 1 3 } \\ +  (  - 2 \bar{\xi} + 2 \zeta )  \bar{\Psi}^2_{ 1 3 } +  ( 2 \xi - 2 \bar{\zeta} )  \Psi^2_{ 1 3 }   +  ( \bar{\eta} - \bar{\psi} )  \bar{\Psi}^0_4 +  (  - \eta + \psi )  \Psi^0_4 +  ( 2 \bar{\mae} + 2 \bar{\beta} )  \bar{\Psi}^1_3 +  (  - 2 \mae - 2 \beta )  \Psi^1_3 - \bar{\mae} \bar{\Psi}_3 + \mae \Psi_3 \\ +  (  - 2 \rho + 2 \bar{\rho} )  \Psi^0_3 + \bar{\phi} \bar{\Psi}_{ 1 4 } +  (  - \theta - \bar{\theta} + 2 \upsilon + \bar{\upsilon} )  \bar{\Psi}^2_3   +  ( \theta + \bar{\theta} - \upsilon - 2 \bar{\upsilon} )  \Psi^2_3 - \phi \Psi_{ 1 4 }
\end{multline}
\begin{multline}\label{eqB91}
\mathring{\delta} \Psi_2 - \bar{\delta} \bar{\Psi}^2_{ 1 3 } + \tilde{D} \bar{\Psi}^2_1 =  A _{ 1  \tilde{1}  0 } + A _{ 2  \bar{2}  0 } - \nu \Psi^0_0 +  ( \mstigma - \upsilon )  \Psi^0_2 +  ( \mae + 2 \tau )  \bar{\Psi}^1_{ 1 3 } +  (  - \bar{\omega} - \zeta )  \Psi^1_1 + \omega \bar{\Psi}^1_1 +  ( \bar{\omega} - \bar{\xi} + 2 \zeta )  \Psi_1 + \msampi \Psi^0_1 \\ + \bar{\phi} \Psi^0_{ 1 3 }   +  ( \mstigma + \upsilon )  \Psi_2 +  ( \bar{\mae} - \alpha + \bar{\beta} - \bar{\tau} )  \bar{\Psi}^2_{ 1 3 } + \bar{\psi} \bar{\Psi}^1_3 - \eta \Psi^1_3 + 2 \eta \Psi_3 - \lambda \Psi_{ 0 3 } +  ( 2 \rho - \yogh )  \bar{\Psi}^2_3 +  ( \gamma + \bar{\gamma} - \bar{\mu} - \wynn )  \bar{\Psi}^2_1
\end{multline}
\begin{multline}\label{eqB98}
 - \tilde{D} \Psi^0_2 - \bar{\delta} \bar{\Psi}^1_3 + \mathring{\delta} \bar{\Psi}^2_3 =   - A _{ 0  \tilde{1}  0 } - A _{ 2  \tilde{1}  \bar{2} } +  ( 2 \bar{\mu} + \wynn )  \Psi^0_2 +  ( \bar{\omega} + \bar{\xi} )  \Psi^1_{ 1 3 } +  ( \omega - \bar{\zeta} )  \bar{\Psi}^1_{ 1 3 } - \nu \Psi^1_1 - \bar{\nu} \bar{\Psi}^1_1 + 2 \lambda \Psi^0_{ 1 3 } + \wynn \Psi_2 \\ +  ( \bar{\omega} - 2 \bar{\xi} + 2 \zeta )  \bar{\Psi}^2_{ 1 3 } + \omega \Psi^2_{ 1 3 }  - \bar{\psi} \bar{\Psi}^0_4 - \eta \Psi^0_4+  ( \bar{\mae} + 2 \bar{\beta} - \bar{\tau} )  \bar{\Psi}^1_3 +  (  - \mae - \tau )  \Psi^1_3 + \mae \Psi_3 +  (  - 2 \rho + \yogh )  \Psi^0_3 + \msampi \Psi^2_1 + \bar{\phi} \bar{\Psi}_{ 1 4 } \\ +  ( \mstigma - \theta - \bar{\theta} + 2 \upsilon )  \bar{\Psi}^2_3 + \msampi \bar{\Psi}^2_1   +  ( \mstigma - \upsilon )  \Psi^2_3
\end{multline}
\begin{multline}\label{eqB99}
\mathring{\delta} \Psi^0_4 + \tilde{D} \Psi^1_3 + 2 \bar{\delta} \Psi^0_3 =   - A _{ \tilde{1}  \tilde{1}  \bar{2} }  - 2 \nu \Psi^0_2 - \msampi \bar{\Psi}^1_{ 1 3 } - 2 \bar{\nu} \bar{\Psi}^0_{ 1 3 } + 2 \msampi \Psi^2_{ 1 3 } + 2 \bar{\phi} \bar{\Psi}^0_4 +  ( \mstigma - 3 \theta - \bar{\theta} + 2 \upsilon )  \Psi^0_4 - 2 \lambda \bar{\Psi}^1_3 \\ +  (  - 2 \gamma - 2 \bar{\mu} - \wynn )  \Psi^1_3 - \wynn \Psi_3   +  (  - 3 \bar{\mae} - 4 \alpha - 4 \bar{\beta} + 2 \bar{\tau} )  \Psi^0_3 - \mae \Psi_4 +  ( \bar{\omega} - 2 \bar{\xi} + 3 \zeta )  \bar{\Psi}^2_3 +  (  - 2 \bar{\omega} - 2 \bar{\xi} )  \Psi^2_3 +  (  - \omega + \bar{\zeta} )  \Psi_{ 1 4 }
\end{multline}
}

\bibliography{biblio}
\bibliographystyle{alpha}

\end{document}